\newtheorem{theorem}{Theorem}
\newtheorem{lemma}[theorem]{Lemma}
\newtheorem{observation}[theorem]{Observation}
\newtheorem{definition}[theorem]{Definition}
\newtheorem{claim}[theorem]{Claim}
\newtheorem{fact}[theorem]{Fact}
\newtheorem{remark}[theorem]{Remark}
\newcommand{\newreptheorem}[2]{\newtheorem*{rep@#1}{\rep@title}\newenvironment{rep#1}[1]{\def\rep@title{#2 \ref*{##1}}\begin{rep@#1}}{\end{rep@#1}}}
\newcommand{\prob}[2]{\mathop{\mathrm{Pr}}_{#1}[#2]}
\newcommand{\avg}[2]{\mathop{\textbf{E}}_{#1}\left[#2\right]}
\newcommand{\poly}{\mathop{\mathrm{poly}}}
\newcommand{\F}{\mathbb{F}}
\newcommand{\ceil}[1]{\ensuremath{\left\lceil #1 \right\rceil}}
\newcommand{\mc}[1]{\mathcal{#1}}
\newcommand{\Maj}{\mathrm{Maj}}
\newcommand{\AC}{\mathrm{AC}}
\newcommand{\Vars}{\mathrm{Vars}}
\newcommand{\comment}[2]{}
\newcommand{\nutan}[1]{\comment{N}{#1}}
\newcommand{\srikanth}[1]{\comment{S}{#1}}
\title{A Fixed-Depth Size-Hierarchy Theorem for $\AC^0[\oplus]$ via the Coin Problem}
\author{Nutan Limaye
\thanks{Indian Institute of Technology, Bombay. Email: \texttt{nutan@cse.iitb.ac.in}} 
\and
 Karteek Sreenivasaiah
 \thanks{Indian Institute of Technology Hyderabad. Email: \texttt{karteek@iith.ac.in}}
 \and
 Srikanth Srinivasan
 \thanks{Indian Institute of Technology, Bombay.  Email: \texttt{srikanth@math.iitb.ac.in}} 
 \and
 Utkarsh Tripathi 
 \thanks{Indian Institute of Technology, Bombay. Supported by the Ph.D. Scholarship of NBHM, DAE, Government of India.  Email: \texttt{utkarshtripathi.math@gmail.com}}
 \and
 S. Venkitesh
 \thanks{Indian Institute of Technology, Bombay.  Supported by the Senior Research Fellowship of HRDG, CSIR, Government of India.  Email: \texttt{venkitesh.mail@gmail.com}}
 }
\begin{document}

\maketitle
\thispagestyle{empty}

\begin{abstract}

In this work we prove the first \emph{Fixed-depth Size-Hierarchy Theorem} for uniform  $\AC^0[\oplus]$. In particular, we show that for any fixed $d$, the class $\mc{C}_{d,k}$ of functions that have uniform $\AC^0[\oplus]$ formulas of depth $d$ and size $n^k$ form an infinite hierarchy. We show this by exibiting the first class of \emph{explicit} functions where we have nearly (up to a polynomial factor) matching upper and lower bounds for the class of $\AC^0[\oplus]$ formulas.

The explicit functions are derived from the \emph{$\delta$-Coin Problem}, which is the computational problem of distinguishing between coins that are heads with probability $(1+\delta)/2$ or $(1-\delta)/2,$ where $\delta$ is a parameter that is going to $0$. We study the complexity of this problem and make progress on both upper bound and lower bound fronts. 
\begin{itemize}
\item \textbf{Upper bounds.} For any constant $d\geq 2$, we show that there are \emph{explicit} monotone $\AC^0$ formulas (i.e. made up of AND and OR gates only) solving the $\delta$-coin problem that have depth $d$, size $\exp(O(d(1/\delta)^{1/(d-1)}))$, and sample complexity (i.e. number of inputs) $\poly(1/\delta).$ This matches previous upper bounds of O'Donnell and Wimmer (ICALP 2007) and Amano (ICALP 2009) in terms of size (which is optimal) and improves the sample complexity  from $\exp(O(d(1/\delta)^{1/(d-1)}))$ to $\poly(1/\delta)$.
\item \textbf{Lower bounds.} We show that the above upper bounds are nearly tight (in terms of size) even for the significantly stronger model of $\AC^0[\oplus]$ formulas (which are also allowed NOT and Parity gates): formally, we show that any $\AC^0[\oplus]$ formula solving the $\delta$-coin problem must have size $\exp(\Omega(d(1/\delta)^{1/(d-1)})).$ This strengthens  a result of Shaltiel and Viola (SICOMP 2010), who prove a $\exp(\Omega((1/\delta)^{1/(d+2)}))$ lower bound for $\AC^0[\oplus]$, and a result of Cohen, Ganor and Raz (APPROX-RANDOM 2014), who show a $\exp(\Omega((1/\delta)^{1/(d-1)}))$ lower bound for the smaller class $\AC^0$.
\end{itemize}

The upper bound is a derandomization involving a use of Janson's inequality (from probabilistic combinatorics) and classical combinatorial designs; as far as we know, this is the first such use of Janson's inequality. For the lower bound, we prove an optimal (up to a constant factor) degree lower bound for multivariate polynomials over $\F_2$ solving the $\delta$-coin problem, which may be of independent interest.
\end{abstract}

\newpage

\section{Size-Hierarchy theorems for $\AC^0[\oplus]$}
\label{sec:sht}
\setcounter{page}{1}

Given any natural computational resource, an intuitive conjecture one might make is that access to more of that resource results in more computational power. \emph{Hierarchy theorems} make this intuition precise in various settings. Classical theorems in Computational complexity theory such as the time and space hierarchy theorems~\cite{Time-hie,space-hie,NTime-hie,BPP-hie} show that Turing Machine-based computational models do become strictly more powerful with more access to time or space respectively.

The analogous questions in the setting of Boolean circuit complexity deal with the complexity measures  of depth and size of Boolean circuits. Both of these have been intensively studied in the case of $\AC^0$ circuits and by now, we have near-optimal \emph{Depth} and \emph{Size-hierarchy} theorems for $\AC^0$ circuits~\cite{Hastad,Rossman,Amano-SHT,HRST}.  Our focus in this paper is on size-hierarchy theorems for $\AC^0[\oplus]$ circuits. %

Essentially, a size-hierarchy theorem for a class of circuits says that there are Boolean functions $f:\{0,1\}^n\rightarrow \{0,1\}$  that can be computed by circuits of some size $s = s(n)$ but not by circuits of size significantly smaller than $s$, e.g. $\sqrt{s}$. However, stated in this way, such a statement is trivial to prove, since we can easily show by counting arguments that there are more functions computed by circuits of size $s$ than by circuits of size $\sqrt{s}$ and hence there must be a function that witnesses this separation. As is standard in the setting of circuits, what is interesting is an \emph{explicit} separation. (Equivalently, we could consider the question of separating uniform versions of these circuit classes.) 

Strong results in this direction are known in the setting of $\AC^0$ circuits (i.e. constant-depth Boolean circuits made up of AND, OR and NOT gates).

\paragraph{Size hierarchy theorem for $\AC^0$.} In order to prove a size-hierarchy theorem for $\AC^0,$ we need an explicit function that has circuits of size $s$ but no circuits of size less than $\sqrt{s}.$ If we fix the depth of the circuits under consideration, a result of this form follows immediately from the \emph{tight} exponential $\AC^0$  circuit lower bound of H\r{a}stad~\cite{Hastad} from the 80s. H\r{a}stad shows that any depth-$d$ $\AC^0$ circuit for the Parity function on $n$ variables must have size $\exp(\Omega(n^{1/(d-1)}));$ further, this result is tight, as demonstrated by a folklore depth-$d$ $\AC^0$ upper bound of $\exp(O(n^{1/(d-1)}))$. Using both the lower and upper bounds for Parity, we get a separation between circuits of size $s_0 = \exp(O(n^{1/(d-1)}))$ and $s_0^{\varepsilon}$ for some fixed $\varepsilon > 0.$ The same separation also holds between $s$ and $s^\varepsilon$ for any $s$ such that $s\leq s_0,$ since we can always take the Parity function on some $m\leq n$ variables so that the above argument goes through. We thus get a \emph{Fixed-depth Size-Hierarchy theorem} for $\AC^0$ for any $s(n)\leq \exp(n^{o(1)}).$

Even stronger results are known for $\AC^0.$ Note that the above results do not separate, for example, size $s$ circuits of depth $2$ from size $s^{\varepsilon}$ circuits of depth $3$. However, recent results of Rossman~\cite{Rossman-clique} and Amano~\cite{Amano-SHT} imply the existence of explicit functions\footnote{The explicit functions are the $k$-clique problem and variants.} that have $\AC^0$ circuits of depth $2$ and size $n^k$ (for any constant $k$) but not $\AC^0$ circuits of \emph{any} constant depth and size $n^{k-\varepsilon}.$

\paragraph{$\mathbf{\AC^0[\oplus]}$ setting.} Our aim is to prove size-hierarchy theorems for $\AC^0[\oplus]$ circuits (i.e. constant-depth Boolean circuits made up of AND, OR, NOT and $\oplus$ gates).\footnote{Our results also extend straightforwardly to $\AC^0[\mathrm{MOD}_p]$ gates for any constant prime $p$ (here, a $\mathrm{MOD}_p$ gate accepts if the sum of its input bits is non-zero modulo $p$).} As for $\AC^0,$ we have known exponential lower bounds for this circuit class from the 80s, this time using the results of Razborov~\cite{Razborov} and Smolensky~\cite{Smolensky}. Unfortunately, however, most of these circuit lower bounds are \emph{not} tight. For instance, we know that the Majority function on $n$ variables does not have $\AC^0[\oplus]$ circuits of depth $d$ and size $\exp(\Omega(n^{1/2(d-1)})),$ but the best upper bounds are worse than $\exp(O(n^{1/(d-1)}))$ (in fact, the best known upper bound~\cite{Maj-ubd} is an $\AC^0$ circuit of size $\exp(O(n^{1/(d-1)}) (\log n)^{1-1/(d-1)})$).\nutan{Is this correct?}\footnote{A similar fact is also true for the $\mathrm{MOD}_p$ functions, for $p$ an odd prime.} As a direct consequence of this fact, we do not even have \emph{fixed-depth} size-hierarchy theorems of the above form for $\AC^0[\oplus]$: the known results only yield a separation between circuits of size $s$ and circuits of size $\exp(O(\sqrt{\log s}))$, which is a considerably worse result.

In this work we present the first fixed-depth size-hierarchy theorem for $\AC^0[\oplus]$ circuits. Formally, we prove the following.

\begin{theorem}[Fixed-depth size-hierarchy theorem]
\label{thm:size-hie}
There is an absolute constant $\varepsilon_0 \in (0,1)$ such that the following holds. For any fixed depth $d\geq 2$, and for infinitely many $n$ and any $s = s(n) = \exp(n^{o(1)}),$ there is an explicit monotone depth-$d$ $\AC^0$ formula $F_n$ on $n$ variables of size at most $s$ such that any $\AC^0[\oplus]$ formula computing the same function has size at least $s^{\varepsilon_0}.$

In particular, if $\mc{C}_{d,k}$ denotes the family of languages that have uniform $\AC^0[\oplus]$ formulas of depth $d$ and size $n^k$, then the hierarchy $\mc{C}_{d,1}\subseteq \mc{C}_{d,2}\cdots$ is infinite.
\end{theorem}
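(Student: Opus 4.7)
The plan is to derive Theorem~\ref{thm:size-hie} directly from the matching upper and lower bounds for the $\delta$-coin problem announced in the abstract; the function $F_n$ will be (a padded version of) the explicit depth-$d$ monotone $\AC^0$ formula produced by the upper-bound construction for a carefully chosen $\delta$.

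\textbf{Parameter setting and construction.} Given a target size $s = s(n) \leq \exp(n^{o(1)})$, I would choose $\delta = \delta(n,s,d)$ so that the upper-bound size $\exp(O(d(1/\delta)^{1/(d-1)}))$ equals $s$, which gives $1/\delta = \Theta((\log(s)/d)^{d-1})$. Since $\log s \leq n^{o(1)}$, the sample complexity $m = \poly(1/\delta) = \poly(\log s)$ is $o(n)$, so the upper-bound formula uses at most $n$ input variables. Letting $F : \{0,1\}^m \to \{0,1\}$ be the explicit monotone depth-$d$ $\AC^0$ formula given by the upper bound, which distinguishes the coin distributions with head-biases $(1+\delta)/2$ and $(1-\delta)/2$ with constant advantage, define $F_n(x_1,\dots,x_n) := F(x_1,\dots,x_m)$ by padding with $n-m$ irrelevant inputs. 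Then $F_n$ is explicit, monotone, and computed by a depth-$d$ $\AC^0$ formula of size at most $s$.

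\textbf{Lower bound and hierarchy.} Suppose some depth-$d$ $\AC^0[\oplus]$ formula $G$ computes $F_n$. Restricting the last $n-m$ variables to $0$ gives a depth-$d$ $\AC^0[\oplus]$ formula $G'$ of size at most $|G|$ that computes $F$, and hence solves the $\delta$-coin problem, so the coin-problem lower bound yields $|G| \geq |G'| \geq \exp(\Omega(d(1/\delta)^{1/(d-1)})) \geq s^{\varepsilon_0}$, where $\varepsilon_0 \in (0,1)$ is essentially the ratio of the two absolute constants in the exponents of the coin-problem upper and lower bounds, and is therefore independent of $d$. For the corollary, starting from $k_1 = 1$ and setting $k_{i+1} := \lceil k_i/\varepsilon_0 \rceil + 1$, applying the theorem with $s(n) = n^{k_{i+1}}$ produces, for each $i$, an explicit language in $\mc{C}_{d,k_{i+1}} \setminus \mc{C}_{d,k_i}$, which yields the desired infinite strict chain $\mc{C}_{d,k_1} \subsetneq \mc{C}_{d,k_2} \subsetneq \cdots$.

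The substantive work, and hence the main obstacle, lies not in this reduction but in establishing the two coin-problem bounds with tightly matching constants in the exponent: the polynomial-sample-complexity upper bound via a derandomized argument using Janson's inequality and combinatorial designs, and the $\AC^0[\oplus]$ lower bound via an optimal $\F_2$-polynomial degree lower bound for the coin problem. Those two technical results occupy the remainder of the paper, after which Theorem~\ref{thm:size-hie} follows immediately from the reduction sketched above.
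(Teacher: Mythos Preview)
Your proposal is correct and is exactly the reduction the paper has in mind: the paper simply states that Theorems~\ref{thm:main-intro} and~\ref{thm:lb-intro} ``immediately imply'' Theorem~\ref{thm:size-hie}, and the padding argument you spell out (choose $\delta$ so that the upper-bound size is $s$, observe that the sample complexity $(1/\delta)^{2^{O(d)}} = (\log s)^{O_d(1)} = n^{o(1)} \le n$, pad to $n$ variables, then invoke the lower bound on any $\AC^0[\oplus]$ formula computing the resulting function) is the natural way to make that implication precise. Your derivation of the infinite hierarchy from the size separation is likewise the intended one.
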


We can also get a similar result for $\AC^0[\oplus]$ \emph{circuits} of fixed depth $d$ by using the fact that circuits of depth $d$ and size $s_1$ can be converted to formulas of depth $d$ and size $s_1^{d}.$ Using this idea, we can get a separation between circuits (in fact formulas) of depth $d$ and size $s$ and circuits of depth $d$ and size $s^{\varepsilon_0/d}.$

To get this (almost) optimal fixed-depth size-hierarchy theorem we design an explicit function $f$ and obtain tight upper and lower bounds for it for each fixed depth $d$.  The explicit function is based on the \emph{Coin Problem}, which we define below. 

\section{The Coin Problem}

The Coin Problem is the following natural computational problem. Given a two-sided coin that is heads with probability either $(1+\delta)/2$ or $(1-\delta)/2$, decide which of these is the case. The algorithm is allowed many independent tosses of the coin and has to accept in the former case with probability at least $0.9$ (say) and accept in the latter case with probability at most $0.1.$ The formal statement of the problem is given below.

\begin{definition}[The $\delta$-Coin Problem] 
\label{def:coinproblem}
For any $\alpha\in [0,1]$ and integer $N\geq 1$, let $D_{\alpha}^N$ be the product distribution over $\{0,1\}^N$ obtained by setting each bit to $1$ independently with probability $\alpha.$

Let $\delta \in (0,1)$ be a parameter. Given an $N\in \mathbb{N},$ we define the probability distributions $\mu_{\delta,0}^N$ and $\mu_{\delta,1}^N$ to be the distributions $D^N_{(1-\delta)/2}$ and $D^N_{(1+\delta)/2}$ respectively. We omit the $\delta$ in the subscript when it is clear from context.

Given a function $g: \{0,1\}^N \rightarrow \{0,1\}$, we say that \emph{$g$ solves the $\delta$-coin problem with error $\varepsilon$} if 
\begin{equation}
\label{eq:cpdefn}
\prob{\bm{x}\sim \mu_0^N}{g(\bm{x}) = 1} \leq \varepsilon \text{  and   } \prob{\bm{x}\sim \mu_1^N}{g(\bm{x}) = 1} \geq 1-\varepsilon.
\end{equation}
In the case that $g$ solves the coin problem with error $0.1,$ we simply say that $g$ solves the $\delta$-coin problem (and omit mention of the error).

We say that the \emph{sample complexity} of $g$ is $N$.
\end{definition}

We think of $\delta$ as a parameter that is going to $0$. We are interested in both the sample complexity and \emph{computational complexity} of functions solving the coin problem. Both these complexities are measured as a function of the parameter $\delta.$

The problem is folklore, and has also been studied (implicitly and explicitly) in many papers in the Computational complexity literature~\cite{ABO,OW,Amano,SV,BV,Viola,Steinberger,CGR,LV,RossmanS}. It was formally introduced in the work of Brody and Verbin~\cite{BV}, who studied it with a view to devising pseudorandom generators for Read-once Branching Programs (ROBPs).

It is a standard fact that $\Omega(1/\delta^2)$ samples are necessary to solve the $\delta$-coin problem (irrespective of the computational complexity of the underlying function). Further, the algorithm that takes $O(1/\delta^2)$ many independent samples and accepts if and only if the majority of the coin tosses are heads, does indeed solve the $\delta$-coin problem. We call this the ``trivial solution'' to the coin problem.

It is not clear, however, if this is the most computationally ``simple'' method of solving the coin problem. Specifically, one can ask if the $\delta$-coin problem can be solved in computational models that cannot compute the Boolean Majority function on $O(1/\delta^2)$ many input bits. (Recall that the Majority function on $n$ bits accepts inputs of Hamming weight greater than $n/2$ and rejects other inputs.)

Such questions have received quite a bit of attention in the computational complexity literature. Our focus in this paper is on the complexity of this problem in the setting of $\AC^0$  and $\AC^0[\oplus]$ circuits.

Perhaps surprisingly, the Boolean circuit complexity of the coin problem in the above models is \emph{not} the same as the circuit complexity of the Boolean Majority function. We describe below some of the interesting upper as well as lower bounds known for the coin problem in the setting of constant-depth Boolean circuits. 

\paragraph{Known upper bounds.} It is implicit in the results of O'Donnell and Wimmer~\cite{OW} and Amano~\cite{Amano} (and explicitly noted in the paper of Cohen, Ganor and Raz~\cite{CGR}) that the complexity of the coin problem is closely related to the complexity of \emph{Promise} and \emph{Approximate} variants of the Majority function. Here, a \emph{promise majority} is a function that accepts inputs of relative Hamming weight at least $(1+\delta)/2$ and rejects inputs of relative Hamming weight at most $(1-\delta)/2;$ and an \emph{approximate majority} is a function that agrees with the Majority function on $90\%$ of its inputs.\footnote{Unfortunately, both these variants of the Majority function go by the name of ``approximate majority'' in the literature.}

O'Donnell and Wimmer~\cite{OW} and Amano~\cite{Amano} show that the $\AC^0$ circuit complexity of some approximate majorities is superpolynomially \emph{smaller} than the complexity of the Majority function. More specifically, the results in~\cite{OW,Amano} imply that there are approximate majorities that are computed by monotone $\AC^0$ \emph{formulas} of depth $d$ and size $\exp(O(dn^{1/2(d-1)}))$, while a well-known result of H\r{a}stad~\cite{Hastad} implies that any $\AC^0$ circuit of depth $d$ for computing the Majority function must have size $\exp(\Omega(n^{1/(d-1)})).$ For example, when $d=2$, there are approximate majorities that have formulas of size $\exp(O(\sqrt{n}))$ while any circuit for the Majority function must have size $\exp(\Omega(n)).$ These upper bounds were slightly improved to $\AC^0$ \emph{circuits} of size $\exp(O(n^{1/2(d-1)}))$ in a recent result of Rossman and Srinivasan~\cite{RossmanS}.

The key step in the results of~\cite{OW,Amano} is to show that the $\delta$-coin problem can be solved by explicit read-once\footnote{i.e. each variable in the formula appears exactly once} monotone $\AC^0$ formulas of depth $d$ and size $\exp(O(d(1/\delta)^{1/(d-1)})).$ (This is improved to circuits of size $\exp(O((1/\delta)^{1/(d-1)}))$ in~\cite{RossmanS}. However, these circuits are \emph{not} explicit.) Compare this to the trivial solution (i.e. computing Majority on $\Theta(1/\delta^2)$ inputs) that requires $\AC^0$ circuit size $\exp(\Omega((1/\delta)^{2/(d-1)})),$ which is superpolynomially worse.

\paragraph{Known lower bounds.} Lower bounds for the coin problem have also
been subject to a good deal of investigation. Shaltiel and
Viola~\cite{SV} show that if the
$\delta$-coin problem can be solved by a circuit $C$ of size
$s$ and depth $d$, then there is a circuit $C'$ of size
$\poly(s)$ and depth $d+3$ that computes the Majority function on
$n=\Omega(1/\delta)$ inputs. Using H\r{a}stad's lower bound for
Majority~\cite{Hastad}, this implies that any depth-$d$ $\AC^0$ circuit $C$ solving the $\delta$-coin problem must have
size $\exp(\Omega( (1/\delta)^{1/(d+2)} ))).$  Using Razborov and Smolensky's~\cite{Razborov,Smolensky93} lower bounds, this also yields the same lower bound for the more powerful circuit class $\AC^0[\oplus]$.\footnote{Using~\cite{SV} as a black box will yield a weaker lower bound of $\exp(\Omega((1/\delta)^{1/2(d+2)})).$ However, slightly modifying the proof of Shaltiel and Viola, we can obtain circuits of depth $d+3$ that \emph{approximate} the Majority function on $1/\delta^2$ inputs instead of computing the Majority function on $(1/\delta)$ inputs \emph{exactly}. This yields the stronger lower bound given here.} 

In a later result, Aaronson~\cite{Aaronson} observed that a stronger lower bound can be deduced for $\AC^0$ by constructing a circuit $C''$ of depth $d+2$ that only distinguishes inputs of Hamming weight $n/2$ from inputs of Hamming weight $(n/2)-1$. By H\r{a}stad's results, this suffices to recover a lower bound of $\exp(\Omega( (1/\delta)^{1/(d+1)} )))$ for $\AC^0$, but does not imply anything for $\AC^0[\oplus]$ (since the parity function can distinguish between inputs of weight $n/2$ and $(n/2)-1$).

Note that these lower bounds for the $\delta$-coin problem, while exponential, do not meet the upper bounds described above. In fact, they are quasipolynomially weaker.

Lower bounds for the closely related promise and approximate
majorities were proved by Viola~\cite{Viola} and O'Donnell and
Wimmer~\cite{OW} respectively. Viola~\cite{Viola} shows that any $\poly(n)$-sized
depth-$d$ $\AC^0$ circuit cannot compute a promise majority for
$\delta = o(1/(\log n)^{d-2})$. O'Donnell and Wimmer~\cite{OW} show
that any depth-$d$ $\AC^0$ circuit that approximates the Majority
function on 90\% of its inputs must have size
$\exp(\Omega(n^{1/2(d-1)})$. Using the connection between the coin
problem and approximate majority, it follows that any \emph{monotone}
depth-$d$ $\AC^0$ circuit solving the $\delta$-coin problem must have
size $\exp(\Omega((1/\delta)^{1/(d-1)}))$ matching the upper bounds
above. The lower bound of~\cite{OW} is based on the Fourier-analytic
notion of the \emph{Total Influence} of a Boolean function
(see~\cite[Chapter 2]{ODbook}) and standard upper bounds on the total
influence of a Boolean function computed by a small $\AC^0$
circuit~\cite{LMN,Boppana97}.

Using more Fourier analytic ideas~\cite{LMN}, Cohen, Ganor and Raz~\cite{CGR} proved near-optimal $\AC^0$ circuit lower bounds for the $\delta$-coin problem (with no assumptions on monotonicity). They show that any depth-$d$ $\AC^0$ circuit for the $\delta$-coin problem must have size $\exp(\Omega((1/\delta)^{1/(d-1)}))$, nearly matching the upper bound constructions above.

\paragraph{The Coin Problem and Size Hierarchy Theorems for $\AC^0[\oplus]$.} Recall that to prove size-hierarchy theorems for $\AC^0[\oplus],$ we need to come up with explicit functions for which we can prove near-tight lower bounds. One class of functions for which the Razborov-Smolensky proof technique does yield such a lower bound is the class of approximate majorities defined above. Unfortunately, however, this does not yield an explicit separation, since the functions constructed in~\cite{OW,Amano,RossmanS} are randomized and not explicit. These circuits are obtained by starting with explicit large monotone read-once formulas for the coin problem from~\cite{OW,Amano} and replacing each variable with one of the $n$ variables of the approximate majority; one can show that this probabilistic procedure produces an approximate majority with high probability. However, explicitness is then lost.

Our starting point is that instead of working with approximate majorities, we can directly work with the explicit formulas solving the coin problem. As shown in~\cite{OW,Amano}, there are explicit formulas of size $\exp(O(d(1/\delta)^{1/(d-1)}))$ solving the $\delta$-coin problem. Since these yield optimal-sized circuits for computing approximate majorities, it follows that the functions computed by these formulas cannot be computed by much smaller circuits. 

While this is true, nevertheless the explicit formulas of~\cite{OW,Amano} do not yield anything non-trivial by way of size-hierarchy theorems. This is because, as noted above, these formulas are \emph{read-once}. Hence, showing that the underlying functions cannot be computed in smaller size would prove a separation between size $s = O(n)$ circuits and circuits of size much smaller than $n$, which is trivial.

The way we circumvent this obstacle is to construct explicit circuits solving the $\delta$-coin problem with optimal size and much smaller sample complexity. In fact, we are able to bring down the sample complexity from exponential to polynomial in $(1/\delta)$. This allows us to prove a size hierarchy theorem for all $s$ up to $\exp(n^{o(1)})$.

\subsection{Our results for the Coin Problem}

We make progress on the complexity of the coin problem on both the upper bound and lower bound fronts. 

\paragraph{Upper bounds.} Note that the upper bound results known so far
only yield circuit size and depth upper bounds for the coin problem,
and do not say anything about the sample complexity of the
solution. In fact, the explicit $\AC^0$ formulas of O'Donnell and
Wimmer~\cite{OW} and Amano~\cite{Amano} are \emph{read-once} in the
sense that each input corresponds to a distinct input variable. Hence,
these results imply explicit formulas of size
$s = \exp(O(d(1/\delta)^{1/(d-1)}))$ and sample size $\Theta(s)$ for
the $\delta$-coin problem. (Recall that, in contrast, the trivial
solution has sample complexity only $O(1/\delta^2).$) The sample
complexity of these formulas can be reduced to $O(1/\delta^2)$ by a
probabilistic argument (as essentially shown by~\cite{OW,Amano}; more
on this below), but then we no longer have explicit formulas. The
circuit construction of Rossman and Srinivasan~\cite{RossmanS} can be
seen to use $O(1/\delta^2)$ samples, but is again not explicit.

We show that the number of samples can be reduced to $\poly(1/\delta)$ (where the degree of the polynomial depends on the depth $d$ of the circuit), which is the in same ballpark as the trivial solution, while retaining both the size and the explicitness of the formulas. The result is as follows.

\begin{theorem}[Explicit formulas for the coin problem with small sample complexity]
\label{thm:main-intro}
Let $\delta \in (0,1)$ be a parameter and $d\geq 2$ any fixed constant. There is an explicit depth-$d$ monotone $\AC^0$ formula $\Gamma_d$ that solves the $\delta$-coin problem, where $\Gamma_d$ has size $\exp(O(d(1/\delta)^{1/(d-1)}))$ and sample complexity $(1/\delta)^{2^{O(d)}}.$ (All the constants implicit in the $O(\cdot)$ notation are absolute constants.)
\end{theorem}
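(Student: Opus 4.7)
The plan is to start from the explicit read-once depth-$d$ formula $F_d$ of Amano and O'Donnell--Wimmer, which solves the $\delta$-coin problem with $S = \exp(O(d(1/\delta)^{1/(d-1)}))$ distinct leaves arranged in an alternating tree of AND and OR gates with carefully chosen fan-ins. I would then identify these $S$ leaves with variables from a much smaller universe of size $N = (1/\delta)^{2^{O(d)}}$ via an explicit map $\pi:[S]\to[N]$, in such a way that the composed formula $F_d\circ\pi$ still satisfies the two-sided acceptance gap between $\mu_0^N$ and $\mu_1^N$.

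The map $\pi$ is built from classical Reed--Solomon style combinatorial designs: the $q^{d'}$ polynomials of degree $<d'$ over $\F_q$ have graphs that form $q^{d'}$ subsets of $[q]\times[q]$, each of size $q$, with any two sharing at most $d'-1$ points. Applied recursively at each level of the formula, this gives almost-disjoint blocks of leaves for each subformula, so that the children of a common gate depend on nearly disjoint subsets of the global sample set, while each layer inflates the universe only polynomially. The composed formula $F_d\circ\pi$ remains monotone since $\pi$ only relabels positive literals.

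Correctness is shown by traversing $F_d$ bottom-up, alternating Janson's inequality at OR layers and the FKG inequality at AND layers. At an OR layer with children $A_1,\ldots,A_T$, Janson gives
\[
\Pr\!\left[\bigcap_i \overline{A_i}\right] \le \exp\!\left(-\mu + \Delta/2\right),
\]
where $\mu$ is the read-once sum of the children's acceptance probabilities and $\Delta \lesssim T^2 p^{\,2w-\ell}$ is controlled by the pairwise-overlap parameter $\ell$ of the design at that layer. Tuning $\ell$ so that $\Delta \ll \mu$ recovers the read-once sharpening up to a constant factor. At an AND layer, monotonicity lets FKG lower-bound $\Pr[\bigcap_k \text{child}_k=1]$ by the product of the children's acceptance probabilities (transferring the ``accept'' direction), while the ``reject'' direction is handled by a single-subformula reject bound together with a union bound on the monomials at the bottom. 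Iterating through the $d$ levels and absorbing a constant blow-up into the fan-ins keeps the two-sided error below $0.1$.

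The main obstacle is composing the design parameters across the $d$ levels. The universe at layer $i+1$ must be large enough to host a design that provides enough almost-disjoint blocks to meet the Janson condition $\Delta \ll \mu$ at layer $i+1$, which in turn depends polynomially on the universe size at layer $i$. Chaining this polynomial blow-up over $d$ layers is what yields the double-exponential $N = (1/\delta)^{2^{O(d)}}$ in the sample complexity. Once the nested parameter bookkeeping is carried out, explicitness is immediate from the polynomial-evaluation design, and the final formula size matches the read-once template at $\exp(O(d(1/\delta)^{1/(d-1)}))$.
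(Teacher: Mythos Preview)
Your overall architecture matches the paper closely: start from the Amano/O'Donnell--Wimmer read-once formula $F_d$, replace the leaves via recursively nested Reed--Solomon (Nisan--Wigderson) designs so that sibling subformulas are nearly disjoint, and control the resulting correlations with Janson's inequality. The recursive polynomial blow-up of the universe yielding $(1/\delta)^{2^{O(d)}}$ is exactly how the paper gets its sample complexity, and the explicitness argument is the same.

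Where your proposal has a real gap is the layer-by-layer analysis. You treat OR and AND layers asymmetrically (Janson at OR, FKG plus a ``union bound on the monomials'' at AND), and you frame the goal at each layer as an ``accept direction'' versus a ``reject direction''. That is not the structure of the Amano analysis that you are trying to derandomize. At every intermediate level $i<d$ one tracks the \emph{rare-event} probability $p_i^{(b)}$ for \emph{both} $b\in\{0,1\}$ (acceptance if the layer is AND, rejection if OR); both are $\Theta(2^{-m})$, and the entire argument hinges on the delicate ratio $p_i^{(1)}/p_i^{(0)} = 1+\Theta(m^i\delta)$. Preserving that ratio through the derandomization requires a \emph{two-sided multiplicative} approximation $p_\Gamma^{(b)}\in[p_i^{(b)}(1-o(1)),\,p_i^{(b)}(1+o(1))]$ at every layer. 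FKG alone gives only the lower bound at an AND layer, and a union bound is additive and far too coarse to control the upper bound to within a $(1+o(1))$ factor. The paper's fix is to observe that Janson's inequality, via De~Morgan, applies symmetrically to AND layers: for monotone $C_i$ with $\Pr[C_i=0]\le 1/2$,
\[
\prod_i \Pr[C_i=1]\ \le\ \Pr\Big[\bigwedge_i C_i=1\Big]\ \le\ \Big(\prod_i \Pr[C_i=1]\Big)\cdot \exp(2\Delta),
\]
with $\Delta=\sum_{i<j:\,i\sim j}\Pr[C_i=0\wedge C_j=0]$. This is the missing piece that makes the induction go through uniformly.

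A second, more technical point: at levels $i\ge 3$, the children $\Gamma^j,\Gamma^k$ are no longer conjunctions of literals, so you cannot bound $\Pr[\Gamma^j=1-\beta\ \wedge\ \Gamma^k=1-\beta]$ by a closed-form like $p^{2w-\ell}$. The paper handles this by deleting from $\Gamma^k$ the (few, thanks to the design's small-load property) depth-$(i{-}2)$ subformulas that touch $\mathrm{Vars}(\Gamma^j)$, obtaining an independent $\Phi^k$, and then using the induction hypothesis and Janson again on $\Phi^k$. Your proposal does not address this step, and it is where the design's additional properties (bounded element-load and bounded number of pairs at each intersection size, not just a cap on the maximum overlap) are actually used.
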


\noindent
{\bf Approximate majority and the coin problem.} This result may be interpreted as a ``partial derandomization'' of the approximate majority construction of~\cite{OW,Amano} in the following sense. It is implicit in~\cite{OW,Amano} that an approximate majority on $n$ variables can be obtained by starting with a \emph{monotone} circuit $C$ solving the $\delta$-coin problem for $\delta = \Theta(1/\sqrt{n})$, and replacing each input of $C$ with a random input among the $n$ input bits on which we want an approximate majority. While, as noted above, the coin-problem-solving circuits of~\cite{OW,Amano} have exponential sample-complexity, our circuits only have polynomial sample-complexity, leading to a much more randomness-efficient way of constructing such an approximate majority. 

Indeed, this feature of our construction is crucial for proving the Size-Hierarchy theorem for $\AC^0[\oplus]$ circuits (Theorem~\ref{thm:size-hie}).

\paragraph{Lower bounds.} As noted above, Shaltiel and Viola~\cite{SV} 
prove that any $\AC^0[\oplus]$ circuit of depth $d$ solving the $\delta$-coin problem must have size at least $\exp(\Omega((1/\delta)^{1/(d+2)})).$ For the weaker class of $\AC^0$ circuits, Cohen, Ganor and Raz~\cite{CGR} proved an optimal bound of $\exp(\Omega((1/\delta)^{1/(d-1)}))$.
We are also able to strengthen these incomparable results by proving an optimal lower bound for $\AC^0[\oplus]$ circuits. More formally, we prove the following.

\begin{theorem}[Lower bounds for the coin problem]
\label{thm:lb-intro}
Say $g$ is a Boolean function solving the $\delta$-coin problem, then any $\AC^0[\oplus]$ formula of depth $d$ for $g$ must have size $\exp(\Omega(d(1/\delta)^{1/(d-1)}))$. (The $\Omega(\cdot )$ hides an absolute constant.)
\end{theorem}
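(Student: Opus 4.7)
My plan is to deduce the formula lower bound from a polynomial degree lower bound for $\F_2$-polynomials solving the $\delta$-coin problem, via the Razborov--Smolensky approximation method. Specifically, I would prove two claims and combine them.

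\textbf{(A) Polynomial approximation of $\AC^0[\oplus]$ formulas.} Any $\AC^0[\oplus]$ formula of depth $d$ and size $s$ is approximated by a distribution over $\F_2$-polynomials of degree $R = O(\log s / d)^{d-1}$, in the sense that for every input $x$ the polynomial agrees with the formula except with probability at most $1/100$. This follows the standard Razborov--Smolensky recipe: at each of the at-most-$(d-1)$ levels of AND/OR gates, replace each gate by a random $\F_2$-polynomial of degree $\ell_i$ with one-sided error $2^{-\ell_i}$; the $\oplus$ and NOT gates pass through without increasing degree. The $\ell_i$ are chosen so that the per-level error contributions $s_i \cdot 2^{-\ell_i}$ are balanced (where $s_i$ is the number of AND/OR gates at level $i$), and then AM--GM on $\prod_i \ell_i$ under the constraint $\sum_i s_i \leq s$ (together with a standard rebalancing so that fan-ins at each level are $\approx s^{1/d}$) yields the stated bound on $R$. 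Averaging over $\mu_0^N$ and $\mu_1^N$ and taking a union bound extracts a single $\F_2$-polynomial $p$ of degree at most $R$ that itself solves the $\delta$-coin problem with error bounded away from $1/2$.

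\textbf{(B) Degree lower bound for $\F_2$-polynomials solving the coin problem.} Any $\F_2$-polynomial $p : \{0,1\}^N \to \{0,1\}$ solving the $\delta$-coin problem with any constant error $< 1/2$ has $\F_2$-degree at least $\Omega(1/\delta)$. My approach: symmetrize $p$ by averaging over random coordinate permutations; since each $\mu_b^N$ is exchangeable, this preserves $\Pr_{\mu_b^N}[p(X) = 1]$. The symmetrized object can be analyzed through the univariate real polynomial $f(\alpha) := \mathbb{E}_{X \sim \mu_\alpha^N}[p(X)]$, which satisfies $f((1-\delta)/2) \leq 1/3$ and $f((1+\delta)/2) \geq 2/3$, so $f$ changes by a constant over an interval of length $\delta$. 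The key technical step is to use the $\F_2$-structure to upper bound $|f'(\alpha)|$ on this interval in terms of $\deg_{\F_2}(p)$: the discrete derivatives of an $\F_2$-polynomial of degree $r$ are themselves $\F_2$-polynomials of degree at most $r - 1$, which bounds the total influence of $p$ under $\mu_\alpha^N$; the derivative $f'(\alpha)$ equals an expected sensitivity quantity and is thus controlled by $r$. Requiring $|f'| = \Omega(1/\delta)$ somewhere in the interval then forces $r \geq \Omega(1/\delta)$.

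Combining (A) and (B) gives $(O(\log s / d))^{d-1} \geq \Omega(1/\delta)$, hence $\log s \geq \Omega(d \cdot (1/\delta)^{1/(d-1)})$, as required. The main obstacle is (B). The difficulty is that the $\F_2$-degree of $p$ is genuinely different from the Fourier/real-multilinear degree (for example, $X_1 \oplus \cdots \oplus X_n$ has $\F_2$-degree $1$ but Fourier degree $n$), so standard real-analytic tools (Markov--Bernstein inequalities, Chebyshev polynomial arguments) cannot be applied directly to the real polynomial representation of $p$. The argument must exploit the discrete $\F_2$-structure---via derivative-based influence arguments under biased product measures, or through random restrictions that reduce $p$ to a polynomial on a low-dimensional residual problem. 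Achieving the sharp $\Omega(1/\delta)$ bound (rather than a weaker $\Omega(\sqrt{1/\delta})$), which is what makes the whole lower bound tight, is the chief technical challenge.
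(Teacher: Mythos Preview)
Your reduction (A) is exactly what the paper does: apply the Razborov--Smolensky polynomial approximation (in the formula-refined form of Rossman--Srinivasan) to get a degree-$O((\log s)/d)^{d-1}$ polynomial over $\F_2$ solving the $\delta$-coin problem with small constant error, then invoke a degree lower bound. So the architecture is right; the entire content is in (B).

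Your proposed mechanism for (B), however, does not work. You write that ``the discrete derivatives of an $\F_2$-polynomial of degree $r$ are themselves $\F_2$-polynomials of degree at most $r-1$, which bounds the total influence of $p$ under $\mu_\alpha^N$.'' But degree of the derivatives says nothing about the \emph{sum} of their acceptance probabilities: for $p = x_1 \oplus \cdots \oplus x_N$ each $\partial_i p$ is the constant $1$, so the total influence is $N$ while $r=1$. More generally, the Russo--Margulis identity gives $f'(\alpha) = \sum_i \Pr_{\mu_\alpha}[\partial_i p = 1]$, and there is no direct bound on this sum in terms of $r$ alone. You correctly flag that $\F_2$-degree and real/Fourier degree are different, but the proposed workaround does not escape that obstacle. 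A Fourier-analytic route to (B) does exist---it amounts to bounding the level-$k$ $L_1$ Fourier mass of a degree-$r$ $\F_2$-polynomial by $O(r)^k$, which is the Chattopadhyay--Hatami--Lovett--Tal result the paper cites as independent and equivalent---but that is a substantial theorem in its own right, not a consequence of ``derivatives have degree $r-1$.''

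The paper's proof of (B) is structurally quite different. It works with the \emph{profile} $\pi_g(\alpha) = \Pr_{D_\alpha^N}[g=1]$ and runs an iterative argument: if $\pi_g$ jumps from below $0.1$ to above $0.4$ within a subinterval of length $\delta/r$ near $(1-\delta)/2$ (or symmetrically), then by a convex-combination trick $g$ can be converted into a solver of a \emph{harder} $\delta'$-coin problem with $\delta' \leq 4\delta/r$, at the cost of multiplying degree and sample complexity by constants. Since sample complexity cannot outpace $1/\delta_k^2$ forever, this process terminates, yielding a function $h$ whose profile is below $0.4$ on an entire interval $[1/2-\delta_k, 1/2-\delta_k+\delta_k/r]$ and above $0.6$ on the mirror interval. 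From $h$ one builds (by random projection onto $n \approx (r/\delta_k)^2$ coordinates) a deterministic function that agrees with $\mathrm{Maj}_n$ on all but a tiny fraction of inputs in two annuli of the Hamming cube, and a Smolensky-style argument (vanishing on Hamming balls forces high multilinear degree) then gives $\deg(h) \geq \Omega(1/\delta_k)$, which unwinds to $\deg(g) \geq \Omega(1/\delta)$. The key idea you are missing is this passage from ``good at two points'' to ``good on two intervals,'' which is what makes the Smolensky machinery applicable.
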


While the above result is stated for $\AC^0[\oplus]$ \emph{formulas}, it easily implies a $\exp(\Omega((1/\delta)^{1/(d-1)}))$ lower bound for depth-$d$ \emph{circuits}, since any $\AC^0[\oplus]$ circuit of size $s$ and depth $d$ can be converted to an $\AC^0[\oplus]$ formula of size $s^d$ and depth $d$. We thus get a direct extension of the results of Shaltiel and Viola~\cite{SV} and Cohen, Ganor and Raz~\cite{CGR}.

The proof of this result is closely related to the results of Razborov~\cite{Razborov} and Smolensky~\cite{Smolensky} (also see~\cite{szegedy,Smolensky93}) that prove lower bounds for $\AC^0[\oplus]$ circuits computing the Majority function. For \emph{monotone} functions\footnote{Recall that a function $g:\{0,1\}^m\rightarrow \{0,1\}$ is monotone if it is non-decreasing w.r.t. the standard partial order on the hypercube.}, the lower bound immediately follows from the standard lower bounds of ~\cite{Razborov} and~\cite{Smolensky} for approximate majorities\footnote{The standard lower bounds of Razborov and Smolensky are usually stated for computing the hard function (e.g. Majority) \emph{exactly}. However, it is easily seen that the proofs only use the fact that the circuit computes the function on most (say 90\%) of its inputs (see, e.g.~\cite{RossmanS}). In particular, this yields lower bounds even for approximate majorities, which, moreover, turn out to be tight. This can be seen as an alternate proof of the (later) lower bound of O'Donnell and Wimmer~\cite[Theorem 4]{OW} for a stronger class of circuits. (The lower bound of~\cite{OW} only holds for $\AC^0$.)} (actually, we need a slightly stronger lower bound for $\AC^0[\oplus]$ \emph{formulas}) and the reduction~\cite{OW,Amano} from computing approximate majorities to the coin problem outlined above. This special case is already enough for the size-hierarchy theorem stated in Section~\ref{sec:sht}. 

However, to prove the result in the non-monotone setting, it is not clear how to use the lower bounds of~\cite{Razborov,Smolensky} directly. Instead, we use the ideas behind these results, specifically the connections between $\AC^0[\oplus]$ circuits and low-degree polynomials. We show that if a function $g(x_1,\ldots,x_N)$  solves the $\delta$-coin problem, then its degree, as a polynomial from $\F_2[x_1,\ldots,x_N]$, must be at least $\Omega(1/\delta)$ (independent of its sample complexity). From this statement and Razborov's~\cite{Razborov} low-degree polynomial approximations for $\AC^0[\oplus],$ it is easy to infer the lower bound. Further, we think that the statement about polynomials is interesting in its own right.

Note that Theorems~\ref{thm:main-intro} and~\ref{thm:lb-intro} immediately imply the Fixed-depth Size Hierarchy Theorem for $\AC^0[\oplus]$ (Theorem~\ref{thm:size-hie}).

\paragraph{Independent work of Chattopadhyay, Hatami, Lovett and Tal~\cite{CHLT}.} A beautiful recent paper of Chattopadhyay et al.~\cite{CHLT} proves a result on the Fourier spectrum of low-degree polynomials over $\F_2$ (Theorem 3.1 in~\cite{CHLT}) which is equivalent\footnote{We thank Avishay Tal (personal communication) for pointing out to us that the results of~\cite{CHLT} imply the degree lower bounds for the coin problem using an observation of~\cite{CHHL}. This direction in fact works for any class of functions closed under \emph{restrictions} (i.e. setting inputs to constants from $\{0,1\}$).} to the degree lower bound on the coin problem mentioned above. Indeed, the main observation, which is an extension of the Smolensky~\cite{Smolensky,Smolensky93} lower bound for the Majority function, is common to our proof as well as that of~\cite{CHLT}.

\srikanth{Need to add a short paragraph on techniques.}

\subsection{Proof Outline}

\paragraph{Upper bounds.} We start with a description of the read-once formula construction of~\cite{OW, Amano}. In these papers, it is shown that for every $d\geq 2$, there is an explicit read-once formula $F_d$ that solves the $\delta$-coin problem. This formula $F_d$ is defined as follows. We fix a $d\geq 2$ and let $m = \Theta((1/\delta)^{1/(d-1)})$, a large positive integer. We define positive integer parameters $f_1,\ldots,f_d \approx \exp(m),$\footnote{These numbers have to be chosen carefully for the proof, but we do not need to know them exactly here.} and define the formula $F_d$ to be a read-once formula with alternating AND and OR input gates where the gates at height $i$ in the formula all have fan-in $f_i$. (It does not matter if we start with AND gates or OR gates, but for definiteness, let us assume that the bottom layer of gates in the formula is made up of AND gates.) Each leaf of the formula is labelled by a distinct variable (or equivalently, the formula is read-once).  \nutan{Do we really need the construction of $F_d$ here? By simply defining $d,m$ as done here we can state the size of $F_d$ in terms of these.  }\srikanth{Removed.}

The formula $F_d$ is shown to solve the coin problem (for suitable values of $f_1,\ldots,f_d$). Note that the size of the formula, as well as its sample complexity, is $\Theta(f_1\cdots f_d)$, which turns out be $\exp(\Omega(md)) = \exp(\Omega(d(1/\delta)^{1/(d-1)})).$

To show that the formula $F_d$ solves the coin problem, we proceed as follows. For each $i\in \{1,\ldots,d\}$, let us define $\mathrm{Acc}_i^{(0)}$ (resp. $\mathrm{Rej}_i^{(0)}$) to be the probability that some subformula $F_i$ of height $i$ accepts (resp. rejects) an input from the distribution $\mu_0^{N_i}$ (where $N_i$ is the sample complexity of $F_i$). Similarly, also define $\mathrm{Acc}_i^{(1)}$ and $\mathrm{Rej}_i^{(1)}$ w.r.t. the distribution $\mu_1^i.$ Define 
\[
p_i^{(b)} = \min\{\mathrm{Acc}_i^{(b)},\mathrm{Rej}_i^{(b)}\}
\]
for each $b\in \{0,1\}$. Note that these definitions are independent of the exact subformula $F_i$ of height $i$ that we choose. 

It can be shown via a careful analysis that for each odd $i < d$ and each $b\in \{0,1\}$, $p_i^{(b)} = \mathrm{Acc}_i^{(b)} = \Theta(1/2^m)$ (i.e. the acceptance probability is smaller than the rejection probability and is roughly $1/2^m$) and we have
\begin{equation}
\label{eq:ratio}
\frac{p_i^{(1)}}{p_i^{(0)}} = (1+\Theta(m^i\delta)).
\end{equation}
(Note that when $i < d-1,$ the quantity $m^i\delta = o(1)$ and so $p_i^{(0)}$ and $p_i^{(1)}$ are actually quite close to each other.) An analogous fact holds for even $i$ and rejection probabilities, where we now measure $p_i^{(0)}/p_i^{(1)}$ instead. At $i=d-1$, we get that the ratio is in fact a large constant. From here, it is easy to argue that $F_d$ accepts an input from $\mu_1^{N_d}$ w.h.p., and rejects an input from $\mu_0^{N_d}$ w.h.p.. This concludes the proof of the fact that $F_d$ solves the $\delta$-coin problem.

We now describe the ideas behind our derandomization. The precise calculations that are required for the analysis of $F_d$ use crucially the fact that the formulas are read-once. In particular, this implies that we are considering the AND or OR of distinct subformulas, it is easy to compute (using independence) the probability that these formulas accept an input from the distributions $\mu_0^{N_d}$ or $\mu_1^{N_d}.$ In the derandomized formulas that we construct, we can no longer afford read-once formulas, since the size of our formulas is (necessarily) exponential in $(1/\delta),$ but the number of distinct variables (i.e. the sample complexity) is required to be $\poly(1/\delta).$ Thus, we need to be able to carry out the same kinds of precise computations for ANDs or ORs of formulas that share many variables. 

For this, we use a tool from probabilistic combinatorics named \emph{Janson's inequality}~\cite{janson,alon-spencer}. Roughly speaking, this inequality says the following. Say we have a \emph{monotone} formula $F$ over $n$ Boolean variables that is the OR of $M$  subformulas $F_1,\ldots,F_M$, and we want to analyze the probability that $F$ rejects a random input $\bf{x}$ from some product distribution over $\{0,1\}^n.$ Let $p_i$ denote the probability that $F_i$ rejects a random input. If the $F_i$s are variable disjoint, we immediately have that $F$ rejects $\bf{x}$ with probability $\prod_i p_i.$ However, when the $F_i$s are not variable disjoint but \emph{most pairs} of these subformulas are variable disjoint, then Janson's inequality allows us to infer that this probability is \emph{upper bounded} by $\left(\prod_i p_i\right)\cdot (1+\alpha)$ where $\alpha$ is quite small. Furthermore, by the monotonicity of $F$ and the resulting positive correlation between the distinct $F_i$, we immediately see that the probability that $F$ rejects is always \emph{lower bounded} by $\prod_i p_i$ and hence we get
\[
\prod_i p_i \leq \prob{\bf{x}}{\text{$F$ rejects $\bf{x}$}} \leq \left(\prod_i p_i\right)\cdot (1+\alpha).
\]
In other words, the estimate $\prod_i p_i$, which is an exact estimate of the rejection probability of $F$ in the disjoint case, is a good  \emph{multiplicative} approximation to the same quantity in the correlated case. Note that this is \emph{exactly} the kind of approximation that would allow us to recover an inequality of the form in (\ref{eq:ratio}) and allow an analysis similar to that of~\cite{OW,Amano} to go through even in the correlated setting. 

\begin{remark}
\label{rem:janson-intro}
While Janson's inequality has been used earlier in the context of Boolean circuit complexity (for example in the work of Rossman~\cite{Rossman-clique,Rossmanmonclique}), as far as we know, this is the first application in the area of the fact that the inequality actually yields a \emph{multiplicative approximation} to the probability being analyzed. 
\end{remark}

This observation motivates the construction of our derandomized formulas (with only $\poly(1/\delta)$ variables). At each depth $d$, we construct the derandomized formula $\Gamma_d$ as follows. The structure (i.e. fan-ins) of the formula $\Gamma_d$ is exactly the same as that of $F_d$. However, the subformulas of $\Gamma_d$ are not variable disjoint. Instead, we use the $n_d$ variables of $\Gamma_d$ to obtain a family $\mc{F}$ of $f_d$ many sets of size $n_{d-1}$ (one for each subformula of depth $d-1$) in a way that ensures that Janson's inequality can be used to analyze the acceptance or rejection probability of $\Gamma_d$. 

As mentioned above, to apply Janson's inequality, this family $\mc{F}$ must be chosen in a way that ensures that most pairs of sets in $\mc{F}$ are disjoint. It turns out that we also need other properties of this family to ensure that the multiplicative approximation $(1+\alpha)$ is suitably close to $1$. However, we show that standard designs due to Nisan and Wigderson~\cite{Nisan, NW} used in the construction of pseudorandom generators already have these properties (though these properties were not needed in these earlier applications, as far as we know). 

With these properties in place, we can analyze the derandomized formula $\Gamma_d.$ For each subformula $\Gamma$ of depth $i\leq d$, we can define $p_{\Gamma}^{(b)}$ analogously to above. Using a careful analysis, we show that $p_{\Gamma}^{(b)} \in [p_i^{(b)}(1-\alpha_i),p_i^{(b)}(1+\alpha_i)]$ for a suitably small $\alpha_i.$ This allows to infer an analogue of (\ref{eq:ratio}) for $p_\Gamma^{(1)}$ and $p_\Gamma^{(0)}$, which in turn can be used to show (as in the case of $F_d$) that $\Gamma_d$ solves the $\delta$-coin problem. 

\paragraph{Lower bounds.} We now describe the ideas behind the proof of Theorem~\ref{thm:lb-intro}. It follows from the result of O'Donnell and Wimmer~\cite{OW} that there is a close connection between the $\delta$-coin problem and computing an approximate majority on $n$ Boolean inputs. In particular, it follows from this connection that if there is an $\AC^0[\oplus]$ formula $F$ of size $s$ and depth $d$ solving the $\delta$-coin problem for $\delta = \Theta(1/\sqrt{n})$ that \emph{additionally computes a monotone function},\footnote{Note that we are not restricting the formula $F$ itself to be monotone. We only require that it computes a monotone function.} then we also have a formula $F'$ of size $s$ and depth $d$ computing an approximate majority on $n$ inputs. (The formula $F'$ is obtained by substituting each input of $F$ with a uniformly random input among the $n$ inputs to the approximate majority.) Since standard lower bounds for $\AC^0[\oplus]$ formulas~\cite{Razborov,Smolensky,RossmanS} imply lower bounds for computing approximate majorities, we immediately get a lower bound of $\exp(\Omega(d(1/\delta)^{1/(d-1)}))$ for $\AC^0[\oplus]$ formulas $F$ that solve the $\delta$-coin problem \emph{by computing a monotone function.}

For the general case, the above reduction from approximate majorities to the coin problem no longer works and we have to do something different. Our strategy is to look inside the proof of the $\AC^0[\oplus]$ formula lower bounds and use these ideas to prove the general lower bound for the coin problem. In particular, by the polynomial-approximation method due to Razborov~\cite{Razborov} (and a quantitative improvement from~\cite{RossmanS}), it suffices to prove degree lower bounds on polynomials from $\F_2[x_1,\ldots,x_N]$ that solve the $\delta$-coin problem. 

We are able to prove the following theorem in this direction, which we believe is independently interesting.

\begin{theorem}
\label{thm:lbd-polys-intro}
Let $g\in \F_2[x_1,\ldots,x_N]$ solve the $\delta$-coin problem. Then, $\deg(g) = \Omega(1/\delta).$
\end{theorem}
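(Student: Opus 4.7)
The plan is to extend Smolensky's degree lower bound for Majority to the biased coin setting. First I reformulate in the $\pm 1$ variables $z_i = 1 - 2x_i$: the distributions $\mu_0$ and $\mu_1$ become product distributions on $\{\pm 1\}^N$ with coordinate means $+\delta$ and $-\delta$, respectively, and writing $g$ in the standard Fourier basis $\{\chi_S(z) = \prod_{i \in S} z_i\}_S$, a direct computation gives
\[
\mathbb{E}_{\mu_1}[g] - \mathbb{E}_{\mu_0}[g] \;=\; -2 \sum_{|S|\text{ odd}} \hat g(S)\, \delta^{|S|}.
\]
Hence the coin-problem hypothesis is equivalent to $\left| \sum_{|S|\text{ odd}} \hat g(S)\, \delta^{|S|} \right| \geq 0.4$, and the theorem reduces to upper-bounding this weighted odd-level Fourier mass in terms of the $\F_2$-degree $D$ of $g$.

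The main technical step will be to establish an $L_1$-type bound on the Fourier spectrum: there is an absolute constant $C > 0$ such that for every $k$ and every $\F_2$-polynomial $g$ of degree at most $D$,
\[
\sum_{|S| = k} |\hat g(S)| \;\leq\; (CD)^k.
\]
Granting this, summing over odd $k$ and applying the triangle inequality yields
\[
\left| \sum_{|S|\text{ odd}} \hat g(S)\, \delta^{|S|} \right|
\;\leq\; \sum_{k\text{ odd}} (CD\delta)^k
\;=\; \frac{CD\delta}{1 - (CD\delta)^2}
\;=\; O(D\delta)
\]
provided $CD\delta \leq 1/2$, which, combined with the coin-problem lower bound of $0.4$, forces $D = \Omega(1/\delta)$.

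To prove the $L_1$ bound I would follow Smolensky's strategy adapted to the biased setting. The key identity is $(-1)^{g(x)} = \prod_{T : c_T = 1} (1 - 2\chi_T(x))$, where $g(x) = \bigoplus_{|T| \leq D} c_T \chi_T(x)$ is the algebraic normal form of $g$. Expanding the product via inclusion-exclusion writes $(-1)^{g(x)}$ as an alternating sum over sub-collections of the degree-at-most-$D$ monomials; reading off the Fourier coefficient $\hat g(S)$ then gives a signed sum controlled by the union of supports of each sub-collection, from which the level-$k$ $L_1$ bound follows by a combinatorial count exploiting $|T| \leq D$. A closely related Fourier-spectrum result is obtained independently by \cite{CHLT} (their Theorem~3.1), as noted in the excerpt.

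The main obstacle is the well-known gap between the $\F_2$-degree and the real-multilinear (Fourier) degree of a Boolean function: the parity function already has $\F_2$-degree $1$ but Fourier degree $N$. Consequently the $L_1$ bound cannot come from merely restricting the support of $\hat g$; one must extract the cancellations among Fourier coefficients at different levels using the $\F_2$-algebraic structure of $g$. This is where Smolensky's original argument for Majority must be carefully generalized, and this is the heart of the proof.
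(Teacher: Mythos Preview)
Your approach is genuinely different from the paper's. The paper does \emph{not} go through a Fourier $L_1$ bound. Instead, it runs a two-phase argument: first an iterative ``amplification'' step that replaces $g$ by a related probabilistic polynomial $\bm{h}$ whose acceptance profile is well-behaved on two \emph{intervals} (not just the two points $(1\pm\delta)/2$); then a direct algebraic argument in the spirit of Smolensky--Szegedy, using the fact that a nonzero multilinear polynomial of degree $d$ cannot vanish on a Hamming ball of radius $d$. Your route --- bound $\sum_{|S|=k}|\hat g(S)|$ by $(CD)^k$ and sum the geometric series --- is exactly the Chattopadhyay--Hatami--Lovett--Tal approach the paper cites as independent and equivalent. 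Each has its merits: the Fourier route yields a general structural statement about $\F_2$-degree-$D$ functions that is useful well beyond the coin problem, while the paper's approach is self-contained and makes the connection to approximate-Majority lower bounds explicit.

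That said, your proposal has a real gap at the only hard step: the proof of the level-$k$ $L_1$ bound. The sketch ``expand $(-1)^{g(x)}=\prod_{T:c_T=1}(1-2\chi_T(x))$ and read off Fourier coefficients by a combinatorial count'' is not a proof. The set $\{T:c_T=1\}$ can have size as large as $\binom{N}{\le D}$, so the expansion has doubly exponentially many terms, and a naive count of how many subcollections have union of size $k$ gives nothing useful; the bound $(CD)^k$ must come from massive cancellation, not from counting. The actual argument in \cite{CHLT} proceeds by induction on the $\F_2$-degree $D$ via discrete derivatives (restrictions $x_i\mapsto 0$ versus $x_i\mapsto 1$ drop the degree by one), controlling the level-$k$ mass of $g$ in terms of the level-$(k-1)$ mass of its derivatives. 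None of that structure appears in your outline. As written, you have correctly reduced the theorem to the level-$k$ $L_1$ inequality, but have not given a viable plan for proving that inequality; you should either supply the inductive derivative argument or cite \cite{CHLT} as a black box.
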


\begin{remark}
\label{rem:lbd-polys}
\begin{enumerate}
\item Note that the degree lower bound in Theorem~\ref{thm:lbd-polys-intro} is independent of the sample complexity $N$ of the underlying function $g$. 
\item The lower bound obtained is tight up to a constant factor. This can be seen by using the fact  that this yields tight lower bounds for the coin problem (which we show), or by directly approximating the Majority function on $1/\delta^2$ bits suitably~\cite{Gopalanetalapprox} to obtain a degree $O(1/\delta)$ polynomial that solves the $\delta$-coin problem.
\item A weaker degree lower bound of $\Omega(1/(\delta\cdot (\log^2 (1/\delta))))$ can be obtained by using an idea of Shaltiel and Viola~\cite{SV}, who show how to use any solution to the coin problem and some additional Boolean circuitry to approximate the Majority function on $1/\delta^2$ inputs. Unfortunately, this weaker degree lower bound only implies a formula lower bound that is superpolynomially weaker than the upper bound.
\item As mentioned above, an independent recent paper of Chattopadhyay et al.~\cite{CHLT} proves a result on the Fourier spectrum of low-degree polynomials which can be used to recover the degree bound  in Theorem~\ref{thm:lbd-polys-intro} (Avishay Tal (personal communication)). Conversely, Theorem~\ref{thm:lbd-polys-intro} can be used to recover the corresponding result of Chattopadhyay et al.~\cite{CHLT}.
\end{enumerate}
\end{remark}

The proof of Theorem~\ref{thm:lbd-polys-intro} is inspired by a standard result in circuit complexity  that says that any polynomial $P\in \F_2[x_1,\ldots,x_n]$ that computes an approximate majority must have degree $\Omega(\sqrt{n}).$ The basic ideas of this proof go back to Smolensky~\cite{Smolensky},\footnote{Though Razborov~\cite{Razborov} was the first to prove an exponential $\AC^0[\oplus]$ circuit lower bound for the Majority function, he did not explicitly prove a lower bound on the degree of approximating polynomials for the Majority function. Instead, he worked with a different symmetric function for the polynomial question.} though the result itself was proved in Szegedy's PhD thesis~\cite{szegedy} and a later paper of Smolensky~\cite{Smolensky93}. Here, we modify a slightly different ``dual'' proof of this result which appears in the work of Kopparty and Srinivasan~\cite{KS}, which itself builds on ideas of Aspnes, Beigel, Furst and Rudich~\cite{ABFR} and Green~\cite{Green}. (The proof idea of Smolensky~\cite{Smolensky} can also be made to work.) 

The first idea is to note that the proof in~\cite{KS} can be modified to prove a lower bound of $\Omega(\sqrt{n})$ on the degree of any $P\in \F_2[x_1,\ldots,x_n]$ that satisfies the following condition: there exist constants $a > b$ such that $P$ agrees with the Majority function on $n$ bits on all but an $\varepsilon$ fraction of inputs of Hamming weight in $[(n/2)-a\sqrt{n},(n/2)-b\sqrt{n}]\cup [(n/2)+b\sqrt{n},(n/2)+a\sqrt{n}]$ (where $\varepsilon$ is suitably small depending on $a,b$).

Using the sampling argument of O'Donnell and Wimmer~\cite{OW} and the above degree lower bound, it follows that if $g$ satisfies the property that it accepts w.h.p. inputs from any product distribution $D_{\alpha}^N$ for $\alpha\in [(1/2)-a\delta,(1/2)-b\delta]$ and rejects w.h.p. inputs from any product distribution $D_{\beta}^N$ for $\beta\in [(1/2)+b\delta,(1/2)+a\delta],$ then the degree of $g$ must be $\Omega(1/\delta).$ \nutan{I am not able to understand this statement?}\srikanth{Fixed.}

But $g$ might not satisfy this hypothesis. Informally, solving the $\delta$-coin problem only means that the acceptance probability of $g$ is small on inputs from $D_{(1-\delta)/2}^N$ and large on inputs from $D_{(1+\delta)/2}^N$. It is not clear that these probabilities will remain small for $\alpha,\beta$ in some intervals of length $\Omega(\delta)$. For example, it may be that the acceptance probability of the polynomial $g$ on distribution $D_{\alpha}^N$ oscillates rapidly for $\alpha\in [(1/2)-a\delta,(1/2)-b\delta]$ even for $a,b$ that are quite close to each other. In this case, however, we observe that $g$ can be used to distinguish $D_{\alpha'}^{N'}$ and $D_{\alpha''}^{N'}$ for $\alpha',\alpha''$ quite close to each other. In other words, we are solving a `harder' coin problem (since $|\alpha'-\alpha''|$ is small). Further, we can show that this new distinguisher, say $g'$, has not much larger degree and sample complexity than the old one. We can thus try to prove the degree lower bound for $g'$ instead. 

We repeat this argument until we can prove a degree lower bound on the new distinguisher $g'$ (which implies a degree lower bound on $g$). We can show that since the sample complexities of successive distinguishers are not increasing too quickly, but the coin problems that they solve are getting much harder, this iteration cannot continue for more than finitely many steps. Hence, after finitely many steps, we will be able to obtain a degree lower bound.

\subsection{Other related work}

The coin problem has also been investigated in other computational models. Brody and Verbin~\cite{BV}, who formally defined the coin problem, studied the complexity of this model in read-once branching programs. Their lower bounds were strengthened by Steinberger~\cite{Steinberger} and Cohen, Ganor and Raz~\cite{CGR}. Lee and Viola~\cite{LV} studied the problem which has also been studied in the model of ``product tests.'' Both these models are incomparable in strength to the constant-depth circuits we study here.

\section{Preliminaries}
\label{sec:prelim}
Throughout this section, let $d\geq 2$ be a fixed constant and $\delta \in (0,1)$ be a parameter. For any $N\geq 1$, let $\mu_0^N$ and $\mu_1^N$ denote the product distributions over $\{0,1\}^N$ where each bit is set to $1$ with probability $(1-\delta)/2$ and $(1+\delta)/2$ respectively.

\subsection{Some technical preliminaries}
\label{sec:tech-prelim}

Throughout, we use $\log (\cdot)$ to denote logarithm to the base $2$ and $\ln (\cdot)$ for the natural logarithm. We use $\exp(x)$ to denote $e^x.$ 

\begin{fact}
\label{fac:exp}
Assume that $x\in [-1/2,1/2].$ Then we have the following chain of inequalities.
\begin{equation}
\label{eq:exp-ineq}
\exp(x - (|x|/2))\mathop{\leq}_{\mathrm{(a)}} \exp(x-x^2) \mathop{\leq}_{\mathrm{(b)}} 1+x\mathop{\leq}_{\mathrm{(c)}} \exp(x)\mathop{\leq}_{\mathrm{(d)}} 1+x+x^2 \mathop{\leq}_{\mathrm{(e)}} 1+x+(|x|/2)
\end{equation}

\end{fact}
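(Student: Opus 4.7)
The plan is to prove each of the five inequalities (a)--(e) separately, since they are logically independent. Two are purely algebraic, one is the classical convexity inequality for $\exp$, and the remaining two require a brief analytic argument.

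For (a), rearranging gives $x^2 \leq |x|/2$, which holds precisely because $|x| \leq 1/2$; inequality (e) reduces to the same algebraic fact. Inequality (c), $1 + x \leq \exp(x)$, is the standard tangent-line inequality for $\exp$ at $0$ and holds for every real $x$.

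The two substantive inequalities are (b) and (d). For (d), I would consider $f(x) := 1 + x + x^2 - \exp(x)$. Then $f(0) = f'(0) = 0$ and $f''(x) = 2 - \exp(x) > 0$ on $[-1/2, 1/2] \subset (-\infty, \ln 2)$. Thus $f'$ is strictly increasing and vanishes only at $0$, so $f$ is decreasing on $[-1/2, 0]$ and increasing on $[0, 1/2]$, attaining its minimum on the interval at $x = 0$ with value $0$; hence $f(x) \geq 0$. For (b), taking logarithms (permitted since $1 + x > 0$ on the interval) reduces it to $x - x^2 \leq \ln(1+x)$. Setting $g(x) := \ln(1+x) - x + x^2$, a short calculation yields
\[
g'(x) = \frac{1}{1+x} - 1 + 2x = \frac{x(1 + 2x)}{1+x},
\]
which has the same sign as $x$ on $[-1/2, 1/2]$ since both $1 + x > 0$ and $1 + 2x \geq 0$ there. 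Hence $g$ is minimized at $x = 0$ with $g(0) = 0$, so $g \geq 0$ on the interval, as required.

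I do not anticipate any serious obstacle; the fact is essentially a collection of calculus exercises. The only mild subtlety worth flagging is that one might be tempted to prove (b) and (d) by truncating the Taylor series of $\ln(1+x)$ or $\exp(x)$, which is clean for $x \geq 0$ (alternating series with decreasing magnitudes), but becomes more delicate for $x < 0$ where the tail terms have a fixed sign. The sign-of-derivative arguments above handle both halves of the interval uniformly and avoid any case analysis on the sign of $x$.
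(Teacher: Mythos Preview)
Your proof is correct. The paper states this fact without proof, treating it as a standard technical preliminary, so there is no proof in the paper to compare against; your derivative-based arguments for (b) and (d) and the trivial reductions for (a), (c), and (e) are all valid and handle the full interval cleanly.
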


The following is an easy consequence of the Chernoff bound. 

\begin{fact}[Error reduction]
\label{fac:err_redn}
Say $g$ solves the coin problem with error $(1/2)-\eta$ for some $\eta > 0$ and let $N$ denote the sample complexity of $g$. Let $G_t:\{0,1\}^{N\cdot t}\rightarrow \{0,1\}$ be defined as follows. On input $x \in \{0,1\}^{N\cdot t},$ 
\[
G_t(x) = \Maj_t(g(x_1,\ldots,x_N),g(x_{N+1},\ldots,x_{2N}),\ldots,g(x_{(t-1)N+1},\ldots,x_{t\cdot N})).
\]
Then, for $t = O(\log(1/\varepsilon)/\eta^2),$ $G_t$ solves the $\delta$-coin problem with error at most $\varepsilon.$
\end{fact}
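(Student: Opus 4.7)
The plan is a direct Chernoff/Hoeffding argument using the fact that the $t$ blocks of inputs are independent.

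First I would fix $b \in \{0,1\}$ and consider a sample $\bm{x} \sim \mu_b^{Nt}$. Writing $\bm{x}$ as $t$ consecutive blocks of length $N$, each block is independently distributed according to $\mu_b^N$, so the random variables $Y_i := g(x_{(i-1)N+1}, \ldots, x_{iN})$, for $i \in [t]$, are mutually independent Bernoulli random variables. By hypothesis on $g$, we have $\Pr[Y_i = 1] \geq (1/2) + \eta$ when $b = 1$ and $\Pr[Y_i = 1] \leq (1/2) - \eta$ when $b = 0$.

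Next I would analyze the majority gate. Note that $G_t(\bm{x}) = 1$ iff $S := \sum_{i=1}^t Y_i \geq t/2$. In the case $b = 1$, we have $\mathbf{E}[S] \geq t((1/2)+\eta)$, so $G_t(\bm{x}) = 0$ forces $S \leq t/2$, i.e., a deviation of at least $\eta t$ below the mean. Since $Y_i \in [0,1]$ are independent, Hoeffding's inequality gives
\[
\Pr_{\bm{x} \sim \mu_1^{Nt}}[G_t(\bm{x}) = 0] \leq \exp(-2\eta^2 t).
\]
The case $b = 0$ is symmetric and yields $\Pr_{\bm{x} \sim \mu_0^{Nt}}[G_t(\bm{x}) = 1] \leq \exp(-2\eta^2 t)$.

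Finally, choosing $t = \lceil \ln(1/\varepsilon)/(2\eta^2) \rceil = O(\log(1/\varepsilon)/\eta^2)$ makes both error probabilities at most $\varepsilon$, which is exactly the conclusion of the fact. There is no real obstacle here; the only thing to be careful about is invoking independence of the $Y_i$'s (which follows because $\mu_b^{Nt}$ is a product distribution and the blocks are disjoint) and using a two-sided tail bound valid for bounded independent random variables.
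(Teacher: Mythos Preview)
Your argument is correct and is exactly what the paper has in mind: it states the fact as ``an easy consequence of the Chernoff bound'' without further proof. Your setup of the independent Bernoullis $Y_i$ and the Hoeffding bound $\exp(-2\eta^2 t)$ is the standard way to flesh this out.
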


\subsection{Boolean formulas}
\label{sec:formulas}

We assume standard definitions regarding Boolean circuits. The size of a circuit always refers to the total number of gates (including input gates) in the circuit.

We abuse notation and use \emph{$\AC^0$ formulas of size $s$ and depth $d$} (even for superpolynomial $s$) to denote depth $d$ formulas of size $s$ made up of AND, OR and NOT gates. Similar notation is also used for $\AC^0[\oplus]$ formulas.

Given a Boolean formula $F$, we use $\Vars(F)$ to denote the set of variables that appear as labels of input gates of $F$. 

We say that a Boolean formula family $\{F_n\}_{n\geq 1}$ is explicit if there is a deterministic polynomial-time algorithm which when given as input $n$ (in binary) and the description of two gates $g_1,g_2$ of $F_n$ is able to compute whether there is a wire from $g_1$ to $g_2$ or not. Such a notion of explicitness has been described as \emph{uniformity} in \cite{vollmer}(see Chapter 2 and definition 2.24). \srikanth{Link to reference for this kind of explicitness??}

\subsection{Amano's formula construction}
\label{sec:amano}
In this section we present the construction of a depth $d$ $\AC^0$ formula that solves the $\delta$-coin problem. The construction presented here is by Amano~\cite{Amano}, which works for $d\geq 3$. For $d=2$, a construction was presented by O'Donnell and Wimmer~\cite{OW}. We describe their construction in Section~\ref{sec:ubd-d=2}.

Define $m = \lceil (1/\delta)^{1/(d-1)}\cdot(1/\ln 2)\rceil.$ For $i\in [d-2]$, define $\delta_i$ inductively by $\delta_1 = m\delta$ and $\delta_i = \delta_{i-1}\cdot (m\ln 2).$

Define fan-in parameters $f_1 = m, f_2 = f_3 = \cdots = f_{d-2} = \lceil m\cdot 2^m\cdot \ln 2\rceil, f_{d-1} = C_1\cdot m2^m$ and $f_d = \lceil \exp(C_1\cdot m)\rceil,$ where $C_1= 50.$

Define the formula $F_d$ to be an alternating formula with AND and OR gates such that
\begin{itemize}
\item Each gate at level $i$ above the variables has fan-in $f_i$.
\item The gates at level $1$ (just above the variables) are AND gates.
\item Each leaf is labelled by a distinct variable. 
\end{itemize}

Note that $F_d$ is a formula on $N = \prod_{i\in [d]}f_i  \leq \exp(O(dm))$ variables of size $O(N).$ 

Amano~\cite{Amano} showed that $F_d$ solves the $\delta$-coin problem. We state a more detailed version of his result below. Since this statement does not exactly match the statement in his paper, we give a proof in the appendix.

For each $i\leq d$, let $F_i$ denote any subformula of $F_d$ of depth $i.$ Let $N_i$ denote $|\Vars(F_i)|$ and let $p_i^{(b)}$ denote the probability
\begin{equation}
\label{eq:p-i-def}
p_i^{(b)} = \min\{\prob{\bm{x}\sim \mu_b^{N_i}}{F_i(\bm{x}) = 0}, \prob{\bm{x}\sim \mu_b^{N_i}}{F_i(\bm{x}) = 1}\}.
\end{equation}
Note that the definition of $p_i^{(b)}$ is independent of the exact subformula $F_i$ chosen: any subformula of depth $i$ yields the same value.

\begin{theorem}
\label{thm:amano}
Assume $d\geq 3$ and $F_d$ is defined as above. Then, for small enough $\delta$, we have the following.
\begin{enumerate}
\item For $b,\beta\in \{0,1\}$ and each $i\in [d-1]$ such that $i\equiv \beta \pmod{2}$, we have 
\[
p_i^{(b)} = \prob{\bm{x}\sim \mu_b^{N_i}}{F_i(\bm{x}) = \beta}.
\]
In particular, for any $i\in \{2,\ldots,d-2\}$ and any $b\in \{0,1\}$
\begin{equation}
\label{eq:pivspi-1}
p_i^{(b)} = (1-p_{i-1}^{(b)})^{f_i}.
\end{equation}
\item For $\beta \in \{0,1\}$ and $i\in [d-2]$ such that $i\equiv \beta\pmod{2},$ we have
\begin{align*}
\frac{1}{2^m}(1+\delta_i\exp(-3\delta_i))&\leq  p_i^{(\beta)} \leq \frac{1}{2^m}(1+\delta_i\exp(3\delta_i))\\
\frac{1}{2^m}(1-\delta_i\exp(3\delta_i))&\leq  p_i^{(1-\beta)} \leq \frac{1}{2^m}(1-\delta_i\exp(-3\delta_i))
\end{align*}
\item Say $d-1\equiv \beta \pmod{2}.$ Then
\[
p_{d-1}^{(\beta)} \geq \exp(-C_1m + C_2) \text{ and } p_{d-1}^{(1-\beta)} \leq \exp(-C_1m - C_2)
\]
where $C_2 = C_1/10.$
\item For each $b\in \{0,1\}$, $\prob{\bm{x}\sim \mu^N_b}{F_d(\bm{x}) = 1-b}\leq 0.05.$ In particular, $F_d$ solves the $\delta$-coin problem.
\end{enumerate}
\end{theorem}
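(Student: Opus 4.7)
The plan is to prove parts (1)--(4) jointly by induction on the depth $i$ of the subformula $F_i$, relying throughout on Fact~\ref{fac:exp} to translate between additive and multiplicative error estimates for exponentials and logarithms.

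\textbf{Base case.} For $i = 1$, the formula $F_1$ is an AND of $f_1 = m$ distinct variables, so $\prob{\bm{x}\sim\mu_b^{N_1}}{F_1(\bm{x})=1} = ((1+(-1)^{1-b}\delta)/2)^m$, which lies well below $1/2$ for small $\delta$; this gives part~(1) at $i=1$ with $\beta = 1$. Writing $(1\pm\delta)^m = \exp(m\ln(1\pm\delta))$ and invoking Fact~\ref{fac:exp} on $\ln(1\pm\delta)$ and then on $\exp(\pm\delta_1)$ sandwiches the value between $1 + \delta_1$ and $\exp(\delta_1) \leq 1 + \delta_1 + \delta_1^2$, and these fit inside $[1 \pm \delta_1\exp(-3\delta_1),\, 1 \pm \delta_1\exp(3\delta_1)]$, giving part~(2) at $i=1$.

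\textbf{Inductive step for parts (1) and (2).} For $i \in \{2,\ldots,d-2\}$, use that $F_i$ is an AND or OR of $f_i$ \emph{variable-disjoint} subformulas $F_{i-1}$ (by the read-once structure of $F_d$). Thus $(1-p_{i-1}^{(b)})^{f_i}$ exactly computes the probability that every child takes its majority value, and the inductive hypothesis $p_{i-1}^{(b)} = \Theta(1/2^m)$ combined with $f_i \approx m 2^m \ln 2$ makes this $\Theta(2^{-m}) < 1/2$, confirming part~(1) and the recurrence \eqref{eq:pivspi-1}. For part~(2), write $p_{i-1}^{(b)} = (1/2^m)(1+\varepsilon)$ with $|\varepsilon|$ comparable to $\delta_{i-1}$, apply Fact~\ref{fac:exp} to $\ln(1-p_{i-1}^{(b)})$, multiply by $f_i$, and re-exponentiate. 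The identity $\delta_i = m\ln 2 \cdot \delta_{i-1}$ transforms the $(1\pm\delta_{i-1}\exp(\pm3\delta_{i-1}))$ shape into $(1\pm\delta_i\exp(\pm3\delta_i))$, with the sign of the correction flipping at each level as demanded by the parity of $\beta$ (because $\ln(1-\cdot)$ is monotone decreasing). The $O(2^{-m})$ and $O(\delta_{i-1}^2)$ terms dropped along the way are absorbed into the slack between $\exp(\pm 3\delta_i)$ and $\exp(\pm \delta_i)$.

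\textbf{Parts (3) and (4).} At $i = d-1$ the same argument applies with $f_{d-1} = C_1 m 2^m$ replacing $m 2^m \ln 2$; the crucial quantitative input is that the definition of $m$ gives $m\delta_{d-2} = \delta_{d-1}/\ln 2 = \Theta(1)$, so that the logarithmic gap $\ln(p_{d-1}^{(\beta)}/p_{d-1}^{(1-\beta)}) = \Theta(C_1)$ is a genuine positive constant rather than $o(1)$. Taking $C_1 = 50$ makes this gap exceed $2C_2 = C_1/5$, yielding part~(3). For part~(4), $F_d$ is an AND or OR of $f_d = \lceil\exp(C_1 m)\rceil$ disjoint copies of $F_{d-1}$, so $\prob{\bm{x}\sim\mu_b^N}{F_d(\bm{x})=1-b}$ equals either $1-(1-p_{d-1}^{(b)})^{f_d}$ or $(1-p_{d-1}^{(b)})^{f_d}$; Bernoulli's inequality together with $f_d \cdot p_{d-1}^{(b)} \leq \exp(-C_2)$ in the favorable case and $\geq \exp(C_2)$ in the unfavorable case shows that both errors are much less than $0.05$.

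\textbf{Main obstacle.} The bulk of the work is the inductive step of part~(2): one must verify that the slack factor $3$ in $\exp(\pm 3\delta_i)$, together with the constants $C_1 = 50$ and $C_2 = C_1/10$, is large enough to absorb every lower-order error uniformly over all $i \leq d-2$ --- the $x^2$ terms from Fact~\ref{fac:exp}, the difference between $\ln(1-q)$ and $-q$, and the rounding slack from the ceilings in the $f_i$. The argument is a careful accounting of how these errors compose across the $d$ levels, using that $\delta_i$ grows geometrically in $i$ yet stays below $1$ until $i$ reaches $d-1$, where the amplification is designed to push the ratio into the constant regime.
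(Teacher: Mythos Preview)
Your proposal is correct and follows essentially the same approach as the paper: induction on the depth $i$ using the read-once recurrence $p_i^{(b)} = (1-p_{i-1}^{(b)})^{f_i}$, with Fact~\ref{fac:exp} doing the work of propagating the $(1\pm\delta_i\exp(\pm 3\delta_i))$ envelope from one level to the next, and the key observation that $m\delta_{d-2}$ is a constant of order $1/(\ln 2)^2$ at level $d-1$. The paper's appendix carries out exactly the bookkeeping you identify as the main obstacle, with the slack between $\exp(\pm\delta_i)$ and $\exp(\pm 3\delta_i)$ absorbing the second-order and ceiling errors in the way you describe.
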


\begin{observation}
\label{obs:pifi}
For any $i \in \{2, \ldots, d\}$ and $b \in \{0,1\}$, $p_{i-1}^{(b)} \cdot f_i \leq 50m$. 
\end{observation}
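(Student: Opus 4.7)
The plan is to verify the inequality by a case analysis on $i$, pairing in each case the explicit value of $f_i$ with an upper bound on $p_{i-1}^{(b)}$ drawn from Theorem~\ref{thm:amano}. Since the fan-ins $f_i$ and the probability bounds fall into three natural regimes ($2\leq i\leq d-2$, $i = d-1$, and $i = d$), I will treat these ranges of $i$ separately. In each, I will show that the product $p_{i-1}^{(b)} f_i$ is bounded by something of order $m$, with the constant $50$ being saturated (only asymptotically) in the middle case.

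For $i \in \{2, \ldots, d-2\}$, part~2 of Theorem~\ref{thm:amano} gives $p_{i-1}^{(b)} \leq \bigl(1 + \delta_{i-1}\exp(3\delta_{i-1})\bigr)/2^m$ for both $b \in \{0,1\}$. Using $\delta_j = m\delta \cdot (m\ln 2)^{j-1}$ together with $m = \Theta((1/\delta)^{1/(d-1)})$, one checks $\delta_{d-2} = O(\delta^{1/(d-1)})$, which is $o(1)$ as $\delta \to 0$, so this parenthesised factor is at most $2$ for $\delta$ small enough. Combined with $f_i \leq m 2^m \ln 2 + 1$, this gives $p_{i-1}^{(b)} f_i \leq 2m \ln 2 + O(2^{-m}) < 2m$, comfortably within $50m$.

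The critical case is $i = d-1$, where $f_{d-1} = C_1 m 2^m$ with $C_1 = 50$. Applying the same bound on $p_{d-2}^{(b)}$ yields
\[
p_{d-2}^{(b)} f_{d-1} \;\leq\; C_1 m \bigl(1 + \delta_{d-2}\exp(3\delta_{d-2})\bigr),
\]
and since $\delta_{d-2}\exp(3\delta_{d-2}) = o(1)$ as $\delta \to 0$, for $\delta$ small enough this is at most $50m$. Finally, for $i = d$, part~3 of Theorem~\ref{thm:amano} gives $p_{d-1}^{(b)} \leq \exp(-C_1 m + C_2)$ (taking the larger of the two values), and $f_d \leq \exp(C_1 m) + 1$, so $p_{d-1}^{(b)} f_d \leq \exp(C_2)\bigl(1 + \exp(-C_1 m)\bigr)$, a constant independent of $m$ and trivially much less than $50m$.

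The only case where any work is needed is $i = d-1$: the inequality there is asymptotically sharp, and the observation is really recording the design choice that $C_1$ was picked exactly so that $p_{d-2}^{(b)} f_{d-1} \to C_1 m = 50 m$ in the limit $\delta \to 0$. The implicit ``small enough $\delta$'' hypothesis inherited from Theorem~\ref{thm:amano} is used precisely to absorb the $(1 + o(1))$ correction in this critical case; everywhere else we have considerable slack.
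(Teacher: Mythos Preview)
Your case analysis is the natural approach (the paper states the observation without proof), but there are gaps in two of the three cases.

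For $i=d-1$ you conclude that $C_1 m\bigl(1+\delta_{d-2}\exp(3\delta_{d-2})\bigr)\leq 50m$ ``for $\delta$ small enough.'' This is a logical slip: the correction term is strictly positive, so $50m(1+\varepsilon)>50m$ no matter how small $\varepsilon>0$ is. In fact the \emph{lower} bound in part~2 of Theorem~\ref{thm:amano} shows that for the value of $b$ with $p_{d-2}^{(b)}$ in the ``plus'' range one has $p_{d-2}^{(b)}>1/2^m$ strictly, whence $p_{d-2}^{(b)}f_{d-1}>50m$; the observation as literally written is slightly off here. What your argument does correctly establish is $p_{d-2}^{(b)}f_{d-1}\leq 50m+O(1)$ (using $m\,\delta_{d-2}=\Theta(1)$), and that is all the downstream applications in Lemma~\ref{lem:main} actually need, since the bound is only fed into inequalities like $(p_{i-1}^{(b)}f_i)^2\cdot 4\leq (C_4m)^2$ with $C_4=100$ and $(f_ip_{i-1}^{(b)}+1)\leq C_3m$ with $C_3=1000$, both of which have ample slack.

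For $i=d$ you misread part~3 of Theorem~\ref{thm:amano}: it gives a \emph{lower} bound $\exp(-C_1m+C_2)$ on $p_{d-1}^{(\beta)}$ and an upper bound only on $p_{d-1}^{(1-\beta)}$; there is no stated upper bound on $p_{d-1}^{(\beta)}$, so ``taking the larger of the two values'' is not available. To fill this in, go back to the recursion $p_{d-1}^{(\beta)}=(1-p_{d-2}^{(\beta)})^{f_{d-1}}\leq\exp\bigl(-p_{d-2}^{(\beta)}f_{d-1}\bigr)$ and combine it with the lower bound $p_{d-2}^{(\beta)}\geq 2^{-m}\bigl(1-\delta_{d-2}\exp(3\delta_{d-2})\bigr)$ from part~2; since $m\,\delta_{d-2}=\Theta(1)$ this yields $p_{d-1}^{(\beta)}\leq\exp(-C_1m+O(1))$ and hence $p_{d-1}^{(\beta)}f_d=O(1)$, which is eventually below $50m$.
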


\begin{remark}
\label{rem:RST}
A similar construction to Amano's formula above was used by Rossman, Servedio and Tan~\cite{RST} to prove an average-case \emph{Depth-hierarchy} theorem for $\AC^0$ circuits. Their construction was motivated by the \emph{Sipser functions} used in the work of Sipser~\cite{SipserDH} and H\r{a}stad~\cite{Hastad} to prove worst-case Depth-hierarchy theorems. 
\end{remark}

\subsection{Janson's inequality}
\label{sec:janson}

We state Janson's inequality~\cite{janson} in the language of Boolean circuits. The standard proof due to Boppana and Spencer (see, e.g.~\cite[Chapter 8]{alon-spencer}) easily yields this statement. Since Janson's inequality is not normally presented in this language, we include a proof in the appendix for completeness. 

\begin{theorem}[Janson's inequality]
\label{thm:janson}

Let $C_1,\ldots,C_M$ be any monotone Boolean circuits over inputs $x_1,\ldots,x_N,$ and let $C$ denote $\bigvee_{i\in [M]}C_i.$ For each distinct $i,j\in [M]$, we use $i \sim j$ to denote the fact that $\Vars(C_i)\cap \Vars(C_j)\neq \emptyset$.  Assume each $\bm{x}_j$ ($j\in [n]$) is chosen independently to be $1$ with probability $p_i\in [0,1]$, and that under this distribution, we have $\max_{i\in [M]}\prob{\bm{x}}{C_i(\bm{x}) = 1}\leq 1/2.$ Then, we have
\begin{equation}
\label{eq:janson}
\prod_{i\in [M]}\prob{\bm{x}}{C_i(\bm{x}) = 0} \leq \prob{\bm{x}}{C(\bm{x}) = 0} \leq \left(\prod_{i\in [M]}\prob{\bm{x}}{C_i(\bm{x}) = 0}\right) \cdot \exp(2\Delta)
\end{equation}
where $\Delta := \sum_{i < j: i\sim j} \prob{\bm{x}}{(C_i(\bm{x})=1) \wedge (C_j(\bm{x}) = 1)}.$
\end{theorem}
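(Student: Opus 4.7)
The proof splits cleanly into the lower and upper bounds in (\ref{eq:janson}).

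For the lower bound, my plan is to invoke the FKG inequality (i.e., Harris' inequality for a product measure): since each $C_i$ is monotone, each event $\{C_i(\bm{x}) = 0\}$ is a decreasing event on the independent product space $\{0,1\}^N$, and decreasing events on such a space are positively correlated. Intersecting over $i$ immediately gives $\prob{\bm{x}}{C(\bm{x})=0} \geq \prod_i \prob{\bm{x}}{C_i(\bm{x})=0}$.

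For the upper bound, I would follow the Boppana--Spencer chain-rule argument. Writing $A_i$ for the event $\{C_i(\bm{x}) = 1\}$, we have $\prob{\bm{x}}{C(\bm{x})=0} = \prod_i \Pr[\overline{A_i} \mid \overline{A_1}, \ldots, \overline{A_{i-1}}]$, and the technical heart is to establish, for each $i$, the inequality
$$\Pr\!\left[A_i \,\middle|\, \bigcap_{j<i}\overline{A_j}\right] \;\geq\; \Pr[A_i] \;-\; \sum_{j<i,\; j \sim i} \Pr[A_i \cap A_j].$$
To produce this I would partition $\{j<i\}$ into the ``correlated'' part $S_i = \{j < i : j \sim i\}$ and the ``independent'' part $T_i = \{j < i : j \not\sim i\}$; exploit that $A_i$ is independent of $\{A_j\}_{j \in T_i}$ because they live on disjoint variable sets; and handle the conditioning on $\bigcap_{k\in T_i}\overline{A_k}$ via FKG, which says the increasing event $A_i \cap A_j$ is negatively correlated with the decreasing event $\bigcap_{k \in T_i}\overline{A_k}$. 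A short inclusion--exclusion over $j \in S_i$ then delivers the displayed bound.

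Plugging this back produces $\prob{\bm{x}}{C=0} \leq \prod_i \bigl(\Pr[\overline{A_i}] + \sum_{j < i,\, j \sim i}\Pr[A_i \cap A_j]\bigr)$. Factoring out $\prod_i \Pr[\overline{A_i}]$, using the hypothesis $\Pr[A_i] \leq 1/2$ (so $1/\Pr[\overline{A_i}] \leq 2$), and finally the elementary bound $1 + y \leq e^y$, will collapse the remaining product of correction factors into $\exp\!\left(2 \sum_i \sum_{j < i,\, j \sim i} \Pr[A_i \cap A_j]\right) = \exp(2\Delta)$. The only real obstacle is the conditional lower bound on $\Pr[A_i \mid \bigcap_{j<i}\overline{A_j}]$: one has to set up the $S_i/T_i$ split carefully and apply FKG to the correct monotone event (the increasing event $A_i \cap A_j$, rather than to $A_i$ alone, which would push the inequality in the wrong direction). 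Everything after that is routine bookkeeping.
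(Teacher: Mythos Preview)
Your proposal is correct and follows essentially the same route as the paper's proof: the paper also obtains the lower bound from positive correlation of monotone events (it cites Kleitman's inequality, which in this setting coincides with the Harris/FKG inequality you invoke), and for the upper bound it proves exactly the conditional lower bound $\Pr[A_i \mid \bigcap_{j<i}\overline{A_j}] \geq \Pr[A_i] - \sum_{j<i,\,j\sim i}\Pr[A_i\cap A_j]$ via the same $S_i/T_i$ split, independence on $T_i$, and the Kleitman/FKG step applied to the increasing event $A_i\cap A_j$ against the decreasing event $\bigcap_{k\in T_i}\overline{A_k}$, before finishing with the identical $1/\Pr[\overline{A_i}]\leq 2$ and $1+y\leq e^y$ bookkeeping.
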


\begin{remark}
\label{rem:janson}
By using DeMorgan's law, a similar statement also holds for the probability that the conjunction $C' = \bigwedge_{i\in [M]}C_i$ takes the value $1$. More precisely, if $\max_{i\in [M]}\prob{\bm{x}}{C_i(\bm{x}) = 0}\leq 1/2$, we have
\begin{equation}
\label{eq:janson-and}
\prod_{i\in [M]}\prob{\bm{x}}{C_i(\bm{x}) = 1} \leq \prob{\bm{x}}{C'(\bm{x}) = 1} \leq \left(\prod_{i\in [M]}\prob{\bm{x}}{C_i(\bm{x}) = 1}\right) \cdot \exp(2\Delta)
\end{equation}
where $\Delta := \sum_{i < j: i\sim j} \prob{\bm{x}}{(C_i(\bm{x})=0) \wedge (C_j(\bm{x}) = 0)}.$
\end{remark}
\srikanth{move to appendix and add pointer.}

\section{Design construction}
\label{sec:design}

In order to define a derandomized version of the formulas in Section~\ref{sec:amano}, we will need a suitable notion of a combinatorial design. The following definition of a combinatorial design refines the well-known notion of a Nisan-Wigderson design from the work of \cite{Nisan,NW}. We give a construction of our combinatorial design by  using a construction of Nisan-Wigderson design from~\cite{NW} and showing that this construction in fact satisfies the additional properties we need. 

\begin{definition}[Combinatorial Designs]
\label{def:design}
For positive integers $N_1,N_2,M,\ell$ and $\gamma,\eta\in (0,1),$ an \emph{$(N_1,M,N_2,\ell,\gamma,\eta)$-Combinatorial Design} is a family $\mc{F}$ of subsets of $[N_1]$ such that
\begin{enumerate}
\item $|\mc{F}| \geq M,$
\item $\mc{F}\subseteq \binom{[N_1]}{N_2}$ (i.e. every set in $\mc{F}$ has size $N_2$),
\item Given any distinct $S,T\in \mc{F}$ we have $|S\cap T|\leq \ell,$
\item For any $a\in [N_2]$, we have $|\{S\in \mc{F}\ |\ S\ni a\}|\leq \gamma\cdot M,$
\item For any $i\in [\ell]$, we have $|\{\{S,T\}\subseteq \mc{F}\ |\ S\neq T, |S\cap T| = i\}| \leq \eta^i\cdot M^2.$
\end{enumerate}
\end{definition}

The main result of this section is the following.

\begin{lemma}[Construction of Combinatorial design]
\label{lem:design}
Given positive integers $N_2$ and $M$ and real parameters $\gamma,\eta\in (0,1)$ satisfying $N_2\geq (\log M)/10$, $M\geq 10\cdot N_2/\eta,$ and $\gamma\geq \eta/N_2,$ there exist positive integers $\ell= \Theta \left( \log M/\log(N_2/\eta)\right)$ and $Q = O((N_2/\eta)^{1+1/\ell})$ and an $(N_1 = Q\cdot N_2,M,N_2,\ell,\gamma,\eta)$-combinatorial design.

Further, the design is explicit in the following sense. Identify $[N_1]$ with $[N_2]\times [Q]$ via the bijection $\rho:[N_1]\rightarrow [N_2]\times [Q]$ such that $\rho(i) = (j,k)$ where $i = (k-1)N_2 + j$. Then, each set in $\mc{F}$ is of the form $\{(1,k_1),\ldots,(N_2,k_{N_2})\}$ for some $k_1,\ldots,k_{N_2}\in [Q].$ Finally, there is a deterministic algorithm $\mc{A}$, which when given as input an $i\in [|\mc{F}|]$ and a $j\in [N_2]$, produces $k_j\in [Q]$ in $\poly(\log M)$ time.
\end{lemma}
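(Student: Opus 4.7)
The plan is to use a small modification of the Nisan--Wigderson polynomial design, tweaked so that $|\mc{F}|$ stays within a factor of $2$ of the target $M$ (not a potentially much larger power of the field size). Set $Q$ to be the smallest prime power $\geq 2N_2/\eta$, so by Bertrand's postulate $Q = \Theta(N_2/\eta) = O((N_2/\eta)^{1+1/\ell})$. Let $\ell = \lfloor \log M/\log Q \rfloor$ and $m = \lceil M/Q^\ell\rceil$; from $Q^\ell \leq M < Q^{\ell+1}$ one checks $m \in [1, Q]$ and $M \leq mQ^\ell < 2M$. Identify $[N_2]$ with $\{1, \ldots, N_2\} \subseteq \F_Q^\ast$ (possible since $Q > N_2$) and $[Q]$ with $\F_Q$, fix an arbitrary $S \subseteq \F_Q$ with $|S| = m$, and define
\[
\mc{F} \;=\; \{ S_p : p \in \F_Q[x],\, \deg p \leq \ell,\, p(0) \in S\}, \qquad S_p = \{(j, p(j)) : j \in [N_2]\}.
\]
Then $|\mc{F}| = mQ^\ell \in [M, 2M]$, each $|S_p| = N_2$, and the standard fact that distinct polynomials of degree $\leq \ell$ agree at $\leq \ell$ points gives $|S_p \cap S_q| \leq \ell$, taking care of properties (1)--(3).

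For properties (4) and (5), the main observation is that for any $j_0 \in [N_2]$ (so $j_0 \neq 0$ in $\F_Q$), the affine map $(a_1, \ldots, a_\ell) \mapsto k_0 - \sum_{i=1}^\ell a_i j_0^i$ is a surjection $\F_Q^\ell \to \F_Q$ with uniform fibers of size $Q^{\ell-1}$. Hence for property (4), the number of $p \in \mc{F}$ with $p(j_0) = k_0$ equals $|S| \cdot Q^{\ell-1} = mQ^{\ell-1} = |\mc{F}|/Q$, which is $\leq 2M/Q \leq \gamma M$ by $Q \geq 2N_2/\eta \geq 2/\gamma$. For property (5), fix any $i$ specific points $x_1, \ldots, x_i \in [N_2]$ and count ordered pairs $(p, q) \in \mc{F}^2$ with $p(x_t) = q(x_t)$ for all $t$: for each $p$, the $q$'s of degree $\leq \ell$ matching $p$ at these $i$ points and with $q(0) \in S$ number $mQ^{\ell-i}$ (by Vandermonde, $\ell + 1 - i$ evaluations remain free, one of which is at $0$ and restricted to $S$). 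Summing over $p$, then union-bounding over the $\binom{N_2}{i}$ subsets, gives at most $\binom{N_2}{i}\cdot |\mc{F}|^2/(2Q^i) \leq 2M^2 \binom{N_2}{i}/Q^i$ unordered pairs; since $Q \geq 2N_2/\eta$ gives $\binom{N_2}{i} \leq N_2^i \leq (Q\eta/2)^i$, this is $\leq \eta^i M^2$.

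Explicitness follows by enumerating $\mc{F}$ in mixed-radix order over $[m] \times \F_Q^\ell$ and evaluating the corresponding polynomial at $j \in [N_2]$ via Horner's rule in $O(\ell \log^2 Q) = \poly(\log M)$ time. The only slightly novel ingredient compared to a textbook NW design is restricting $p(0)$ to a size-$m$ subset $S$ to interpolate $|\mc{F}|$ between consecutive powers of $Q$; without this, taking all $Q^{\ell+1}$ polynomials would inflate $|\mc{F}|$ beyond $M$ by up to a factor of $Q$, and property (4) would fail. Excluding $0$ from $[N_2]$ (using $\F_Q^\ast$) is equally crucial, as it ensures the evaluation-at-$j_0$ map has uniform fibers, driving both properties (4) and (5). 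I expect no deep obstacles; the main care needed is tracking the floor/ceiling inequalities and verifying, using the hypotheses $N_2 \geq (\log M)/10$ and $M \geq 10 N_2/\eta$, that $\ell \geq 1$ and $m \leq Q$.
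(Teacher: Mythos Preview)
Your proposal is correct and follows the same Nisan--Wigderson polynomial-graph idea as the paper, but the parameterization differs in a way worth noting. The paper first chooses $\ell$ as the largest integer with $M^{1/\ell}\geq 10N_2/\eta$, then takes the field size $Q$ to be (a power of $2$ near) $M^{1/\ell}$, uses polynomials of degree $\leq \ell-1$, and trims $|\mc{F}|$ down to roughly $M$ by restricting the values at $\ell$ evaluation points inside $A$ to a subset $B$ of size $\lceil M^{1/\ell}\rceil$. You instead fix the field size first at $Q\approx 2N_2/\eta$ (so your $Q$ is genuinely smaller, $\Theta(N_2/\eta)$ rather than up to $(N_2/\eta)^{1+1/\ell}$), take degree $\leq \ell$, and trim $|\mc{F}|$ by restricting the value at the single external point $0$ to a set of size $m=\lceil M/Q^\ell\rceil$. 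Your uniform-fiber observation (crucially using $0\notin[N_2]$) cleanly replaces the paper's counting via ``specify the remaining evaluations inside $A_1$''; the resulting bounds for properties (4) and (5) are essentially the same computation. One small omission: you should also verify $N_2\geq \ell+1$ so that distinct polynomials give distinct graphs and $|\mc{F}|=mQ^\ell$ genuinely counts sets; this follows from $N_2\geq(\log M)/10$ together with $\log Q\geq \log(2N_2/\eta)$, though the constant $10$ may need to be tightened (the paper has the analogous check $\ell\leq N_2$ and is similarly loose about the constant).
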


\begin{proof}
\newcommand{\vecb}{\mathbf{b}}

Define $\ell$ to be the largest integer such that $M^{1/\ell} \geq 10\cdot N_2/\eta$: note that $\ell \geq 1$ by our assumption that $M\geq 10\cdot N_2/\eta.$ Thus, we have
\begin{equation}
\label{eq:des-ell}
\ell \leq \frac{\log M}{\log(10\cdot N_2/\eta)}\leq \frac{\log M}{\log \log M}
\end{equation}
and also
\begin{equation}
\label{eq:des-ell+1}
M^{1/(\ell+1)} < \frac{10\cdot N_2}{\eta}.
\end{equation}
Define the parameter $Q_1 = \lceil M^{1/\ell}\rceil.$ We have
\begin{equation}
\label{eq:des-Q-1}
\frac{10\cdot N_2}{\eta}\leq M^{1/\ell} \leq Q_1 \leq 2M^{1/\ell}\leq O\left(\left(\frac{N_2}{\eta}\right)^{1+\frac{1}{\ell}}\right)
\end{equation}
where we used (\ref{eq:des-ell+1}) for the last inequality. 

Let $Q$ be the smallest power of $2$ greater than or equal to $Q_1$ and let $\F_Q$ be a finite field of size $Q.$ By a result of Shoup~\cite{shoup}, we can construct in time $\poly(\log Q) = \poly(\log M)$ time an implicit representation of $\F_Q$ where each element of $\F_Q$ is identified with an element of $\{0,1\}^{\log Q}$ and arithmetic can be performed in time $\poly(\log Q)$. Fix such a representation of $\F_Q.$

Let $A\subseteq \F_Q$ be any subset of size $N_2$ (note that by (\ref{eq:des-Q-1}) we have $N_2\leq Q_1$ which is at most $Q$) and let $B\subseteq \F_Q$ be any fixed subset of size $Q_1$. Let $A_1\subseteq A$ be a set of size $\ell$ (note that by (\ref{eq:des-ell}) $\ell\leq (\log M)/10$ which is at most $N_2$ by assumption).  

Fix $N_1 = Q\cdot N_2$ and identify $[N_1]$ with the set $A\times \F_Q$ in an arbitrary way. Assume that $A = \{a_1,\ldots,a_{N_2}\}$ and $A_1 = \{a_1,\ldots,a_{\ell}\}.$ We define $\mc{P}$ to be the set of all polynomials $P\in \F_Q[x]$ of degree at most $\ell-1$ such that $P(a)\in B$ for each $a\in A_1.$

We are now ready to define the family $\mc{F}.$ For each $\vecb = (b_1,\ldots,b_\ell)\in A^\ell$, we define the polynomial $P_\vecb(x)$ to be the unique polynomial in $\mc{P}$ such that $P(a_i) = b_i$ for each $i\in [\ell]$ (note that $P$ is uniquely defined since any polynomial of degree at most $\ell-1$ can be specified by its evaluations at any $\ell$ distinct points). We add the set $S_{\vecb}\subseteq A\times \F_Q$ to $\mc{F},$ where $S_{\vecb}$ is defined by
\begin{equation}
\label{eq:des-S-b}
S_\vecb = \{(a_i,P_\vecb(a_i))\ |\ i\in [N_2]\}.
\end{equation}
In words, $S_{\vecb}$ is the graph of the polynomial $P_\vecb$ restricted to the domain $A.$ 

We now show that $\mc{F}$ is indeed a $(N_1,M,N_2,\ell,\gamma,\eta)$-combinatorial design.
\begin{enumerate}
\item For distinct $\vecb,\vecb'\in B^\ell$, the sets $|S_\vecb\cap S_{\vecb'}|\leq \ell$ since the graphs of the distinct polynomials $P_\vecb$ and $P_{\vecb'}$ can intersect at at most $\ell-1$ points. In particular, we have $|\mc{F}| = |B|^\ell = Q_1^\ell \geq M$ (by (\ref{eq:des-Q-1})). Further, we also have that any pair of distinct sets in $\mc{F}$ have an intersection of at most $\ell.$ This proves properties 1 and 3 in Definition~\ref{def:design} above.
\item Each set in $\mc{F}$ has size $N_2$, since it is of the form $\{(a_i,b_i)\ |\ i\in [N_2]\}$ for some choice of $b_1,\ldots,b_{N_2}\in \F_Q.$ This proves property 2.
\item We now consider property 4. Fix any $(a,b) \in A\times \F_Q.$  If $(a,b)\in S\in \mc{F},$ then $S$ is the graph of a polynomial $P\in \mc{P}$ such that $P(a) = b$. To uniquely specify such a polynomial, it suffices to provide its evaluations at any $\ell-1$ other points. We choose the evaluation points to be a fixed set $A_1'\subseteq A_1\setminus\{a\}$ of size $\ell-1.$ Since $P(a')\in B$ for each $a'\in A_1',$ there are at most $|B|^{\ell-1} = Q_1^{\ell-1}$ many choices for these evaluations, which yields the same bound for the number of sets $S\in \mc{F}$ such that $(a,b)\in S$. 

Hence, we have 
\[
|\{S\in \mc{F}\ |\ S\ni (a,b)\}| \leq \frac{Q_1^{\ell}}{Q_1} = \frac{\left(\lceil M^{1/\ell}\rceil\right)^{\ell}}{Q_1}\leq  \frac{\eta}{N_2}\cdot M \leq \gamma\cdot M
\]
where the final inequality follows from our assumption that $\gamma\geq \eta/N_2$, and the second last inequality uses the fact that $Q_1\geq 10\cdot N_2/\eta$ and
\begin{equation}
\label{eq:des-ceil}
\left(\lceil M^{1/\ell}\rceil\right)^{\ell} \leq \left( M^{1/\ell} + 1\right)^{\ell} = M\cdot \left(1+\frac{1}{M^{1/\ell}}\right)^\ell \leq M\cdot \left(1+\frac{1}{\ell}\right)^\ell \leq 3M
\end{equation}
(using $\ell^\ell\leq M$, which follows from (\ref{eq:des-ell}), for the second-last inequality).

\item For property 5, we use a similar argument to property 4. Fix distinct sets $S,T\in \mc{F}$ such that $|S\cap T| = i.$ The sets $S$ and $T$ are graphs of distinct polynomials $P_1,P_2\in \mc{P}$ respectively that agree in $i$ places. We bound the number of such pairs of polynomials. 

The number of choices for $S$, and hence $P_1$, is exactly $|\mc{F}| = Q_1^\ell.$ Given $P_1$, we can specify $P_2$ as follows. 
\begin{itemize}
\item Specify a set $A'\subseteq A$ of size $i$ such that $P_1$ and $P_2$ agree on $A'.$ This gives the evaluation of $P_2$ at $i$ points. Further, the number of such $A'$ is $\binom{N_2}{i}\leq N_2^i.$
\item Specify the evaluation of $P_2$ at the first $\ell-i$ points from $A_1\setminus A'.$ This gives the evaluation of $P_2$ at $\ell-i$ points outside $A'$ and hence specifies $P_2$ exactly. The number of possible evaluations is $|B|^{\ell-i} = Q_1^{\ell-i.}$ 
\end{itemize}
Hence, the number of pairs of polynomials $(P_1,P_2)$ whose graphs agree at $i$ points is at most 
\[
Q_1^\ell\cdot N_2^i \cdot Q_1^{\ell-i} = Q_1^{2\ell}\cdot \left(\frac{N_2}{Q_1}\right)^i = \left(\lceil M^{1/\ell}\rceil\right)^{2\ell}\cdot \left(\frac{N_2}{Q_1}\right)^i\leq 9M^2\cdot \left(\frac{N_2}{Q_1}\right)^i \leq 9M^2\cdot \left(\frac{\eta}{10}\right)^i\leq \eta^i\cdot M^2
\]
where for the first inequality we have used (\ref{eq:des-ceil}) and the second inequality follows from the fact that $Q_1\geq 10\cdot N_2/\eta.$
\end{enumerate}

We have thus shown that $\mc{F}$ is indeed a $(N_1,M,N_2,\ell,\gamma,\eta)$-combinatorial design as required. 

The explicitness of the design follows easily from its definition.
\end{proof}

\section{Proof of Theorem~\ref{thm:main-intro} for $d=2$}
\label{sec:ubd-d=2}
In this section we will present the proof of Theorem~\ref{thm:main-intro} for the special case when $d=2$. The proof is quite similar to the case for general $d$, but is somewhat simpler (as the construction of the $\AC^0$ formulas is simpler) and illustrates many of the ideas of the general proof.

Throughout this section, let $\delta$ be a parameter going to $0$. 

We start by stating a result of O'Donnell and Wimmer~\cite{OW}, who gave a depth-$2$ $\AC^0$ formula for solving the $\delta$-coin problem. Formally, they defined a depth-$2$ circuit as follows. 

Let $C_0 \geq 10$. Let $m = 1/\delta\cdot C_0$, $f_1=m$, $f_2=2^m$, where $C=2^{C_0}$. The formula $F_2$ is defined as follows:
\begin{itemize}
\item At layer $1$ we have AND gates and the fan-in of each AND gate is $f_1$. 
\item At layer $2$ we have a single OR gate with fan-in $f_2$. 
\item Each leaf is labelled with distinct variables. 
\end{itemize}

For $F_2$ defined as above~\cite{OW} proved the following theorem.
\begin{theorem}[\cite{OW}]
\label{thm:ow}
Let $N =f_1\cdot f_2$. For each $b \in \{0,1\}$ $$\prob{\bm{x} \sim \mu_b^N}{F_2(\bm{x}) = 1-b} \leq 0.05,$$ i.e. specifically $F_2$ solves the $\delta$-coin problem. 
\end{theorem}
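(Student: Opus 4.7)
Since $F_2$ is a read-once depth-$2$ formula, the $2^m$ AND subformulas are variable-disjoint, so under any product distribution $\mu_b^N$ the events that they evaluate to $1$ are mutually independent. The plan is therefore reduced to (i) computing the acceptance probability $p_b$ of a single AND gate of fan-in $m$ under $\mu_b$, and (ii) plugging these into the exact formula $\Pr[F_2(\bm{x})=0]=(1-p_b)^{2^m}$ and checking the two error bounds.

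For step (i), a single AND of $m$ fresh inputs accepts with probability $p_b=((1+(-1)^{1-b}\delta)/2)^m=(1\pm\delta)^m/2^m$. Using the two-sided estimate $\exp(x-x^2)\le 1+x\le \exp(x)$ from Fact~\ref{fac:exp} with $x=\pm\delta$, together with $m\delta=C_0$, I would derive
\[
\frac{e^{-C_0(1+\delta)}}{2^m}\ \le\ p_0\ \le\ \frac{e^{-C_0}}{2^m},\qquad \frac{e^{C_0(1-\delta)}}{2^m}\ \le\ p_1\ \le\ \frac{e^{C_0}}{2^m}.
\]
So for all sufficiently small $\delta$ (which we are free to assume, since $\delta\to 0$) we have $p_0\cdot 2^m\le e^{-C_0}\le e^{-10}$ and $p_1\cdot 2^m\ge e^{C_0/2}\ge e^5$.

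For step (ii), under $\mu_0^N$ I use Bernoulli's inequality $(1-p_0)^{2^m}\ge 1-p_0\cdot 2^m$, which gives
\[
\Pr_{\bm{x}\sim \mu_0^N}[F_2(\bm{x})=1]\ =\ 1-(1-p_0)^{2^m}\ \le\ p_0\cdot 2^m\ \le\ e^{-C_0}\ \le\ e^{-10}\ <\ 0.05.
\]
Under $\mu_1^N$ I use $(1-p_1)^{2^m}\le \exp(-p_1\cdot 2^m)$ (from inequality (c) of Fact~\ref{fac:exp} applied to $x=-p_1$), which gives
\[
\Pr_{\bm{x}\sim \mu_1^N}[F_2(\bm{x})=0]\ =\ (1-p_1)^{2^m}\ \le\ \exp(-p_1\cdot 2^m)\ \le\ \exp(-e^{C_0/2})\ \le\ \exp(-e^{5})\ <\ 0.05.
\]
Both desired inequalities hold, which proves $F_2$ solves the $\delta$-coin problem. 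There is essentially no obstacle in this argument; the only delicate point is being careful with the second-order $\delta^2$ terms when estimating $(1\pm\delta)^m$, but the slack of a factor of $e^{C_0/2}$ in the final exponent makes this easy to absorb as long as $\delta$ is chosen small enough that $C_0\delta\le 1/2$. This base case also serves as a warm-up: in the general-$d$ case the ANDs at higher layers are no longer variable-disjoint, and Janson's inequality (Theorem~\ref{thm:janson}) will be needed to replace the exact product in step (ii) by a multiplicative $\exp(2\Delta)$ approximation.
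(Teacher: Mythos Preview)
Your proof is correct. The paper does not actually prove Theorem~\ref{thm:ow} itself---it is simply cited from~\cite{OW}---but your argument is exactly the standard one, and it mirrors the paper's subsequent analysis of the derandomized formula $\Gamma_2$ in Section~\ref{sec:ubd-d=2}: there the $b=0$ case is handled by the same union bound $\Pr[\Gamma_2=1]\le f_2\cdot p_1^{(0)}=(1-\delta)^m\le e^{-C_0}$, and the $b=1$ case uses Janson's inequality to obtain $\Pr[\Gamma_2=0]\le \exp(-p_1^{(1)}f_2+2\Delta)$, which in the read-once setting of $F_2$ collapses (since $\Delta=0$) to precisely your bound $(1-p_1)^{2^m}\le\exp(-p_1\cdot 2^m)$.
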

Here, the number of inputs is $N$ and the size of $F_2$ is also $O(N)$. We now give a construction of an explicit depth $2$ formula of the same size as in the theorem above which solves the $\delta$-coin problem, but using far fewer inputs.  We achieve this by an application of the Janson's inequality coupled with our combinatorial design. 

\srikanth{Need to hyphenate throughout. "depth 2 formula" should be "depth-2 formula" throughout.}

We now describe the construction of such a depth $2$ formula $\Gamma_2$. Fix $m,f_1,f_2$ as above. Define parameters $\gamma=1$, $\eta= 1/(16\cdot (\frac{1+\delta}{2})^{m} \cdot f_2) = {1}/({16\cdot (1+\delta)^m})$.  Let $\mc{F}$ be an $(n, f_2, f_1, \ell, \gamma, \eta)$-design  obtained using Lemma~\ref{lem:design}. We are now ready to define $\Gamma_2$. 
\begin{itemize}
\item Let $S_1, S_2, \ldots, S_{f_2} \in \binom{[n]}{f_1}$ be the first $f_2$ sets in the $(n, f_2, f_1, \ell, \gamma, \eta)$-design $\mc{F}$. At layer $1$ we have $f_2$ many AND gates, say $\Gamma_1^1, \ldots, \Gamma_1^{f_2}$, with fan-in $f_1$ each. For each $i \in [f_2]$, the inputs of the gate $\Gamma_1^i$ are the variables indexed by the set $S_i$. 
\item At layer $2$ we have a single gate, which is an OR of $\Gamma_1^1, \ldots, \Gamma_1^{f_2}$. 
\end{itemize}

With this definition of $\Gamma_2$, we now prove Theorem~\ref{thm:main-intro}. From the definition of the parameters, it can be checked that $\eta=\Theta(1)$. Therefore, we get $\ell = \Theta(m/\log m)$ and $Q=O(f_1/\eta)^{1+1/\ell} = O((1/\delta))$.  Therefore, the number of inputs in the formula is $N = O(Q \cdot f_1) = O(1/\delta^2)$ and the size of the formula is $O(f_1\cdot f_2) = \exp(O(1/\delta))$. 

The only thing we need to prove now is that
for any $b \in \{0,1\}$, $\prob{\bm{x}\sim \mu_b^{n}}{\Gamma_2(\bm{x}) = 1-b}\leq 0.1$. 
Let $q^{(0)} = (1-\delta)/2$ and $q^{(1)} = (1+\delta)/2$. Let $p_1^{(0)} = \left(\frac{1-\delta}{2}\right)^m$ and $p_1^{(1)} = \left(\frac{1+\delta}{2}\right)^m$. Note that $p_1^{(b)}$ ($b\in \{0,1\}$) is the probability that each subformula $\Gamma_1^i$ accepts on a random input $\bm{x}$ chosen from the distribution $\mu_b^n.$

Let $b=0$. In this case 
\begin{align*}
\prob{\bm{x}\sim \mu_0^n}{\Gamma_2(\bm{x}) = 1} & = \prob{\bm{x}\sim \mu_0^n}{\exists i \in [f_2]: \Gamma_1^i(\bm{x}) = 1} \\
& \leq f_2 \cdot p_1^{(0)} = (1-\delta)^m \leq \exp(-C_0) \leq 0.1.
\end{align*}

Here the first inequality is due to a union bound. The other inequalities are obtained by simple substitutions of the parameters and using (\ref{eq:exp-ineq}). 

Now consider the $b=1$ case. Here $\prob{\bm{x}\sim \mu_1^n}{\Gamma_2(\bm{x}) = 0}  = \prob{\bm{x}\sim \mu_1^n}{\forall i \in [f_2]: \Gamma_1^i(\bm{x}) = 0}$. Now we would like to bound this using  Janson's inequality (Theorem~\ref{thm:janson}).
Applying Janson's inequality, we get

\begin{align}
\prob{\bm{x}\sim \mu_1^n}{\Gamma_2(\bm{x}) = 0}  & = \prob{\bm{x}\sim \mu_1^n}{\forall i \in [f_2]: \Gamma_1^i(\bm{x}) = 0}\nonumber \\
& \leq \prod_{i \in [f_2]} \prob{\bm{x}\sim \mu_1^n}{\Gamma_1^i(\bm{x}) = 0} \cdot \exp(2\Delta)\nonumber\\
& \leq (1-p_1^{(1)})^{f_2} \cdot \exp(2\Delta) \nonumber \\
& \leq \exp(-p_1^{(1)}f_2 + 2\Delta), \label{eq:janson-d=2}
\end{align}
where 
\[\Delta  = \sum_{\substack{j < k:\\ \Vars(\Gamma_1^j)\cap \Vars(\Gamma_1^k)\neq \emptyset}} \prob{\bm{x}\sim \mu_1^n}{(\Gamma_1^j(\bm{x}) =1) \wedge (\Gamma_1^k(\bm{x}) =1)}.\]
 We will now obtain a bound on $\Delta$. 
\begin{align*}
\Delta & = \sum_{\substack{j < k:\\ \Vars(\Gamma_1^j)\cap \Vars(\Gamma_1^k)\neq \emptyset}} \prob{\bm{x}\sim \mu_1^n}{(\Gamma_1^j(\bm{x}) =1) \wedge (\Gamma_1^k(\bm{x}) =1)}. \\
& = \sum_{r=1}^\ell \sum_{\substack{j < k:\\ |\Vars(\Gamma_1^j)\cap \Vars(\Gamma_1^k)| = r}} \prob{\bm{x}\sim \mu_1^n}{(\Gamma_1^j(\bm{x}) =1) \wedge (\Gamma_1^k(\bm{x}) =1)}
\end{align*}

As $\Gamma_1^j$ and $\Gamma_1^k$ are both ANDs of size $m$, $\Gamma_1^j \wedge \Gamma_1^k$ is an AND of size $(2m-|\Vars(\Gamma_1^j)\cap \Vars(\Gamma_1^k)|)$. Therefore, we get 

\begin{align*}
\Delta& = \sum_{r=1}^\ell \sum_{\substack{j < k:\\ |\Vars(\Gamma_1^j)\cap \Vars(\Gamma_1^k)| = r}} \prob{\bm{x}\sim \mu_1^n}{((\Gamma_1^j \wedge \Gamma_1^k)(\bm{x}) =1)} \\
& = \sum_{r=1}^\ell \sum_{\substack{j < k:\\ |\Vars(\Gamma_1^j)\cap \Vars(\Gamma_1^k)| = r}} \left(\frac{1+\delta}{2}\right)^{2m-r} \\
& = \sum_{r=1}^\ell \sum_{\substack{j < k:\\ |\Vars(\Gamma_1^j)\cap \Vars(\Gamma_1^k)| = r}} \frac{(p_1^{(1)})^2}{((1+\delta)/2)^r} \\
& = \sum_{r=1}^\ell \frac{(p_1^{(1)})^2}{(q^{(1)})^r} \cdot |\{(j,k) \mid j < k \text{ and } |\Vars(\Gamma_1^j)\cap \Vars(\Gamma_1^k)| = r\}| \\
\end{align*}

From the construction of the formula and the combinatorial design $\mc{F}$, we know that $|\{(j,k) \mid j < k \text{ and } |\Vars(\Gamma_1^j)\cap \Vars(\Gamma_1^k)| = r\}| \leq \eta^r f_2^2$. We can also bound $1/q^{(1)}$ by a small constant, say $3$. 

Therefore, we can simplify the above equation as follows:
\begin{align}
\Delta & \leq \sum_{r=1}^\ell {(p_1^{(1)})^2} \cdot 3^r  \cdot \eta^r f_2^2 \nonumber \\
& = (p_1^{(1)})^2 \cdot f_2^2 \sum_{r=1}^\ell 3^r  \cdot \eta^r \nonumber \\
& \leq (p_1^{(1)})^2 \cdot f_2^2 \cdot 4 \cdot \eta \label{eq:delta-bound}
\end{align}
using the fact that $3\eta \leq 1/4$ as $\eta \leq 1/16.$

Now, by using our setting of $\eta = 1/(16 \cdot p_1^{(1)}\cdot f_2)$ in (\ref{eq:delta-bound}), we get $\Delta \leq p_1^{(1)}f_2/4$. Using this value of $\Delta$ in (\ref{eq:janson-d=2}), we get $\prob{\bm{x}\sim \mu_1^n}{\Gamma_2(\bm{x}) = 0}  \leq \exp(-\frac{p_1^{(1)}\cdot f_2}{2}) \leq 0.1$, by our choice of parameters. 
This completes the proof of Theorem~\ref{thm:main-intro} for $d=2$.

\section{Proof of Theorem~\ref{thm:main-intro} for $d\geq 3$}
\label{sec:ubd}

Throughout this section, fix a constant depth $d\geq 3$ and a parameter $\delta\in (0,1).$ The parameter $\delta$ is assumed to be asymptotically converging to $0$. 

We also assume the notation from Section~\ref{sec:amano}. 

\subsection{Definition of the formula $\Gamma_d$}
\label{sec:gamma-d}

The formula $\Gamma_d$ is an alternating monotone depth-$d$ formula made up of AND and OR gates. The structure of the formula and the labels of the gates are the same as in the formula $F_d$ defined in Section~\ref{sec:amano}. However, the leaves are labelled with only $\poly(m)$ distinct variables.

We now proceed to the formal definition. We iteratively define a sequence of formulas $\Gamma_1,\ldots,\Gamma_d$ (where $\Gamma_i$ has depth $i$) as follows. Define the parameters $\gamma$ and $\eta$ by 
\begin{equation}
\label{eq:def-gam-eta}
\gamma = \frac{1}{m^3} \text{  and   } \eta = \frac{1}{m^{10d}}.
\end{equation}

\begin{itemize}
\item $\Gamma_1$ is just an AND of $n_1=m$ distinct variables. 
\item Recall that for $i\geq 2$, any gate at level $i$ in the formula $F_d$ has fan-in $f_i$ for $f_i = \exp(\Theta(m)).$ For each $i\in \{2,\ldots,d\}$, define $n_i$ so that by Lemma~\ref{lem:design}, we have an explicit $(n_i,f_i,n_{i-1},\ell,\gamma,\eta)$-combinatorial design $\mc{F}_i$ where $\ell = \Theta(\log f_i/\log(n_{i-1}/\eta)).$ Note that $n_i = O((n_{i-1})^{2+1/\ell}/\eta^{1+1/\ell}) \leq n_{i-1}^3/\eta^2$.

The formula $\Gamma_i$ is defined on a set $X$ of $n_i$ variables by taking the OR/AND (depending on whether $i$ is even or odd respectively) of $f_i$ copies of $\Gamma_{i-1}$, each defined on a distinct subset $Y\subseteq X$ of $n_{i-1}$ variables obtained from the combinatorial design $\mc{F}_i.$ 

Formally, let $S_1,\ldots,S_{f_i}\in \binom{[n_i]}{n_{i-1}}$ be the first $f_i$ many sets in the design $\mc{F}_i$ (in lexicographic order, say). Identifying $[n_i]$ with the variable set $X$ of $\Gamma_i,$ we obtain corresponding subsets $Y_1,\ldots,Y_{f_i}$ of $X.$ The formula $\Gamma_i$ is an OR/AND of $f_i$ many subformulas $\Gamma_i^1,\ldots,\Gamma_i^{f_i}$ where the $j$th subformula $\Gamma_i^j$ is a copy of $\Gamma_{i-1}$ with variable set $Y_j.$
\end{itemize}

\begin{observation}
\label{obs:gamma-d}
The size of $\Gamma_d$ is $\exp(O(dm)).$ The number of variables appearing in $\Gamma_d$ is $n_d = m^{2^{O(d)}}.$
\end{observation}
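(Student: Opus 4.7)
The plan is to verify both bounds by straightforward induction/recursion, using only the structural description of $\Gamma_d$ and the size parameters from Lemma~\ref{lem:design} and Section~\ref{sec:amano}.

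For the size bound, observe that $\Gamma_d$ has \emph{exactly the same} tree structure as Amano's formula $F_d$: at level $i$ above the inputs each gate has fan-in $f_i$, and these fan-ins are inherited unchanged. Hence the total number of internal gates plus leaf instances is at most $2\prod_{i=1}^{d} f_i$. Since $f_1 = m$ and $f_i = \exp(\Theta(m))$ for every $i \geq 2$, this product is $\exp(O(dm))$, which gives the claimed size bound.

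For the variable count, I would set up the one-step recurrence implied by the construction. By Lemma~\ref{lem:design} together with the choice $\eta = 1/m^{10d}$, the parameter $n_i$ (the number of variables of $\Gamma_i$) satisfies
\[
n_i \;\leq\; \frac{n_{i-1}^{3}}{\eta^{2}} \;=\; n_{i-1}^{3} \cdot m^{20d},
\]
with base case $n_1 = m$. Taking $\log_m$ of both sides and writing $a_i := \log_m n_i$, this becomes the linear recurrence $a_i \leq 3 a_{i-1} + 20d$ with $a_1 = 1$. Unrolling gives $a_i \leq 3^{i-1} + 10d(3^{i-1}-1) \leq O(d \cdot 3^i)$, hence $a_d \leq O(d \cdot 3^d) = 2^{O(d)}$, which yields $n_d = m^{2^{O(d)}}$ as asserted.

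There is no real obstacle here; the only point that requires a moment's care is tracking the $\eta^{-2}$ factor through the recurrence and confirming that the resulting doubly-exponential-looking constant in the exponent is still $2^{O(d)}$. Since $d$ is a fixed constant, both conclusions follow immediately once the recurrence is solved.
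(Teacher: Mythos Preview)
Your proof is correct and is exactly the argument the paper has in mind: the size bound follows because $\Gamma_d$ inherits the fan-in structure of $F_d$ (so its size is $O(\prod_i f_i)=\exp(O(dm))$), and the variable bound follows by unrolling the recurrence $n_i\le n_{i-1}^3/\eta^2$ that the paper records just before the observation. The paper gives no separate proof of the observation, so what you have written is the intended justification.
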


\paragraph{Explicitness of the formula $\Gamma_d.$} The structure of the formula is determined completely by the parameter $\delta.$ Thus to argue that the formula $\Gamma_d$ is explicit, it suffices to show that the labels of the input gates can be computed efficiently. Note that the inputs are in $1$-$1$ correspondence with the set $[f_d]\times [f_{d-1}]\times \cdots [f_2]\times [f_1].$

Let $\Gamma_i$ be any subformula of $\Gamma_d$ of depth $i.$ If $i=1$, then $\Gamma_i$ is simply an AND of $m=f_1$ variables and we identify its variable set with $[f_1]$. When $i > 1$, by the properties of the design constructed in Lemma~\ref{lem:design}, we see that the set $\Vars(\Gamma_i)$ is in a natural $1$-$1$ correspondence with the set $\Vars(\Gamma_{i-1})\times [Q_i]$ where $\Gamma_{i-1}$ is any subformula of depth $i-1$ and $Q_i = n_i/n_{i-1}$. Each subformula $\Gamma_i^j$ ($j\in [f_i]$) of depth $i-1$ in $\Gamma_i$ has as its variable set a set of the form $\{(x,k_x)\ |\ x\in \Vars(\Gamma_{i-1}), k_x\in [Q_i]\}.$

Further, by the explicitness properties of the design constructed in Lemma~\ref{lem:design}, we see that given any $x\in \Vars(\Gamma_{i-1})$ and $j\in [f_i]$, we can find in $\poly(\log(f_i)) \leq \poly(m)$ time the variable $(x,k)\in \Vars(\Gamma_i)$ that belongs to $\Vars(\Gamma_i^j).$ Equivalently, given a leaf $\ell = (j_i,\ldots,j_1)\in [f_i]\times \cdots\times [f_1]$ of $\Gamma_i$ and the variable $x\in \Vars(\Gamma_{i-1})$ corresponding to the leaf $(j_{i-1},\ldots,j_1)$ in $\Gamma_{i-1},$ we can find the variable labelling $\ell$ in $\poly(m)$ time. Using this algorithm and a recursive procedure to find the variable $x$, we see that the variable labelling the leaf $\ell$ can be found in $\poly(m)$ time. In particular, given a leaf of $\Gamma_d,$ the variable labelling it can be found in $\poly(m)$ time.

Thus, the formula $\Gamma_d$ is explicit.


\subsection{Analysis of $\Gamma_d$}
\label{sec:analysis-gamma-d}

In this section, we will show that $\Gamma_d$ distinguishes between the distributions $\mu_0^{n_d}$ and $\mu_1^{n_d}$ as defined in Definition~\ref{def:coinproblem}. For brevity, we use $n$ to denote $n_d$.  

Fix any subformula $\Gamma$ of $\Gamma_d$ and $b\in \{0,1\}$. Assume $\Gamma$ has depth $i\in [d]$ and $\beta \in \{0,1\}$ is such that $i\equiv \beta\pmod{2}.$ We define $p_{\Gamma}^{(b)} = \prob{\bm{x}\sim \mu_b^n}{\Gamma(\bm{x}) = \beta}.$ Assume that $\Gamma$ is an OR/AND of depth-$(i-1)$ subformulas $\Gamma^1,\ldots,\Gamma^f.$ We define
\begin{equation}
\label{eq:delta-gamma}
\Delta_{\Gamma}^{(b)} = \sum_{\substack{j < k:\\ \Vars(\Gamma^j)\cap\\ \Vars(\Gamma^k)\neq \emptyset}} \prob{\bm{x}\sim \mu_b^n}{(\Gamma^j(\bm{x}) =1-\beta) \wedge (\Gamma^k(\bm{x}) =1-\beta)}. 
\end{equation}

The following lemma is the main technical lemma of this section. Along with Theorem~\ref{thm:amano}, it easily implies Theorem~\ref{thm:main-intro} (as we show below).

\begin{lemma}
\label{lem:main}
Let $\Gamma_d$ be as constructed above. Then for each $i\in \{2,\ldots,d\}$, each $b\in \{0,1\}$, and any subformula $\Gamma$ of depth $i$, we have the following. 
\begin{enumerate}
\item $p_{\Gamma}^{(b)} \in [p_i^{(b)}(1-\eta\cdot (C_3m)^i), p_i^{(b)}(1+\eta\cdot (C_3m)^i)]$ where $C_3=1000.$
\item $\Delta_{\Gamma}^{(b)} \leq (C_4m)^2\cdot \eta$ where $C_4=100.$
\end{enumerate}
\end{lemma}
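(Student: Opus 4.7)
I would prove both items simultaneously by strong induction on the depth $i \geq 2$. The base case $i = 2$ is essentially the analysis already carried out in Section~\ref{sec:ubd-d=2}: each $\Gamma^j$ is an AND of $m$ distinct variables, so $p_{\Gamma^j}^{(b)}$ equals $p_1^{(b)}$ exactly, and $\Pr_{\bm{x}\sim\mu_b^n}[\Gamma^j(\bm{x}) = 1 \wedge \Gamma^k(\bm{x}) = 1] = (q^{(b)})^{2m - r}$ when the two share $r$ variables, giving both claims immediately. For the inductive step I will treat even $i$ (so $\Gamma$ is the $\bigvee$ of $f_i$ depth-$(i-1)$ subformulas $\Gamma^1, \ldots, \Gamma^{f_i}$, each a structural copy of $\Gamma_{i-1}$); the odd case is symmetric via the conjunction form of Janson's inequality (Remark~\ref{rem:janson}).

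I would prove part (2) first since part (1) depends on it. By property~5 of the design $\mc{F}_i$, the number of pairs $\{j, k\}$ with $|\Vars(\Gamma^j) \cap \Vars(\Gamma^k)| = r$ is at most $\eta^r f_i^2$. For such a pair, let $V$ denote the set of shared variables. A crucial feature of the Nisan--Wigderson-type design of Lemma~\ref{lem:design} is that each shared variable occupies the \emph{same} position inside both $\Gamma^j$ and $\Gamma^k$, so conditional on the values assigned to $V$ the two subformulas are independent and identically distributed copies of $\Gamma_{i-1}$. Writing $u(v) := \Pr_{\bm{x}\sim\mu_b^n}[\Gamma_{i-1}(\bm{x}) = 1 \mid V = v]$, we therefore have
\[
q_r := \Pr[\Gamma^j = 1 \wedge \Gamma^k = 1] = \avg{v}{u(v)^2} \leq p_{\Gamma_{i-1}}^{(b)} \cdot \max_v u(v).
\]
Bayes' rule gives $u(v) \leq p_{\Gamma_{i-1}}^{(b)}/\Pr[V = v]$, and $\Pr[V = v] \geq ((1-\delta)/2)^r$. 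Since $p_{\Gamma_{i-1}}^{(b)} \approx p_{i-1}^{(b)} = O(2^{-m})$ (by Theorem~\ref{thm:amano} and the inductive bound) while $r \leq \ell \ll m$, we have $p_{\Gamma_{i-1}}^{(b)} \cdot (2+o(1))^r \leq 1$, hence $q_r \leq (p_{\Gamma_{i-1}}^{(b)})^2 \cdot (2+o(1))^r$. Summing and invoking Observation~\ref{obs:pifi} ($p_{i-1}^{(b)} f_i \leq 50m$),
\[
\Delta_\Gamma^{(b)} \leq (p_{\Gamma_{i-1}}^{(b)} f_i)^2 \sum_{r \geq 1} ((2+o(1))\eta)^r \leq O(m^2 \eta),
\]
which fits inside $(C_4 m)^2 \eta$ with $C_4 = 100$, using that $\eta = m^{-10d}$ makes the series converge.

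For part (1), Janson's inequality (Theorem~\ref{thm:janson}) applied at the root gives
\[
(1 - p_{\Gamma_{i-1}}^{(b)})^{f_i} \leq p_\Gamma^{(b)} \leq (1 - p_{\Gamma_{i-1}}^{(b)})^{f_i} \cdot \exp(2\Delta_\Gamma^{(b)}),
\]
where I use that under $\mu_b^n$ all $\Gamma^j$ have the same acceptance probability $p_{\Gamma_{i-1}}^{(b)}$. Writing $p_{\Gamma_{i-1}}^{(b)} = p_{i-1}^{(b)}(1+\zeta)$ with $|\zeta| \leq \eta (C_3 m)^{i-1}$ from induction, a routine computation that compares $(1 - p_{i-1}^{(b)}(1+\zeta))^{f_i}$ to $(1 - p_{i-1}^{(b)})^{f_i} = p_i^{(b)}$ using Fact~\ref{fac:exp} together with Observation~\ref{obs:pifi} shows
\[
(1 - p_{\Gamma_{i-1}}^{(b)})^{f_i} \in p_i^{(b)} \cdot \left(1 \pm O\!\left(\eta m (C_3 m)^{i-1}\right)\right).
\]
Combining this with $\exp(2\Delta_\Gamma^{(b)}) \in [1, 1 + O(m^2 \eta)]$ from part (2), and with the slack provided by $C_3 = 1000$ (so that the $O(\eta m (C_3 m)^{i-1})$ term can be absorbed into $\eta (C_3 m)^i / 2$), the total multiplicative error fits inside $\eta (C_3 m)^i$, closing the induction.

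The main obstacle, and where the design's properties are essential, is the $q_r$ bound at depths $i > 2$, since $\Gamma^j$ is no longer a simple AND and $q_r$ cannot be computed in closed form. The positional alignment in the design yields the clean factorization $q_r = \avg{v}{u(v)^2}$, which combined with the Bayes bound produces the geometric $2^r$ growth that the design's $\eta^r$-decay (property~5 of Definition~\ref{def:design}) was chosen precisely to dominate.
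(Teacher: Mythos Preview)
Your overall architecture (induction on $i$, part~(2) first, then part~(1) via Janson) matches the paper, and your treatment of the base case and of part~(1) is essentially the same as the paper's. Where you genuinely diverge is in the inductive bound on $\Delta_\Gamma^{(b)}$.

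The paper bounds $\Pr[\Gamma^j=1-\beta\wedge\Gamma^k=1-\beta]$ by a decoupling argument: it writes this as $\Pr[\Gamma^j=1-\beta]\cdot\Pr[\Gamma^k=1-\beta\mid\Gamma^j=1-\beta]$, then forms $\Phi^k$ from $\Gamma^k$ by deleting every depth-$(i-2)$ subformula that contains a variable of $\Gamma^j$. Property~4 of the design (the parameter $\gamma$) bounds how many subformulas are deleted by $\gamma f_{i-1}\ell$, and an inner application of Janson to both $\Gamma^k$ and $\Phi^k$ shows $\Pr[\Phi^k=1-\beta]\leq \Pr[\Gamma^k=1-\beta](1+o(1))$, yielding $t_r\leq (p_{i-1}^{(b)})^2(1+o(1))$ uniformly in $r$. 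You instead exploit the \emph{positional structure} of the design in Lemma~\ref{lem:design}: since every set is the graph of a function $[n_{i-1}]\to[Q_i]$, a shared variable sits at the same input position of $\Gamma_{i-1}$ in both $\Gamma^j$ and $\Gamma^k$, so conditioning on those $r$ variables leaves two independent, identically distributed restrictions of $\Gamma_{i-1}$; the Bayes bound $u(v)\leq p_{\Gamma_{i-1}}^{(b)}/\Pr[V=v]$ then gives $q_r\leq (p_{\Gamma_{i-1}}^{(b)})^2(2+o(1))^r$.

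Both arguments are correct and give the same final estimate $\Delta_\Gamma^{(b)}=O(m^2\eta)$. Your route is shorter, avoids the inner Janson step and, notably, never uses property~4 (the parameter $\gamma$) at all --- the paper's argument, by contrast, works for any abstract design satisfying Definition~\ref{def:design} and does not rely on the graph-of-a-function structure specific to Lemma~\ref{lem:design}. The extra $(2+o(1))^r$ factor you incur relative to the paper's uniform-in-$r$ bound is harmless, since it is swallowed by the geometric decay $\eta^r$ from property~5.
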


Assuming the above lemma, we first prove Theorem~\ref{thm:main-intro}.

\begin{proof}[Proof of Theorem~\ref{thm:main-intro}]
We use the explicit formula $\Gamma_d$ described above. By Lemma~\ref{lem:main} applied in the case that $i=d$, it follows that for each $b\in \{0,1\}$
\[
|\prob{\bm{x}\sim \mu_b^n}{\Gamma_d(\bm{x}) = 1-b} - \prob{\bm{x}\sim \mu_b^n}{F_d(\bm{x}) = 1-b}| = |p_{\Gamma_d}^{(b)}-p_d^{(b)}| \leq p^{(b)}_d \cdot \eta (C_3 m)^d = o(1).
\]
In particular, using Theorem~\ref{thm:amano}, it follows that $\prob{\bm{x}\sim \mu_b^n}{\Gamma_d(\bm{x}) = 1-b}\leq 0.1$ and hence $\Gamma_d$ solves the $\delta$-coin problem. The sample complexity of $\Gamma_d$ is $m^{2^{O(d)}} = (1/\delta)^{2^{O(d)}}$ by construction.
\end{proof}


\begin{proof}[Proof of Lemma~\ref{lem:main}]
We prove the lemma by induction on $i$. The base case is when $i=2$. This proof is quite similar to the proof of the $d=2$ case from Section~\ref{sec:ubd-d=2}.

\paragraph{Base case, i.e. $i=2$:} Recall that for $i=2$, $\Gamma$ is an OR of $f_2$-many subformulas $\Gamma^1, \Gamma^2, \ldots, \Gamma^{f_2}$, where each $\Gamma^j$ is an AND of distinct set of variables. 
Therefore, we have that $p_{\Gamma^j}^{(b)}$ is the same as in the case of Amano's proof, i.e. $p_{\Gamma^j}^{(b)}  = p_1^{(b)}$. Recall that $p_1^{(b)}$ is equal to $(\frac{1-\delta}{2})^m$ if $b=0$ and it is equal to $(\frac{1+\delta}{2})^m$ if $b=1$. Let $q^{(0)}$ ($q^{(1)}$) denote $\frac{1-\delta}{2}$ (respectively, $\frac{1+\delta}{2}$). 


\begin{align*}
\Delta_{\Gamma}^{(b)} & = \sum_{\substack{j < k:\\ \Vars(\Gamma^j)\cap \Vars(\Gamma^k)\neq \emptyset}} \prob{\bm{x}\sim \mu_b^n}{(\Gamma^j(\bm{x}) =1) \wedge (\Gamma^k(\bm{x}) =1)}. \\
& = \sum_{r=1}^\ell \sum_{\substack{j < k:\\ |\Vars(\Gamma^j)\cap \Vars(\Gamma^k)| = r}} \prob{\bm{x}\sim \mu_b^n}{(\Gamma^j(\bm{x}) =1) \wedge (\Gamma^k(\bm{x}) =1)}
\end{align*}

As $\Gamma^j$ and $\Gamma^k$ are both ANDs of size $m$, $\Gamma^j \wedge \Gamma^k$ is an AND of size $(2m-|\Vars(\Gamma^j)\cap \Vars(\Gamma^k)|)$. Therefore, we get 

\begin{align*}
\Delta_{\Gamma}^{(b)} & = \sum_{r=1}^\ell \sum_{\substack{j < k:\\ |\Vars(\Gamma^j)\cap \Vars(\Gamma^k)| = r}} \prob{\bm{x}\sim \mu_b^n}{((\Gamma^j \wedge \Gamma^k)(\bm{x}) =1)} \\
& = \sum_{r=1}^\ell \sum_{\substack{j < k:\\ |\Vars(\Gamma^j)\cap \Vars(\Gamma^k)| = r}} (q^{(b)})^{2m-r} \\
& = \sum_{r=1}^\ell \sum_{\substack{j < k:\\ |\Vars(\Gamma^j)\cap \Vars(\Gamma^k)| = r}} \frac{(p_1^{(b)})^2}{(q^{(b)})^r} \\
& = \sum_{r=1}^\ell \frac{(p_1^{(b)})^2}{(q^{(b)})^r} \cdot |\{(j,k) \mid j < k \text{ and } |\Vars(\Gamma^j)\cap \Vars(\Gamma^k)| = r\}| \\
\end{align*}

From the construction of the formula, we know that $|\{(j,k) \mid j < k \text{ and } |\Vars(\Gamma^j)\cap \Vars(\Gamma^k)| = r\}| \leq \eta^r f_2^2$. We can also bound $1/q^{(b)}$ by a small constant, say $3$. 

Therefore, we can simplify the above equation as follows:
\begin{align*}
\Delta_{\Gamma}^{(b)} & \leq \sum_{r=1}^\ell {(p_1^{(b)})^2} \cdot 3^r  \cdot \eta^r f_2^2\\
& = (p_1^{(b)})^2 \cdot f_2^2 \sum_{r=1}^\ell 3^r  \cdot \eta^r \\
& \leq (p_1^{(b)})^2 \cdot f_2^2 \cdot 4 \cdot \eta \\
\end{align*}
The last inequality comes from summing up a geometric series. Now using Observation~\ref{obs:pifi} we get that $p_1^{(b)}\cdot f_2 \leq 50m$. Hence, we get 
$\Delta_{\Gamma}^{(b)}  \leq (p_1^{(b)})^2 \cdot f_2^2 \cdot 4 \cdot \eta  \leq (50m)^2 \cdot 4 \eta = (100m)^2 \cdot \eta$. This proves the bound on $\Delta_{\Gamma}^{(b)}$ in the base case. 

We now prove the bounds claimed for $p_\Gamma^{(b)}$ in the base case. When $i=2$, $\beta=0$, hence $p_{\Gamma}^{(b)} = \prob{\bm{x}\sim \mu_b^n}{\Gamma(\bm{x}) = 0}.$ By Janson's inequality (Theorem~\ref{thm:janson}), we get the following bounds on the value of $p_\Gamma^{(b)}$. 

$$\prod_{j=1}^{f_2} (1-p_{\Gamma^j}^{(b)}) \leq p_\Gamma^{(b)} \leq \prod_{j=1}^{f_2} (1-p_{\Gamma^j}^{(b)}) \cdot \exp(2\cdot \Delta_{\Gamma}^{(b)}).$$ 

Recall that $p_{\Gamma^j}^{(b)} = p_1^{(b)}$ as we are in the base case. Also, from Equation~(\ref{eq:pivspi-1}) we have that $(1-p_1^{(b)})^{f_2} = p_2^{(b)}$. Therefore, we get
\begin{align*}
p_2^{(b)}  \leq p_\Gamma^{(b)} & \leq p_2^{(b)} \cdot \exp(2\Delta_\Gamma^{(b)}) & \\
& \leq p_2^{(b)} \cdot (1+4 \cdot \Delta_\Gamma^{(b)})  & \text{Using (\ref{eq:exp-ineq}) (d)}\\
& \leq p_2^{(b)} \cdot (1+4 \cdot(C_4m)^2 \cdot \eta ) & \\
& \leq p_2^{(b)} \cdot (1+ (C_3m)^2 \cdot \eta ) & \\
\end{align*}
This finishes the proof of the base case. 

\paragraph{Inductive case, i.e. $i \geq 3$:} We now proceed to proving the inductive case. Assume that the statement holds for $(i-1)$. Let $\Gamma$ be a subformula at depth $i$ which is OR/AND of subformulas $\Gamma^1, \Gamma^2, \ldots, \Gamma^{f_i}$ each of depth $(i-1)$. From the definition of $\Delta_\Gamma^{(b)}$, we get the following:
\begin{align*}
\Delta_{\Gamma}^{(b)} & = \sum_{\substack{j < k:\\ \Vars(\Gamma^j)\cap \Vars(\Gamma^k)\neq \emptyset}} \prob{\bm{x}\sim \mu_b^n}{(\Gamma^j(\bm{x}) =1-\beta) \wedge (\Gamma^k(\bm{x}) =1-\beta)}. \\
& = \sum_{r=1}^\ell \sum_{\substack{j < k:\\ |\Vars(\Gamma^j)\cap \Vars(\Gamma^k)| = r}} \prob{\bm{x}\sim \mu_b^n}{(\Gamma^j(\bm{x}) =1-\beta) \wedge (\Gamma^k(\bm{x}) =1-\beta)}
\end{align*}

Let $t_r$ denote the maximum value of $\prob{\bm{x}\sim \mu_b^n}{(\Gamma^j(\bm{x}) =1-\beta) \wedge (\Gamma^k(\bm{x}) =1-\beta)}$, where the maximum is taken over $j < k$ such that  $|\Vars(\Gamma^j)\cap \Vars(\Gamma^k)| = r$. Then we get

\begin{align}
\Delta_{\Gamma}^{(b)} & \leq \sum_{r=1}^\ell t_r \cdot  |\{(j,k) \mid j < k \text{ and } |\Vars(\Gamma^j)\cap \Vars(\Gamma^k)| = r\}| \nonumber \\
 & \leq \sum_{r=1}^\ell t_r \cdot \eta^r \cdot f_i^2 \label{eq:tr}
\end{align}

Let us now bound $t_r$, which we will do by using the construction parameters and the inductive hypothesis. Fix any $j < k$. We have

\begin{equation}
\label{eq:GjANDGk}
\prob{\bm{x}\sim \mu_b^n}{(\Gamma^j(\bm{x}) =1-\beta) \wedge (\Gamma^k(\bm{x}) =1-\beta)} = \prob{\bm{x}\sim \mu_b^n}{\Gamma^j(\bm{x}) =1-\beta} \cdot \prob{\bm{x}\sim \mu_b^n}{(\Gamma^k(\bm{x}) =1-\beta) | (\Gamma^j(\bm{x}) =1-\beta)}.
\end{equation}

As $\Gamma^j$ is a formula of depth $i-1$ and $i-1\equiv (1-\beta)\pmod{2}$, using the induction hypothesis, we can upper bound the quantity $\prob{\bm{x}\sim \mu_b^n}{\Gamma^j(\bm{x}) =1-\beta} $. We get
\begin{equation}
\label{eq:Gj}
\prob{\bm{x}\sim \mu_b^n}{\Gamma^j(\bm{x}) =1-\beta} = p_{\Gamma^j}^{(b)} \leq p_{i-1}^{(b)} \cdot (1+\eta \cdot (C_3m)^{i-1}) = p_{i-1}^{(b)} (1+o(1)). 
\end{equation}

We now analyse the second term on the right hand side of Equation~(\ref{eq:GjANDGk}). From the construction of the formula, we know that for any $y \in \Vars(\Gamma^j)$,
the variable $y$ appears in at most $\gamma \cdot f_{i-1}$ many
depth-$(i-2)$ subformulas of $\Gamma^k$. Since
$|\Vars(\Gamma^j)\cap \Vars(\Gamma^k)| = r$, the number of
depth-$(i-2)$ subformulas $T$ of $\Gamma^k$ that contain some variable from $\Gamma^j$
is at most $\gamma \cdot f_{i-1} \cdot r$ which is at most
$\gamma \cdot f_{i-1} \cdot \ell$, as $r \leq \ell$.

Let us construct a formula $\Phi^k$ from $\Gamma^k$ by deleting all the depth-$(i-2)$ subformulas containing some variable from $\Gamma^j$. Then we get 

\begin{align}
\prob{\bm{x}\sim \mu_b^n}{(\Gamma^k(\bm{x}) =1-\beta) | (\Gamma^j(\bm{x}) =1-\beta)} & \leq \prob{\bm{x}\sim \mu_b^n}{(\Phi^k(\bm{x}) =1-\beta) | (\Gamma^j(\bm{x}) =1-\beta)} \nonumber \\
& = \prob{\bm{x}\sim \mu_b^n}{(\Phi^k(\bm{x}) =1-\beta)} \label{eq:phi}
\end{align}
The first inequality follows from the fact that $\Phi^k$ was constructed by removing some subformulas of depth-$(i-2)$ from $\Gamma^k$, and this can only increase the probability of taking value $1-\beta$. The equality follows from the fact that $\Phi^k$ and $\Gamma^j$ share no variables in common and hence the events $(\Phi^k(\bm{x}) =1-\beta)$  and $ (\Gamma^j(\bm{x}) =1-\beta)$ are independent. 

Let $\Gamma^{k,1}, \Gamma^{k,2}, \ldots, \Gamma^{k,f_{i-1}}$ be the depth-$(i-2)$ subformulas of $\Gamma^k$. By ordering the variables if necessary, let $\Gamma^{k,1}, \Gamma^{k,2}, \ldots, \Gamma^{k,f_{i-1}-T}$ be the depth-$(i-2)$ subformulas of $\Phi^k$. 

%

We will show below that
\begin{equation}
\label{eq:phi-final}
\prob{\bm{x}\sim \mu_b^n}{(\Phi^k(\bm{x}) =1-\beta)} \leq \prob{\bm{x}\sim \mu_b^n}{(\Gamma^k(\bm{x}) =1-\beta)} \cdot (1+o(1)).
\end{equation}

Suppose we have this then we will proceed as follows. 

\begin{equation}
\label{eq:phi2}
\prob{\bm{x}\sim \mu_b^n}{(\Phi^k(\bm{x}) =1-\beta)} \leq {\prob{\bm{x}\sim \mu_b^n}{\Gamma^k(\bm{x})=1-\beta}} \cdot (1+o(1))  \leq p_{i-1}^{(b)} \cdot (1+o(1))
\end{equation}
 Here the last inequality is obtained by using the induction hypothesis for $\Gamma^k$. Now using (\ref{eq:Gj}), (\ref{eq:phi}), and (\ref{eq:phi2}) in (\ref{eq:GjANDGk}) we get 

\begin{align*}
\prob{\bm{x}\sim \mu_b^n}{(\Gamma^j(\bm{x}) =1-\beta) \wedge (\Gamma^k(\bm{x}) =1-\beta)}&  \leq (p_{i-1}^{(b)} (1+o(1))) \cdot (p_{i-1}^{(b)} (1+o(1))) \\
& \leq (p_{i-1}^{(b)})^2 (1+o(1))
\end{align*}
Since the above holds for all $j < k$ such that $|\Vars(\Gamma^j)\cap \Vars(\Gamma^k)| = r,$ this gives us a bound on $t_r$. Using this in (\ref{eq:tr}), we get

\begin{align*}
\Delta_{\Gamma}^{(b)}  & \leq  \sum_{r=1}^\ell  (p_{i-1}^{(b)})^2 \cdot (1+o(1)) \cdot \eta^r \cdot f_i^2 \\
& = (p_{i-1}^{(b)})^2 \cdot f_i^2 \cdot (1+o(1)) \cdot \sum_{r=1}^\ell \eta^r \\
& \leq (50m)^2 \cdot 2 \eta
\end{align*}
Here the last inequality is by applying Observation~\ref{obs:pifi} and by summing a geometric series. This therefore proves the inductive bound on $\Delta_{\Gamma}^{(b)}$ assuming (\ref{eq:phi-final}). 

In order to prove (\ref{eq:phi-final}), we note that by using Janson's inequality (Theorem~\ref{thm:janson}) for $\Phi^k,$ we get that

\begin{align*}
\prob{\bm{x}\sim \mu_b^n}{(\Phi^k(\bm{x}) =1-\beta)} \leq \prod_{u \leq f_{i-1}-T} (1-p_{\Gamma^{k,u}}^{(b)}) \cdot \exp(2\Delta_{\Phi^k})
\end{align*}

Also observe (Theorem~\ref{thm:janson}) that $\prob{\bm{x}\sim \mu_b^n}{(\Gamma^k(\bm{x}) =1-\beta)}$ is lower bounded by  $\prod_{u \leq f_{i-1}} (1-p_{\Gamma^{k,u}}^{(b)})$. Therefore, we get $$\prod_{u \leq f_{i-1}-T} (1-p_{\Gamma^{k,u}}^{(b)}) \leq  \frac{\prob{\bm{x}\sim \mu_b^n}{\Gamma^k(\bm{x})=1-\beta}}{\prod_{u > f_{i-1}-T} (1-p_{\Gamma^{k,u}}^{(b)})}.$$ Now, we have $\Delta_{\Phi^k}^{(b)} \leq \Delta_{\Gamma^k}^{(b)}$ by the definitions of these quantities and the fact that $\Phi^k$ is obtained from $\Gamma^k$ by removing some depth-$(i-2)$ subformulas. Also, by the induction hypothesis, we have $\Delta_{\Gamma^k}^{(b)} \leq (C_4m)^2 \eta$. As $\eta = 1/m^{10d}$, we get that $\Delta_{\Phi^k} = o(1)$. Hence, $\exp(2\Delta_{\Phi^k}) = \exp(o(1)) \leq (1+o(1))$. Putting these together, we obtain the following inequality.

\begin{equation}
\label{eq:phi1}
\prob{\bm{x}\sim \mu_b^n}{(\Phi^k(\bm{x}) =1-\beta)} \leq \frac{\prob{\bm{x}\sim \mu_b^n}{\Gamma^k(\bm{x})=1-\beta}}{\prod_{u > f_{i-1}-T} (1-p_{\Gamma^{k,u}}^{(b)})} \cdot (1+o(1))
\end{equation}

Now using the induction hypothesis for $p_{\Gamma^{k,u}}^{(b)}$, we get $p_{\Gamma^{k,u}}^{(b)} \leq (1+o(1))\cdot p_{i-2}^{(b)} \leq 2 \cdot p_{i-1}^{(b)}$. Using this bound on the value of $p_{\Gamma^{k,u}}^{(b)}$, we get the following lower bound on $\prod_{u> f_{i-1}-T} (1-p_{\Gamma^{k,u}}^{(b)})$. 

\begin{align*}
\prod_{u> f_{i-1}-T} (1-p_{\Gamma^{k,u}}^{(b)}) & \geq (1-2p_{i-2}^{(b)})^T \\
& \geq 1-2  \cdot T \cdot p_{i-2}^{(b)} \\
& \geq 1-2 \cdot \gamma \cdot f_{i-1} \cdot \ell \cdot p_{i-2}^{(b)} \\
& \geq (1-o(1))
\end{align*}
The third inequality comes from the upper bound on the value of $T$ argued above. Using Observation~\ref{obs:pifi} we get that $f_{i-1}\cdot p_{i-2}^{(b)} \leq 50m$. From our choice of parameters, $\gamma = 1/m^3$ and $\ell \leq m$. Therefore,  we get $\gamma \cdot \ell \cdot f_{i-1}p_{i-2}^{(b)} \leq (1/m^3) \cdot m \cdot 50m = o(1)$. This gives the last inequality above. Putting it together, this gives is (\ref{eq:phi-final}). This finishes the proof of part 2 in Lemma~\ref{lem:main}.

We now proceed to proving the inductive step for part 1 of Lemma~\ref{lem:main}. The proof is very similar to the proof of the analogous statement in the base case. We give the details for the sake of completeness. Using Janson's inequality, we get 

\begin{equation}
\label{eq:pGamma}
\prod_{j=1}^{f_i} (1-p_{\Gamma^j}^{(b)}) \leq p_\Gamma^{(b)} \leq \prod_{j=1}^{f_i} (1-p_{\Gamma^j}^{(b)}) \cdot \exp(2\cdot \Delta_{\Gamma}^{(b)})
\end{equation}

Using (\ref{eq:exp-ineq}) we get $p_\Gamma^{(b)}  \geq \prod_{j=1}^{f_i} (1-p_{\Gamma^j}^{(b)}) \geq \exp(-\sum_{j\leq f_i}p_{\Gamma^j}^{(b)} - \sum_{j\leq f_i} (p_{\Gamma^j}^{(b)})^2)$. To lower bound this quantity, we will first upper bound $p_{\Gamma^j}^{(b)}$. By using the induction hypothesis, we get $p_{\Gamma^j}^{(b)} \leq p_{i-1}^{(b)}(1+\eta \cdot (C_3m)^{i-1})$. Using this, we get $\sum_{j\leq f_i}p_{\Gamma^j}^{(b)} \leq f_i \cdot p_{i-1}^{(b)}(1+\eta \cdot (C_3m)^{i-1})$. 

We will also show that $ \sum_{j\leq f_i} (p_{\Gamma^j}^{(b)})^2$ is negligible. For that observe the following:

\begin{align*}
 \sum_{j\leq f_i} (p_{\Gamma^j}^{(b)})^2&  \leq f_i \cdot (p_{i-1}^{(b)}(1+\eta \cdot (C_3m)^{i-1}))^2 \\
& \leq 4 \frac{(f_i \cdot p_{i-1}^{(b)})^2}{f_i} \leq \frac{O(m^2)}{\lceil m\cdot 2^m\cdot \ln 2\rceil} \leq \eta \cdot (C_3m)^{i-1}
\end{align*}
Here the second inequality comes from the fact that $(1+\eta \cdot (C_3m)^{i-1})) \leq 2$. The other inequalities easily follow from our choice of parameters and Observation~\ref{obs:pifi}. 

\begin{align}
p_\Gamma^{(b)}  & \geq \exp(-\sum_{j\leq f_i}p_{\Gamma^j}^{(b)} - \sum_{j\leq f_i} (p_{\Gamma^j}^{(b)})^2) \nonumber \\
& \geq \exp(-f_i \cdot p_{i-1}^{(b)}(1+\eta \cdot (C_3m)^{i-1}) - \eta \cdot (C_3m)^{i-1} ) \nonumber \\
& = \exp(-f_i \cdot p_{i-1}^{(b)} - \eta \cdot (C_3m)^{i-1}\cdot (f_ip_{i-1}^{(b)}+1)) \nonumber \\
& \geq (1-p_{i-1}^{(b)})^{f_i} (1- (\eta \cdot (C_3m)^{i-1} (f_ip_{i-1}^{(b)}+1)) \label{eq:exp}\\
& \geq p_i^{(b)} (1-(\eta \cdot (C_3m)^{i-1} (50m+1)) \label{eq:ih}\\
& \geq p_i^{(b)} (1-(\eta \cdot (C_3m)^{i-1} C_3m)) \nonumber\\
& = p_i^{(b)} (1-(\eta \cdot (C_3m)^{i}))\nonumber 
\end{align}
Here, the above inequalities can be obtained primarily by simple rearrangement of terms. The inequality (\ref{eq:exp}) uses (\ref{eq:exp-ineq}), while inequality (\ref{eq:ih}) uses the induction hypothesis and Observation~\ref{obs:pifi}. 
This proves the desired lower bound on $p_\Gamma^{(b)}$. Now we prove the upper bound. 

\begin{align}
p_\Gamma^{(b)}  & \leq \prod_{j \leq f_i} (1-p_{\Gamma^j}^{(b)}) \exp(\Delta_\Gamma^{(b)}) \nonumber \\
& \leq \exp(-\sum_{j \leq f_i} p_{\Gamma^j}^{(b)} + 2 \Delta_\Gamma^{(b)}) \nonumber \\
& \leq \exp(-p_{i-1}^{(b)} (1-\eta\cdot (C_3 m)^{i-1})\cdot f_i + 2 \Delta_{\Gamma}^{(b)}) \nonumber \\
& \leq \exp(-p_{i-1}^{(b)}f_i) \exp\left(p_{i-1}^{(b)}\cdot \eta\cdot (C_3 m)^{i-1}\cdot f_i + 2 \Delta_{\Gamma}^{(b)}\right) \nonumber \\
& = \exp(-p_{i-1}^{(b)})^{f_i} \exp\left(p_{i-1}^{(b)}\cdot \eta\cdot (C_3 m)^{i-1}\cdot f_i + 2 \Delta_{\Gamma}^{(b)}\right) \nonumber \\
& \leq \left((1-p_{i-1}^{(b)}) \cdot \exp((p_{i-1}^{(b)})^2)\right)^{f_i} \cdot  \exp\left(p_{i-1}^{(b)}\cdot \eta\cdot (C_3 m)^{i-1}\cdot f_i + 2 \Delta_{\Gamma}^{(b)}\right)  \label{eq:expb}\\
& = (1-p_{i-1}^{(b)})^{f_i} \cdot \exp\left((p_{i-1}^{(b)})^2\cdot f_i + p_{i-1}^{(b)}\cdot \eta\cdot (C_3 m)^{i-1}\cdot f_i + 2 \Delta_{\Gamma}^{(b)}\right) \nonumber \\
& \leq p_i^{(b)} \cdot \exp\left(\eta\cdot (C_3 m)^{i-1} + 50 m \cdot \eta\cdot (C_3 m)^{i-1} + 2 \eta \cdot (C_3 m)^{i-1}\right) \label{eq:ihu}\\
& \leq p_i^{(b)} \cdot \exp\left( \eta \cdot (C_3m)^{i-1} \cdot (50m+3)\right) \nonumber \\
& \leq p_i^{(b)} \cdot (1+ 2 \cdot \eta \cdot (C_3m)^{i-1} \cdot (50m+3)) \label{eq:1+2x}\\
& \leq p_i^{(b)} \cdot (1+\eta \cdot (C_3m)^i) \nonumber
\end{align}
Most inequalities above are obtained by simple rearrangement of terms. Inequality (\ref{eq:expb}) is obtained by applying the inequality (b) from (\ref{eq:exp-ineq}). Inequality (\ref{eq:ihu}) is obtained by applying (\ref{eq:pivspi-1}), by using Observation~\ref{obs:pifi}, and by using the fact that $f_i \cdot (p_{i-1}^{(b)})^2 \leq \eta \cdot (C_3m)^{i-1}$. Finally, (\ref{eq:1+2x}) is obtained by using inequalities (d) and (e) of (\ref{eq:exp-ineq}). 
This completes the proof of part 1 of Lemma~\ref{lem:main}. 

\end{proof}

\section{Lower bounds for the Coin Problem}
\label{sec:lbds}

In this section, we prove Theorem~\ref{thm:lb-intro}. We start with a special case of the theorem (that we call the monotone case) the proof of which is shorter and which suffices for the application to the Fixed-Depth Size-Hierarchy theorem (Theorem~\ref{thm:size-hie}). We then move on to the general case.

The special case is implicit in the results of O'Donnell and Wimmer~\cite{OW} and Amano~\cite{Amano}, but we prove it below for completeness.

\subsection{The monotone case}
\label{sec:monotone}

In this section, we prove a near-optimal size lower bound (i.e. matching the upper bound construction from Theorem~\ref{thm:main-intro}) on the size of any $\AC^0[\oplus]$ formula computing any \emph{monotone} Boolean function solving the $\delta$-coin problem. Observe that this already implies Theorem~\ref{thm:size-hie}, since the formula $F_n$ from the statement of Theorem~\ref{thm:size-hie} computes a monotone function.

Let $g:\{0,1\}^N\rightarrow \{0,1\}$ be any \emph{monotone} Boolean function solving the $\delta$-coin problem. Note that the monotonicity of $g$ implies that for all $\alpha\in [0,(1-\delta)/2]$ and $\beta \in [(1+\delta)/2,1],$ we have
\begin{equation}
\label{eq:monotonicity}
\prob{\bm{x}\sim D_{\alpha}^N}{g(\bm{x}) =1}\leq 0.1 \text{   and    } \prob{\bm{x}\sim D_{\beta}^N}{g(\bm{x}) =1}\geq 0.9.
\end{equation}

Let $F$ be any $\AC^0[\oplus]$ formula of size $s$ and depth $d$ computing $g$. We will show that $s\geq \exp(d\cdot\Omega(1/\delta)^{1/(d-1)}).$ 

Our main tool is the following implication of the results of Razborov~\cite{Razborov}, Smolensky~\cite{Smolensky93}, and Rossman and Srinivasan~\cite{RossmanS}.
\begin{theorem}
\label{thm:RS}
Let $F'$ be any $\AC^0[\oplus]$ formula of size $s'$ and depth $d$ with $n$ input bits that agrees with the $n$-bit Majority function in at least a $0.75$ fraction of its inputs. Then, $s' \geq \exp(d\cdot\Omega(n)^{1/2(d-1)}).$
\end{theorem}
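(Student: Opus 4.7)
The plan is to prove Theorem~\ref{thm:RS} by combining the polynomial approximation method for $\AC^0[\oplus]$ formulas (Razborov--Smolensky, together with the Rossman--Srinivasan formula-specific refinement) with Smolensky's classical degree lower bound for polynomials over $\F_2$ approximating the Majority function.

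First I would invoke the approximation method. For any $\AC^0[\oplus]$ formula $F'$ of size $s'$ and depth $d$, and any error parameter $\varepsilon > 0$, there is a polynomial $P \in \F_2[x_1,\ldots,x_n]$ of degree $D = O\left(\left((\log(s'/\varepsilon))/d\right)^{d-1}\right)$ such that $\prob{\bm{x} \sim \{0,1\}^n}{P(\bm{x}) \neq F'(\bm{x})} \leq \varepsilon$. The crucial quantitative feature, beyond Razborov's original bound, is the extra $1/d$ factor inside the base, which \cite{RossmanS} obtains by exploiting the fact that in a formula of size $s'$ the gate fan-ins multiply to at most $s'$, so by AM--GM the ``average'' log-fanin is $\lesssim (\log s')/d$. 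This $1/d$ factor is precisely what converts the classical circuit lower bound of $\exp(\Omega(n^{1/2(d-1)}))$ into the stronger formula lower bound of $\exp(d \cdot \Omega(n)^{1/2(d-1)})$ claimed here.

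Next, I would combine this approximation with the hypothesis that $F'$ agrees with $\Maj$ on at least a $0.75$-fraction of inputs. Fixing $\varepsilon = 0.1$ (any sufficiently small constant works), a union bound on the two disagreement events yields that $P$ itself agrees with $\Maj$ on at least a $0.65 > 1/2$ fraction of inputs. At this point I would invoke Smolensky's degree lower bound in its ``approximate-majority'' form (as highlighted in the paper's own footnote): any polynomial in $\F_2[x_1,\ldots,x_n]$ that agrees with $\Maj$ on a fraction $\geq 1/2 + \gamma$, for any constant $\gamma > 0$, must have degree $\Omega(\sqrt{n})$.

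Combining the two estimates on $D$ yields $\left((\log s')/d\right)^{d-1} = \Omega(\sqrt{n})$, which rearranges to $\log s' = \Omega(d \cdot n^{1/2(d-1)})$ and hence $s' \geq \exp(d \cdot \Omega(n)^{1/2(d-1)})$, as required. The main obstacle is calibrating the two imported ingredients in exactly the right quantitative form: the approximation must carry the $1/d$ savings available for formulas (rather than the weaker bound applicable to general circuits), and Smolensky's bound must be used in the robust approximate-majority form rather than the textbook exact-computation version. Both refinements are essentially standard by now, but this matching calibration is what yields a near-tight $s'$ lower bound, and it is exactly the ingredient needed to feed back into the monotone case (and eventually the Size-Hierarchy theorem).
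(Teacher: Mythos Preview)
Your proposal is correct and matches exactly how the paper treats this statement: the paper does not prove Theorem~\ref{thm:RS} directly but presents it as ``an implication of the results of Razborov~\cite{Razborov}, Smolensky~\cite{Smolensky93}, and Rossman and Srinivasan~\cite{RossmanS},'' and your sketch is precisely the standard unpacking of that implication (polynomial approximation with the $1/d$ formula refinement from~\cite{RossmanS}, then Smolensky's $\Omega(\sqrt{n})$ degree bound for approximating Majority). Indeed, the paper later invokes the very same Rossman--Srinivasan bound $\deg(P)=O((\log s)/d)^{d-1}$ explicitly in the proof of Theorem~\ref{thm:lb-intro}, confirming that this is the intended derivation.
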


We will use the above theorem to lower bound $s$ (the size of $F$) by using $F$ to construct a formula $F'$ of size at most $s$ that agrees with the Majority function on $n = \Theta(1/\delta^2)$ bits at a $0.8$ fraction of its inputs. Theorem~\ref{thm:RS} then implies the result. 

We now describe the construction of $F'$. Let $n = \lfloor (1/100\delta^2)\rfloor$. We start by defining a \emph{random} formula $\bm{F}''$ on $n$ inputs as follows. On input $x = (x_1,\ldots, x_n)\in \{0,1\}^n,$ define $\bm{F}''(x)$ to be $F(x_{\bm{i}_1},\ldots,x_{\bm{i}_N})$ where $\bm{i}_1,\ldots,\bm{i}_N$ are chosen i.u.a.r. from $[n].$

We make the following easy observation. For any $x\in \{0,1\}^n$ and for $\alpha = |x|/n,$
\begin{equation}
\label{eq:F''acc}
\prob{\bm{F}''}{\bm{F}''(x) = 1} = \prob{\bm{y}\sim D_{\alpha}^N}{F(\bm{y}) = 1} = \prob{\bm{y}\sim D_{\alpha}^N}{g(\bm{y}) = 1}.
\end{equation}

In particular, from (\ref{eq:monotonicity}), we see that if $\alpha \leq (1-\delta)/2$ or $\alpha \geq (1+\delta)/2,$ we have $\prob{\bm{F}''}{\bm{F}''(x) \neq \Maj_n(x)}\leq 0.1.$ As a result we get
\begin{align*}
\prob{\bm{x}\sim \{0,1\}^n,\bm{F}''}{\bm{F}''(\bm{x}) \neq \Maj_n(\bm{x})} &= \avg{\bm{x}}{\prob{\bm{F}''}{\bm{F}''(\bm{x}) \neq \Maj_n(\bm{x})}}\\
&\leq \prob{\bm{x}}{|\bm{x}|/n \in ((1-\delta)/2,(1+\delta)/2)}\\ &\ \ + \max_{\alpha\not\in [(1-\delta)/2,(1+\delta)/2)]} \prob{\bm{F}''}{\bm{F}''(\bm{x}) \neq \Maj_n(\bm{x})\ |\ |\bm{x}| = \alpha n}\\
&\leq \prob{\bm{x}}{|\bm{x}|/n \in ((1-\delta)/2,(1+\delta)/2)} + 0.1.
\end{align*}

By Stirling's approximation, it follows that for any $i\in [n]$, $\prob{\bm{x}}{|\bm{x}|=i}\leq \binom{n}{\lfloor n/2\rfloor}/2^n \leq 1/\sqrt{n}.$ Hence, by a union bound, we have $\prob{\bm{x}}{|\bm{x}|/n\in ((1-\delta)/2,(1+\delta)/2)}\leq (\delta n)\cdot 1/\sqrt{n}\leq \delta\sqrt{n} \leq 0.1.$ Plugging this in above, we obtain
\[
\prob{\bm{x}\sim \{0,1\}^n, \bm{F}''}{\bm{F}''(\bm{x}) \neq \Maj_n(\bm{x})} \leq 0.2.
\]
By an averaging argument, there is a fixed choice of $\bm{F}''$, which we denote by $F'$, that agrees with the Majority function $\Maj_n$ on a $0.8$ fraction of all inputs. Note that $F' = F(x_{i_1},\ldots,x_{i_N})$ for some choices of $i_1,\ldots,i_N\in [n]$. Hence, $F'$ is a circuit of depth $d$ and size at most $s$. 

Theorem~\ref{thm:RS} now implies the lower bound on $s$.  

\subsection{The general case}
\label{sec:lbd-gen}

In this section, we prove a general lower bound on the size of any $\AC^0[\oplus]$ formula that solves the coin problem (not necessarily by computing a monotone function). The main technical result is the following theorem about polynomials that solve the coin problem.

\begin{theorem}
\label{thm:lbd-polys}
Let $g\in \F_2[x_1,\ldots,x_N]$ solve the $\delta$-coin problem. Then, $\deg(g) = \Omega(1/\delta).$
\end{theorem}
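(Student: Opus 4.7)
The plan is to follow the three-step strategy outlined in the introduction, adapting the dual form of the Smolensky--Szegedy lower bound for polynomials approximating Majority.

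\textbf{Step 1: A thick-strip degree lower bound for Majority.} I would first establish the following intermediate statement: there exist absolute constants $a > b > 0$ and $\varepsilon_0 > 0$ such that any polynomial $P \in \F_2[x_1,\ldots,x_n]$ which agrees with $\Maj_n$ on all but an $\varepsilon_0$-fraction of inputs whose Hamming weight lies in $H := [\tfrac{n}{2}-a\sqrt{n}, \tfrac{n}{2}-b\sqrt{n}] \cup [\tfrac{n}{2}+b\sqrt{n}, \tfrac{n}{2}+a\sqrt{n}]$ must satisfy $\deg(P) = \Omega(\sqrt{n})$. The approach is to adapt the dual proof of the classical Majority polynomial lower bound given in Kopparty--Srinivasan~\cite{KS}, which itself builds on Aspnes--Beigel--Furst--Rudich~\cite{ABFR} and Green~\cite{Green}. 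The crucial modification is to arrange the dual witness to the non-approximability to be supported entirely on $H$, so that the hypothesis on $P$ suffices to complete the argument.

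\textbf{Step 2: Sampling reduction.} Given a polynomial solver $g$ of the $\delta$-coin problem of degree $D$, I would apply the sampling reduction from O'Donnell--Wimmer~\cite{OW}. Take $n = \Theta(1/\delta^2)$ and define a random polynomial $\bm{P} \in \F_2[x_1,\ldots,x_n]$ by replacing each variable $y_j$ of $g$ with $x_{\bm{i}_j}$ where each $\bm{i}_j$ is chosen independently and uniformly from $[n]$. Then $\deg(\bm{P}) \leq D$, and for any fixed $z \in \{0,1\}^n$ with $|z|/n = \alpha$, $\bm{P}(z)$ is distributed exactly as $g(\bm{y})$ for $\bm{y}\sim D_\alpha^N$. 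Suppose $g$ satisfies the ``stability'' hypothesis that $\Pr_{\bm{y}\sim D_\alpha^N}[g(\bm{y})=1] \leq \varepsilon_0/4$ for all $\alpha \in [\tfrac{1}{2}-a\delta,\tfrac{1}{2}-b\delta]$ and $\geq 1-\varepsilon_0/4$ for all $\alpha \in [\tfrac{1}{2}+b\delta,\tfrac{1}{2}+a\delta]$. Then binomial concentration places an $\Omega(1)$ fraction of uniform inputs into $H$, and a union bound plus averaging fixes the random indices to obtain a deterministic $P$ that agrees with $\Maj_n$ on all but an $\varepsilon_0$-fraction of inputs in $H$. Step~1 then gives $D = \Omega(\sqrt{n}) = \Omega(1/\delta)$.

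\textbf{Step 3: Handling oscillation via iteration.} The main obstacle is that a generic solver $g$ need not satisfy this stability hypothesis, since solving the $\delta$-coin problem only constrains $\Pr_{D_\alpha^N}[g=1]$ at two isolated values of $\alpha$; within the target intervals it could oscillate. In that case, by choosing two parameters $\alpha' < \alpha''$ in one of the target intervals where the acceptance probabilities of $g$ differ by at least $1-\varepsilon_0$, we see that $g$ itself is a solver for a \emph{harder} coin problem, distinguishing $D_{\alpha'}^N$ from $D_{\alpha''}^N$ with $|\alpha'-\alpha''| =: 2\delta' \leq (a-b)\delta$. Re-centering the distinguished biases at $(1\pm\delta')/2$ by a small randomized padding (replacing each variable with an AND/OR of a few new biased bits) yields a polynomial $g'$ on sample size $N' = N\cdot \mathrm{poly}(1/\delta)$ of degree at most $\deg(g)$ times a small constant that solves the $\delta'$-coin problem. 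I would iterate this construction: at each stage either Step~2 applies and gives $D \geq \Omega(1/\delta^{(t)}) \geq \Omega(1/\delta)$, or the coin-problem parameter strictly shrinks. I expect the termination bookkeeping to be the delicate part: one must argue that the iteration cannot continue indefinitely, using that each round decreases $\delta^{(t)}$ by a definite factor while the sample complexity $N^{(t)}$ grows only polynomially, so the unconditional information-theoretic lower bound $N^{(t)} = \Omega(1/(\delta^{(t)})^2)$ on coin-problem sample complexity is eventually violated unless Step~2's hypothesis becomes satisfied. Carefully tracking how degree, sample complexity, and coin-problem parameter interact under the padding transformation will be the main technical work.
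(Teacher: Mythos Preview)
Your three-step plan mirrors the paper's proof: the paper proves your Step~1 as Lemma~\ref{lem:smol-ext}, carries out your Step~2 inside the proof of Claim~\ref{clm:h}, and runs your Step~3 iteration as the ``first phase'' of Section~\ref{sec:lbd-polys}, terminating via exactly the information-theoretic sample-complexity bound you name (Fact~\ref{fac:stat-dist}).

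The one substantive gap is your re-centering mechanism in Step~3. Your estimate $N' = N\cdot\poly(1/\delta)$ per round would break the termination argument: if sample size grows by a $\poly(1/\delta)$ factor while $\delta^{(t)}$ shrinks only by a fixed constant factor, then $N^{(t)}(\delta^{(t)})^2$ need not tend to zero, and the information-theoretic contradiction never arrives. The same issue threatens the degree bookkeeping: you need the per-round degree blowup to be strictly smaller than the per-round growth of $1/\delta$, or else unwinding the iteration loses the $\Omega(1/\delta)$ bound. The paper avoids both problems with a convex-combination trick (Claim~\ref{clm:convex}): one checks that the two nearby biases $\alpha_1,\alpha_2$ can \emph{simultaneously} be written as $\alpha_i = \gamma\beta_i + (1-\gamma)\eta$ for the centered biases $\beta_i = (1\pm\delta'')/2$, and then replaces each input bit $x_i$ by the random affine form $\bm{b}_i x_i + (1-\bm{b}_i)\bm{y}_i$ with $\bm{b}_i$ and $\bm{y}_i$ independent Bernoulli variables of parameters $\gamma$ and $\eta$. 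This is a degree-$1$ probabilistic substitution, so the re-centering step increases neither degree nor sample complexity at all; the only growth comes from the threshold-of-$t$-copies error reduction, which multiplies both by the absolute constant $t$. Choosing the interval width so that $\delta$ shrinks by $4/r$ per round with $r=10t$ then makes both $N_k\delta_k$ and $t^k\delta_k$ decay geometrically, yielding termination and the final degree bound in one stroke. Your ``AND/OR of a few new biased bits'' could be massaged into something equivalent, but not with the parameters you wrote down.
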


Given the above result, it is easy to prove Theorem~\ref{thm:lb-intro} in its general form.

\begin{proof}[Proof of Theorem~\ref{thm:lb-intro}]
Assume that $F$ is an $\AC^0[\oplus]$ formula $F$ of size $s$ and depth $d$ on $N$ inputs that solves the $\delta$-coin problem.

Building on Razborov~\cite{Razborov}, Rossman and Srinivasan~\cite{RossmanS} show that for any such $\AC^0[\oplus]$ formula $F$ of size $s$ and depth $d$ and any probability distribution $\mu$ on $\{0,1\}^N$, there exists a polynomial $P\in \F[x_1,\ldots,x_N]$ of degree $O((\log s)/d)^{d-1}$ such that
\[
\prob{\bm{x}\sim \mu}{P(\bm{x})\neq F(\bm{x})}\leq 0.05.
\] 
Taking $\mu = (\mu_0^N + \mu_1^N)/2$, we have for each $b\in \{0,1\},$ the above polynomial $P$ satisfies
\begin{equation}
\label{eq:PvsF}
\prob{\bm{x}\sim \mu_b^N}{P(\bm{x})\neq F(\bm{x})}\leq 2\prob{\bm{x}\sim \mu}{P(\bm{x})\neq F(\bm{x})}\leq 0.1.
\end{equation}

In particular, if $F$ solves the $\delta$-coin problem, then $P$ solves the $\delta$-coin problem with error at most $0.2$. By Fact~\ref{fac:err_redn} applied with $t$ being a suitably large constant, it follows that there is a polynomial $Q\in \F[x_1,\ldots,x_N]$ that solves the $\delta$-coin problem (with error at most $0.1$) and satisfies $\deg(Q) \leq t\cdot\deg(P) = O(\deg(P)).$ By Theorem~\ref{thm:lbd-polys}, it follows that $\deg(Q) = \Omega(1/\delta)$ and hence we have $\deg(P) = \Omega(1/\delta)$ as well.

Since $\deg(P) = O((\log s)/d)^{(d-1)}$, we get $s \geq \exp(\Omega(d\cdot (1/\delta)^{1/(d-1)})).$  	
\end{proof}

We now turn to the proof of Theorem~\ref{thm:lbd-polys}.

\subsubsection{Proof of Theorem~\ref{thm:lbd-polys}}
\label{sec:lbd-polys}

We define a \emph{probabilistic function} to be a random function $\bm{g}:\{0,1\}^N\rightarrow \{0,1\}$, chosen according to some distribution. We say that $\deg(\bm{g}) \leq D$ if this distribution is supported over polynomials (from $\F_2[x_1,\ldots,x_N]$ ) of degree at most $D$. A probabilistic function solves the $\delta$-coin problem with error at most $\varepsilon$ if it satisfies (\ref{eq:cpdefn}), where the probability is additionally taken over the randomness used to sample $\bm{g}.$ If no mention is made of the error, we assume that it is $0.1.$ Note that a standard (i.e. non-random) function is also a probabilistic function of the same degree.

We will prove the stronger statement that any probabilistic function $\bm{g}$ solving the $\delta$-coin problem must have degree $\Omega(1/\delta).$ \footnote{While formally stronger, this statement is more or less equivalent, since given such a probabilistic function $\bm{g}$, one can always extract a deterministic function of the same degree that solves the coin problem with error at most $0.21$ by an averaging argument. Then using error reduction (Fact~\ref{fac:err_redn}), we can obtain a deterministic function with a slightly larger degree that solves the coin problem with error $0.1$.  }

Given a probabilistic function $\bm{g}:\{0,1\}^N\rightarrow \{0,1\},$ we define the \emph{profile of $\bm{g}$}, denoted $\pi_{\bm{g}}$, to be a function $\pi_{\bm{g}}:[0,1]\rightarrow [0,1]$ where 
\[
\pi_{\bm{g}}(\alpha) = \prob{\substack{\bm{g},\\ \bm{x}\sim D_\alpha^N}}{\bm{g}(\bm{x}) = 1}.
\]

Note that since $\bm{g}$ solves the $\delta$-coin problem, we have
\begin{equation}
\label{eq:pig}
\pi_{\bm{g}}((1-\delta)/2) \leq 0.1 \text{  and  } \pi_{\bm{g}}((1+\delta)/2) \geq 0.9.
\end{equation}

The proof of the lower bound on $\deg(\bm{g})$ proceeds in two phases. In the first phase, we use $\bm{g}$ to obtain a probabilistic function $\bm{h}$ (of related degree) which satisfies a stronger criterion than (\ref{eq:pig}): namely that the profile of $\bm{h}$ is small in an \emph{interval} close to $(1-\delta')/2$ and large in an interval close to $(1+\delta')/2$ (for some $\delta'\leq \delta$). In the second phase, we use algebraic arguments~\cite{Smolensky} to lower bound $\deg(\bm{h}),$ which leads to a lower bound on $\deg(\bm{g}).$


%


Let $r,t\in\mathbb{N}$ and $\zeta\in (0,1)$ denote absolute constants that we will fix later on in the proof.

We start the first phase of the proof as outlined above. We iteratively define a sequence of probabilistic functions $(\bm{g}_k)_{k\geq 0}$ where $\bm{g}_k:\{0,1\}^{N_k}\rightarrow \{0,1\}$ solves the $\delta_k$-coin problem where $N_k,\delta_k$ are parameters that are defined below.

\begin{itemize}
\item The function $\bm{g}_0$ is simply the function $\bm{g}.$ Hence, $N_0 = N$ and we can take $\delta_0 = \delta.$
\item Having defined $\bm{g}_k,$ we consider which of the following 3 cases occur.
\begin{itemize}
\item Case 1: There is an $\alpha\in ((1-\delta_k)/2, (1-\delta_k)/2 + \delta_k/r]$ such that $\pi_{\bm{g}_k}(\alpha) \geq 0.4.$
\item Case 2: There is an $\beta\in [(1+\delta_k)/2 - \delta_k/r, (1+\delta_k)/2)$ such that $\pi_{\bm{g}_k}(\beta) \leq 0.6.$
\item Case 3: Neither Case 1 nor Case 2 occur. In this case, the sequence of probabilistic functions ends with $\bm{g}_k.$
\end{itemize}
\item If Case 1 or Case 2 occurs, we extend the sequence by defining $\bm{g}_{k+1}$. For simplicity, we assume Case 1 occurs (Case 2 is handled similarly). Note that in this case we have 
\begin{equation}
\label{eq:alpha-pigk}
\pi_{\bm{g}_k}((1-\delta)/2) \leq 0.1 \text{ and } \pi_{\bm{g}_k}(\alpha) \geq 0.4
\end{equation}
for some $\alpha \in ((1-\delta_k)/2, (1-\delta_k)/2 + \delta_k/r]$.

We will need the following technical claim.

\begin{claim}
\label{clm:convex}
Let $\delta',\delta''\in (0,1)$ be such that $\delta''\geq 4\delta'.$ Assume we have $\alpha_1,\alpha_2,\beta_1,\beta_2\in (0,1)$ such that $(1/4)\leq \alpha_1\leq(1/2), \alpha_2 = \alpha_1 +\delta', \beta_1 = (1-\delta'')/2, \beta_2 = (1+\delta'')/2$. Then, there exist $\gamma,\eta\in [0,1]$ such that for each $i\in [2]$, $\alpha_i = \gamma\cdot \beta_i + (1-\gamma)\cdot \eta.$
\end{claim}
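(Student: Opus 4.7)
The claim is a two-variable linear system, and the plan is to solve it by elimination, then verify that the solutions $\gamma$ and $\eta$ lie in $[0,1]$ using the hypotheses on $\delta'', \delta'$, and $\alpha_1$.

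First, I would subtract the required equation for $i=1$ from the one for $i=2$. This kills the $\eta$ term and yields
\[
\alpha_2 - \alpha_1 \;=\; \gamma(\beta_2 - \beta_1),
\]
i.e.\ $\delta' = \gamma \cdot \delta''$, forcing $\gamma = \delta'/\delta''$. The hypothesis $\delta'' \geq 4\delta'$ gives $\gamma \in [0, 1/4] \subseteq [0,1]$, which is one of the two things we need.

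Next, I would solve the $i=1$ equation for $\eta$, obtaining
\[
\eta \;=\; \frac{\alpha_1 - \gamma \beta_1}{1 - \gamma}.
\]
With this definition one checks directly that the $i=2$ equation is automatic, since
\[
\gamma\beta_2 + (1-\gamma)\eta \;=\; \gamma\beta_2 + \alpha_1 - \gamma\beta_1 \;=\; \alpha_1 + \gamma(\beta_2-\beta_1) \;=\; \alpha_1 + \delta' \;=\; \alpha_2.
\]
So the only remaining task is to verify $\eta \in [0,1]$. For the lower bound, note that $\beta_1 \leq 1/2$ and $\gamma \leq 1/4$, so $\gamma\beta_1 \leq 1/8$; combined with $\alpha_1 \geq 1/4$ this gives $\alpha_1 - \gamma\beta_1 \geq 1/8 > 0$, and since $1-\gamma > 0$ we conclude $\eta > 0$. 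For the upper bound, I rewrite $\eta \leq 1$ as $\alpha_1 + \gamma(1 - \beta_1) \leq 1$, i.e.\ $\alpha_1 + \gamma(1+\delta'')/2 \leq 1$; using $\alpha_1 \leq 1/2$ and $\gamma(1+\delta'')/2 \leq \gamma \leq 1/4$, the left-hand side is at most $3/4 < 1$.

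There is no real obstacle here; the claim is essentially the statement that if the interval $[\beta_1, \beta_2]$ is long enough (length $\geq 4\delta'$) and $\alpha_1$ is bounded away from $0$ and $1$, then the short subinterval $[\alpha_1, \alpha_2]$ can be written as a common convex combination of $[\beta_1, \beta_2]$ with a single external point $\eta$, using the same mixing weight $\gamma$. The mixing weight is forced by the length ratio $\delta'/\delta''$, and the freedom in $\eta$ is exactly enough to place $\alpha_1$ (and hence $\alpha_2$) correctly, provided $\alpha_1$ is not too close to the boundary.
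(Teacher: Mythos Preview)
Your proof is correct and follows essentially the same approach as the paper: both solve the linear system to get $\gamma=\delta'/\delta''\in(0,1/4]$ and $\eta=(\alpha_1-\gamma\beta_1)/(1-\gamma)$, then verify $\eta\in[0,1]$ via elementary estimates. The numerical details of the bounds differ slightly (the paper gets the sharper $\eta\le 4\alpha_1/3\le 2/3$ for the upper bound), but the argument is the same.
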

\begin{proof}
	We immediately get
	\[
	(\alpha_2-\alpha_1)=\gamma(\beta_2-\beta_1)\quad\implies\quad\gamma=\frac{\alpha_2-\alpha_1}{\beta_2-\beta_1}=\frac{\delta'}{\delta''}\in(0,1/4].
	\]
	Then
	\[
	\eta=\frac{\alpha_1-\beta_1\gamma}{1-\gamma}>\alpha_1-\beta_1\gamma\ge\frac{1-\beta_1}{4}>0.
	\]
	Further
	\[
	\eta=\frac{\alpha_1-\beta_1\gamma}{1-\gamma}\le\frac{4\alpha_1}{3}\le\frac{2}{3}.
	\]
	So $\eta\in(0,2/3]$.
\end{proof}

Applying the above claim to $\alpha_1 = (1-\delta_k)/2, \alpha_2 = \alpha, \delta' = \alpha_2-\alpha_1$ and $\delta'' = 4\delta_k/r,$ we see that there exist $\gamma,\eta\in [0,1]$ such that 
\begin{equation}
\label{eq:gammaetak}
(1-\delta_k)/2 = \gamma\cdot (1-\delta'')/2 + (1-\gamma)\cdot \eta \text{ and }
\alpha = \gamma\cdot (1+\delta'')/2 + (1-\gamma)\cdot \eta
\end{equation}

To define the function $\bm{g}_{k+1},$ we start with an intermediate probabilistic function $\bm{h}_k$ on $N_k$ inputs. On any input $x\in\{0,1\}^{N_k},$ the function $\bm{h}_k$ is defined as follows.

\noindent
$\bm{h}_k(x)$:
\begin{itemize}
\item Sample a random $\bm{b}$ from the distribution $D_{\gamma}^{N_k}$ and $\bm{y}$ from the distribution $D_{\eta}^{N_k}.$
\item Define $\bm{z}\in \{0,1\}^{N_k}$ by $\bm{z}_i = \bm{b}_i\cdot x_i + (1-\bm{b}_i)\cdot \bm{y}_i.$
\item $\bm{h}_k(x)$ is defined to be $\bm{g}_k(\bm{z}).$
\end{itemize}

Note that as each $\bm{z}_i$ is a (random) degree-$1$ polynomial in $x$, the probabilistic function $\bm{h}_k(x)$ satisfies $\deg(\bm{h}_k)\leq \deg(\bm{g}_k).$ 

Also, note that when $\bm{x}$ is sampled from the $D_{(1-\delta'')/2}^{N_k}$ or $D_{(1+\delta'')/2}^{N_k}$, then by (\ref{eq:gammaetak}), $\bm{z}$ has the distribution $D_{(1-\delta_k)/2}^{N_k}$ or $D_\alpha^{N_k}$ respectively. Hence, we get
\begin{align*}
\pi_{\bm{h}_k}((1-\delta'')/2) = \pi_{\bm{g}_k}((1-\delta_k)/2) \leq 0.1 \text{ and }
\pi_{\bm{h}_k}((1+\delta'')/2) = \pi_{\bm{g}_k}(\alpha) \geq 0.4.
\end{align*}

We are now ready to define $\bm{g}_{k+1}.$ Let $\mathrm{Thr}_{t/4}^t:\{0,1\}^t\rightarrow \{0,1\}$ be the Boolean function that accepts inputs of Hamming weight at least $t/4$. We set $N_{k+1} = N_k\cdot t$ and define $\bm{g}_{k+1}:\{0,1\}^{N_{k+1}}\rightarrow \{0,1\}$ by
\[
\bm{g}_{k+1}(x) = \mathrm{Thr}_{t/4}^t(\bm{h}_k^{(1)}(x^{(1)}),\ldots,\bm{h}_k^{(t)}(x^{(t)}))
\]
where $\bm{h}_k^{(1)},\ldots,\bm{h}_k^{(t)}$ are \emph{independent} copies of the probabilistic function $\bm{h}_k$ and $x^{(i)}\in \{0,1\}^{N_k}$ is defined by $x^{(i)}_j = x_{(i-1)N_k + j}$ for each $j\in [N_k]$. Clearly, $\deg(\bm{g}_{k+1})\leq \deg(\mathrm{Thr}_{t/4}^t)\cdot \deg(\bm{h}_k) \leq \deg(\bm{g}_k)\cdot t.$

Note that if $\bm{x}$ is chosen according to the distribution $D_{(1-\delta'')/2}^{N_{k+1}},$ each $\bm{h}_k^{(i)}(\bm{x}^{(i)})$ is a Boolean random variable that is $1$ with probability at most $0.1$. Similarly, if $\bm{x}$ is chosen according to the distribution $D_{(1+\delta'')/2}^{N_{k+1}},$ each $\bm{h}_k^{(i)}(\bm{x}^{(i)})$ is a Boolean random variable that is $1$ with probability at least $0.4$. By a standard Chernoff bound (see e.g. \cite[Theorem 1.1]{DP}), we see that for a large enough absolute constant $t$,
\begin{equation}
\label{eq:gk+1}
\pi_{\bm{g}_{k+1}}((1-\delta'')/2) = \exp(-\Omega(t)) \leq 0.1 \text{  and  }
\pi_{\bm{g}_{k+1}}((1+\delta'')/2) = 1-\exp(-\Omega(t)) \geq 0.9.
\end{equation}
We now fix the value of $t$ so that the above inequalities hold. Note that we have shown the following.

\begin{observation}
\label{obs:gk+1}
$\bm{g}_{k+1}:\{0,1\}^{N_{k+1}}\rightarrow \{0,1\}$ is a probabilistic function that solves the $\delta_k$-coin problem where $N_{k+1} = N_k \cdot t,\deg(\bm{g}_{k+1})\leq \deg(\bm{g}_k)\cdot t$ and $\delta_{k+1} \leq 4\delta_k/r.$
\end{observation}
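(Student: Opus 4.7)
\textbf{Proof plan for Observation~\ref{obs:gk+1}.} The observation collects three assertions about the probabilistic function $\bm{g}_{k+1}$ constructed immediately above: its sample complexity, its degree, and the new bias parameter of the coin problem it solves. The plan is to verify each item by inspecting the construction, with the only nontrivial ingredient being a standard concentration argument for the threshold gate.

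First, the sample complexity. Since $\bm{g}_{k+1}(x) = \mathrm{Thr}_{t/4}^t(\bm{h}_k^{(1)}(x^{(1)}),\ldots,\bm{h}_k^{(t)}(x^{(t)}))$ with the blocks $x^{(i)}$ disjoint and each of length $N_k$, the identity $N_{k+1}=tN_k$ is immediate. For the degree bound, I would argue in two steps. The probabilistic function $\bm{h}_k$ is obtained from $\bm{g}_k$ by applying the substitution $x_i\mapsto \bm{b}_i x_i + (1-\bm{b}_i)\bm{y}_i$, which is affine (degree $1$) in $x_i$ for every fixed realization of $(\bm{b},\bm{y})$. Hence every polynomial in the support of the distribution of $\bm{h}_k$ has degree at most $\deg(\bm{g}_k)$, i.e.\ $\deg(\bm{h}_k)\le\deg(\bm{g}_k)$. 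The threshold function $\mathrm{Thr}_{t/4}^t$ is a Boolean function on $t$ inputs, so it is represented by an $\F_2$-polynomial of degree at most $t$. Composing the outer threshold with independent copies of $\bm{h}_k$ on disjoint variable blocks (and noting independence of the randomness used to sample the different $\bm{h}_k^{(i)}$) gives $\deg(\bm{g}_{k+1})\le t\cdot \deg(\bm{h}_k)\le t\cdot \deg(\bm{g}_k)$.

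Finally, the coin problem. With $\delta_{k+1}:=\delta''=4\delta_k/r$ (we treat Case $1$; Case $2$ is symmetric), the previously established bounds
\[
\pi_{\bm{h}_k}((1-\delta'')/2)\le 0.1 \quad\text{and}\quad \pi_{\bm{h}_k}((1+\delta'')/2)\ge 0.4
\]
reduce the task to a Chernoff-type argument. When $\bm{x}\sim D^{N_{k+1}}_{(1-\delta'')/2}$, the $t$ random bits $\bm{h}_k^{(1)}(\bm{x}^{(1)}),\ldots,\bm{h}_k^{(t)}(\bm{x}^{(t)})$ are mutually independent Bernoullis (their input blocks are disjoint and their internal randomness is independent) with mean at most $0.1$; when $\bm{x}\sim D^{N_{k+1}}_{(1+\delta'')/2}$ the means are at least $0.4$. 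Since the threshold $t/4$ lies a constant additive gap from both $0.1$ and $0.4$, choosing $t$ to be a sufficiently large absolute constant makes the Chernoff failure probability at most $\exp(-\Omega(t))\le 0.1$ in each direction, giving $\pi_{\bm{g}_{k+1}}((1-\delta_{k+1})/2)\le 0.1$ and $\pi_{\bm{g}_{k+1}}((1+\delta_{k+1})/2)\ge 0.9$, which is exactly the required coin-problem condition.

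The only step that is not bookkeeping is the Chernoff bound, and its only requirement is the gap between $0.1$ and $0.4$, which was manufactured precisely so that a constant-size threshold would amplify the $0.3$-gap for $\bm{h}_k$ into the desired $0.8$-gap for $\bm{g}_{k+1}$. So there is no real obstacle; the main point to record is that fixing $t$ as an absolute constant (rather than letting it grow with $k$) is what makes the recurrence on degrees $\deg(\bm{g}_{k+1})\le t\cdot\deg(\bm{g}_k)$ useful for the subsequent iteration that drives the degree lower bound in Theorem~\ref{thm:lbd-polys}.
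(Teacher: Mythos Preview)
Your proposal is correct and follows essentially the same approach as the paper: the observation is just a summary of facts established in the construction immediately preceding it, and you verify each one (sample complexity by definition of the blocks, degree via the affine substitution defining $\bm{h}_k$ followed by composition with the degree-$\le t$ threshold, and the coin-problem guarantee via the Chernoff bound on the $t$ independent Bernoullis feeding $\mathrm{Thr}^t_{t/4}$). The only cosmetic difference is that the paper states these points inline before packaging them as the observation, whereas you restate them as a self-contained argument.
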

\end{itemize}

We now argue that, for $r = 10t$, the above process cannot produce an infinite sequence of probabilistic functions. In other words, there is a fixed $k$ such that $\bm{g}_k$ is in neither Case 1 nor Case 2 mentioned above. 

Assume to the contrary that the above process produces an infinite sequence of probabilistic functions. By Observation~\ref{obs:gk+1} and induction, we see that $\bm{g}_k$ is a probabilistic function on at most $N\cdot t^k$ variables solving the $\delta_k$-coin problem for $\delta_k\leq \delta\cdot (4/r)^k.$ We now appeal to the following standard fact.

\begin{fact}[Folklore]
\label{fac:stat-dist}
Let $\delta'\in (0,1)$ and $N'\in \mathbb{N}$ be arbitrary. Then, the statistical distance between $D_{(1-\delta')/2}^{N'}$ and $D_{(1+\delta')/2}^{N'}$ is at most $O(\sqrt{N'}\cdot \delta').$
\end{fact}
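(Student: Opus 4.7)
The plan is to deduce this from the standard chain of Pinsker's inequality and the tensorization of KL divergence. Writing $d_{\mathrm{TV}}$ for statistical distance and $\mathrm{KL}(\cdot\,\|\,\cdot)$ for KL divergence in natural logs, Pinsker's inequality gives $d_{\mathrm{TV}}(P,Q) \leq \sqrt{\mathrm{KL}(P\,\|\,Q)/2}$, and KL divergence tensorizes exactly, so $\mathrm{KL}(P^{\otimes N'}\,\|\,Q^{\otimes N'}) = N' \cdot \mathrm{KL}(P\,\|\,Q)$. Hence it suffices to show $\mathrm{KL}(\mathrm{Ber}(p)\,\|\,\mathrm{Ber}(q)) = O(\delta'^2)$ where $p = (1-\delta')/2$ and $q = (1+\delta')/2$.

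This single-coordinate calculation is immediate. Since $1-p = q$ and $1-q = p$ in this symmetric setting, one computes
\[
\mathrm{KL}(\mathrm{Ber}(p)\,\|\,\mathrm{Ber}(q)) = p\ln\frac{p}{q} + q\ln\frac{q}{p} = (q-p)\ln\frac{q}{p} = \delta' \ln\frac{1+\delta'}{1-\delta'}.
\]
We may assume $\delta' \leq 1/2$, as otherwise the claimed bound is trivial (since $d_{\mathrm{TV}} \leq 1$). For $\delta' \leq 1/2$, the expansion $\ln((1+\delta')/(1-\delta')) = 2(\delta' + \delta'^3/3 + \cdots) \leq 3\delta'$ gives $\mathrm{KL}(\mathrm{Ber}(p)\,\|\,\mathrm{Ber}(q)) \leq 3\delta'^2$. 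Tensorizing and applying Pinsker yields
\[
d_{\mathrm{TV}}(D_{(1-\delta')/2}^{N'}, D_{(1+\delta')/2}^{N'}) \leq \sqrt{3N'/2}\cdot \delta' = O(\sqrt{N'}\delta'),
\]
which is the claimed bound.

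There is no real obstacle here: the fact is classical and the proof is just bookkeeping. An alternative route that avoids information-theoretic inequalities is to use exchangeability to reduce to the one-dimensional problem $d_{\mathrm{TV}}(\mathrm{Bin}(N',p), \mathrm{Bin}(N',q))$ (since the two product distributions are symmetric in coordinates, the optimal distinguisher depends only on the Hamming weight), and then estimate this directly via Stirling's approximation or a normal-approximation argument, again yielding the $O(\sqrt{N'}\delta')$ bound. Either approach suffices for the use made of Fact~\ref{fac:stat-dist} in the sequel.
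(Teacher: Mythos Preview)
Your proof is correct. The paper does not actually prove Fact~\ref{fac:stat-dist}: it is stated as folklore and used without proof, so there is no ``paper's own proof'' to compare against. The Pinsker-plus-tensorization argument you give is one of the standard ways to establish this bound, and all the steps check out (the symmetric KL computation $\delta'\ln\frac{1+\delta'}{1-\delta'}\le 3\delta'^2$ for $\delta'\le 1/2$, the trivial case $\delta'>1/2$, and the final $\sqrt{3N'/2}\,\delta'$ bound). Your aside about the exchangeability/binomial route is also a valid alternative derivation.
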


Thus, for $\bm{g}_k$ to able to solve the $\delta_k$-coin problem with $N_k$ samples, we must have $\sqrt{N_k}\cdot \delta_k\geq \alpha_0$ for some absolute positive constant $\alpha_0.$ On the other hand, this cannot be true for large enough $k$, since $\sqrt{N_k}\delta_k \leq N_k\delta_k \leq N\delta \cdot (4t/r)^k$ and $r\geq 10t.$ This yields a contradiction.

Thus, we have shown that for large enough $k$, the function $\bm{g}_k$ is in neither Case 1 nor Case 2. Equivalently, for any $\alpha\in [1/2-\delta_k,1/2-\delta_k+\delta_k/r]$ and any $\beta \in [1/2+\delta_k-\delta_k/r, 1/2 + \delta_k],$ we have
\[
\pi_{\bm{g}_k}(\alpha) \leq 0.4 \text{  and  } \pi_{\bm{g}_k}(\beta) \geq 0.6.
\]

Using error reduction as above, we can obtain a probabilistic function that satisfies the above inequalities with parameters $\zeta := \exp(-10r^2)$ and $1-\zeta$ respectively. Set $\ell = 10\lceil \log(1/\zeta)\rceil$ and define $\bm{h}:\{0,1\}^{N_k\cdot \ell}\rightarrow \{0,1\}$ by
\srikanth{I think this value of $\zeta$ should be small enough. Redefine if necessary.}
\[
\bm{h}(x) = \Maj_\ell(\bm{g}_k^{(1)}(x^{(1)}),\ldots,\bm{g}_k^{(t)}(x^{(t)}))
\]
where $\bm{g}_k^{(1)},\ldots,\bm{g}_k^{(t)}$ are \emph{independent} copies of the probabilistic function $\bm{g}_k$ and $x^{(i)}\in \{0,1\}^{N_k}$ is defined by $x^{(i)}_j = x_{(i-1)N_k + j}$ for each $j\in [N_k]$. 

Clearly, $\deg(\bm{h}) \leq \ell\cdot \deg(\bm{g}_k) = O(\deg(\bm{g}_k))$ as $\ell$ is an absolute constant. Further, by the Chernoff bound, $\bm{h}$ satisfies

\begin{equation}
\label{eq:h}
\pi_{\bm{h}}(\alpha) \leq \zeta \text{  and  } \pi_{\bm{h}}(\beta) \geq 1-\zeta.
\end{equation}

This concludes the first phase of the proof. In the second phase, we will show the following lower bound on $\deg(\bm{h})$.

\begin{claim}
\label{clm:h}
$\deg(\bm{h})\geq \Omega(1/\delta_k).$
\end{claim}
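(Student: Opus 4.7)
The plan is to adapt the O'Donnell--Wimmer sampling reduction from Section~\ref{sec:monotone}: instead of producing merely an approximate majority on $\{0,1\}^n$, I exploit the interval guarantees in (\ref{eq:h}) to obtain a polynomial that agrees with $\Maj_n$ on most of a narrow Hamming-weight band, and then invoke a Kopparty--Srinivasan-style degree lower bound on this restricted agreement. Since the band sits at distance $\Theta(\sqrt{n})$ from $n/2$ when $n = \Theta(1/\delta_k^2)$, an $\Omega(\sqrt{n})$ lower bound for such polynomials translates to $\deg(\bm{h}) = \Omega(1/\delta_k)$.

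Concretely, I would set $n = \lceil c/\delta_k^2 \rceil$ for an absolute constant $c$ chosen so that $\delta_k n \in [\sqrt{n}/2,\sqrt{n}]$, and define $\bm{F}':\{0,1\}^n \to \{0,1\}$ by $\bm{F}'(x) = \bm{h}(x_{\bm{i}_1},\ldots,x_{\bm{i}_{N_k\ell}})$, where $\bm{i}_1,\ldots,\bm{i}_{N_k\ell}$ are i.i.d.\ uniform draws from $[n]$ taken independently of $\bm{h}$. For any fixed $x$ with $|x|/n = \gamma$, the sampled string is distributed exactly as $D_\gamma^{N_k\ell}$, so $\prob{\bm{F}'}{\bm{F}'(x)=1} = \pi_{\bm{h}}(\gamma)$; moreover, substituting variables (with $\F_2$-collapses $x_i^2=x_i$) cannot increase the degree, so $\deg(\bm{F}') \leq \deg(\bm{h})$ as probabilistic polynomials. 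Letting $T \subseteq \{0,1\}^n$ be the set of inputs whose Hamming weight lies in $[n/2 - \delta_k n,\ n/2 - \delta_k n + \delta_k n/r] \cup [n/2 + \delta_k n - \delta_k n/r,\ n/2 + \delta_k n]$, equation (\ref{eq:h}) gives $\prob{\bm{F}'}{\bm{F}'(x) = \Maj_n(x)} \geq 1-\zeta$ for every $x \in T$. Averaging over $x$ uniform in $T$ and applying Markov, I can fix the randomness of $\bm{F}'$ to obtain a deterministic polynomial $F' \in \F_2[x_1,\ldots,x_n]$ of degree at most $\deg(\bm{h})$ that agrees with $\Maj_n$ on all but an $O(\zeta)$ fraction of inputs in $T$.

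The remaining step is the modified Kopparty--Srinivasan degree lower bound: any $\F_2$-polynomial $F'$ agreeing with $\Maj_n$ on at least a $1 - O(\zeta)$ fraction of $T$ must satisfy $\deg(F') = \Omega(\sqrt{n})$. Following the dual argument from~\cite{KS} (which itself builds on~\cite{ABFR,Green}), one considers the space of multilinear polynomials of degree at most $n/2 + \deg(F')$ restricted to $T$: using $F'$'s agreement with $\Maj_n$ on most of $T$ to ``peel off'' the $\Maj_n$ core of any Boolean function on $T$, one shows that this restricted space must realize essentially all $\F_2$-valued functions on the large agreement subset of $T$, forcing $\binom{n}{\leq n/2 + \deg(F')} \geq \Omega(|T|) = \Omega(2^n/\sqrt{n})$, whence $\deg(F') = \Omega(\sqrt{n})$ by standard binomial tail estimates. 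Combining with $\sqrt{n} = \Theta(1/\delta_k)$ yields $\deg(\bm{h}) \geq \deg(F') = \Omega(1/\delta_k)$, as required.

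The main obstacle lies precisely in this last step. The classical Smolensky argument demands agreement with $\Maj_n$ on a $(1-\varepsilon)$-fraction of the \emph{entire} hypercube, whereas here the agreement set $T$ has density only $\Theta(1/\sqrt{n})$ and is confined to a narrow Hamming-weight band; a naive dimension count would yield only the trivial $\deg(F') = \Omega(1)$. The dual approach has to exploit the rigid symmetric behavior of $\Maj_n$ across the two sides of the band so that agreement on $T$ becomes as strong a constraint as agreement on all of $\{0,1\}^n$. Making this exchange rigorous in the present setting (with $\zeta$ chosen exponentially small in $r$, and the two halves of $T$ symmetric about $n/2$) is the technical heart of the proof.
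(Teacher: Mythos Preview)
Your sampling reduction---defining $\bm{F}'$ on $\{0,1\}^n$ by substituting uniformly random coordinates into $\bm{h}$ and then averaging to extract a deterministic polynomial---is exactly what the paper does. The gap is entirely in the degree lower bound step, and the argument you sketch there does not work.

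The inequality you write, $\binom{n}{\leq n/2+\deg(F')}\geq \Omega(|T|)$, is the Szegedy/Smolensky '93 dimension count, not the Kopparty--Srinivasan argument, and it is vacuous here: $|T|$ is at most a small constant fraction of $2^n$ (certainly $|T|\leq 2^{n-1}\approx\binom{n}{\leq n/2}$), so the inequality holds even for $\deg(F')=0$. You acknowledge the obstacle but do not resolve it; in particular, ``exploiting the symmetric behavior of $\Maj_n$ across the two halves of $T$'' does not turn this dimension count into a nontrivial bound.

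The paper's resolution is different in two ways. First, it does not restrict to the band $T$ but works with the entire tail region $\{|x|\leq R\}\cup\{|x|\geq n-R\}$ with $R=\lfloor(1/2-\delta_k+\delta_k/r)n\rfloor$, bounding the error set $E_h^R$ as (band errors) $+$ (all far-tail points) $\leq 2\zeta\binom{n}{\leq R}+2\binom{n}{\leq R'}$ where $R'=\lfloor(1/2-\delta_k)n\rfloor$. Second---and this is why your choice of $n$ would fail---the paper takes $n=\lceil r^2/\delta_k^2\rceil$, so that the far tail sits $r\sqrt{n}$ (not $\sqrt{n}$) from the center and has Gaussian mass $\approx e^{-2r^2}$, comparable to $\zeta=e^{-10r^2}$. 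The degree bound then comes from Lemma~\ref{lem:smol-ext}, which is the actual KS-style argument: find a nonzero $g$ of degree $\leq R-D$ vanishing on $E_h^R$; then $f=gh$ vanishes on the radius-$R$ Hamming ball about $0^n$ but not on the ball about $1^n$, forcing $\deg(f)>R$ and hence $\deg(h)>D=\lfloor\delta_k n/(2r)\rfloor=\Omega(1/\delta_k)$. A CLT computation verifies the needed inequality $2\zeta\binom{n}{\leq R}+2\binom{n}{\leq R'}<\binom{n}{\leq R-D}$.
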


Note that this immediately implies the result since we have 
\[
\deg(\bm{g}) = \deg(\bm{g}_0) \geq \frac{\deg(\bm{g}_k)}{t^k} = \Omega\left(\frac{\deg(\bm{h})}{t^k}\right) =  \Omega\left(\frac{1}{\delta_kt^k}\right) \geq \Omega\left(\frac{1}{\delta\cdot (4t/r)^k}\right) \geq \Omega(1/\delta)
\]
where we have used Observation~\ref{obs:gk+1} and the fact that $r \geq 10t.$

It therefore suffices to prove Claim~\ref{clm:h}. We prove this in two steps. 

We start with an extension of a lower bound of Smolensky~\cite{Smolensky} (see also the earlier results of Razborov~\cite{Razborov} and Szegedy~\cite{szegedy}) on the degrees of polynomials approximating the Majority function.\footnote{In~\cite{Smolensky}, Smolensky proves a lower bound for $\mathrm{MOD}_p$ functions. However, the same idea also can be used to prove lower bounds for the Majority function, as observed by Szegedy~\cite{szegedy}.}  Our method is a slightly different phrasing of this bound following the results of Aspnes, Beigel, Furst and Rudich~\cite{ABFR}, Green~\cite{Green} and Kopparty and Srinivasan~\cite{KS}.\footnote{It can be viewed as a `dual' version of Smolensky's proof. Smolensky's standard proof can also be modified to yield this.}

\begin{lemma}[A slight extension of Smolensky's bound]
\label{lem:smol-ext}
Let $h:\{0,1\}^n\rightarrow \{0,1\}$ be a (deterministic) function satisfying the following. There exist integers $D < R < n/2$ such that $E_h^R$ defined by 
\begin{equation}
\label{eq:EhRdef}
E_h^R = \{ x\in \{0,1\}^n\ |\ h(x) \neq \mathrm{Maj}_{n}(x), |x|\not\in (R,n - R)\}
\end{equation}
satisfies $|E_h^R| <  \binom{n}{\leq (R - D)}.$\footnote{Recall that $\binom{n}{\leq i}$ denotes $\sum_{j=0}^i \binom{n}{j}.$} Then, $\deg(h) > D.$
\end{lemma}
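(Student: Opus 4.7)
The plan is to prove the contrapositive: assuming $\deg(h)\le D$, produce an injective linear map out of $V_{R-D}$ (the space of multilinear polynomials in $\F_2[x_1,\ldots,x_n]$ of degree at most $R-D$) whose image has dimension bounded by $|E_h^R|$. This gives $|E_h^R|\ge\binom{n}{\leq R-D}$, contradicting the hypothesis. The construction is the ``dual'' Smolensky argument in the style of~\cite{KS}, exploiting the fact that $h$ coincides with $\Maj_n$ (which is constant $0$ on $X_{\le R}$ and constant $1$ on $X_{\ge n-R}$) off the small set $E_h^R$.

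The first ingredient is the standard fact, provable via M\"obius inversion on the Boolean lattice, that for every $0\le k\le n$ the evaluation map $V_k\to\F_2^{X_{\le k}}$ is a linear isomorphism, since both spaces have dimension $\binom{n}{\le k}$ and evaluation is surjective (the indicator of any $x\in X_{\le k}$ as a function on $X_{\le k}$ is realized by $\sum_{T\subseteq \Supp(x)}\prod_{i\in T}y_i$). By the complementation symmetry $y\mapsto\mathbf{1}-y$, the same holds for restriction to $X_{\ge n-k}$. Note that $X_{\le R}$ and $X_{\ge n-R}$ are disjoint since $R<n/2$.

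Now define
\[
\Psi:V_{R-D}\longrightarrow \F_2^{X_{\le R}}\oplus\F_2^{X_{\ge n-R}},\qquad \Psi(p)=\bigl((h\cdot p)|_{X_{\le R}},\;((1-h)\cdot p)|_{X_{\ge n-R}}\bigr).
\]
Under the assumption $\deg(h)\le D$, both $h\cdot p$ and $(1-h)\cdot p$ lie in $V_R$. On $X_{\le R}\setminus E_h^R$ we have $h=\Maj_n=0$, so $(h\cdot p)|_{X_{\le R}}$ vanishes off $E_h^R\cap X_{\le R}$; symmetrically $((1-h)\cdot p)|_{X_{\ge n-R}}$ vanishes off $E_h^R\cap X_{\ge n-R}$. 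Hence $\dim\Img(\Psi)\le|E_h^R\cap X_{\le R}|+|E_h^R\cap X_{\ge n-R}|=|E_h^R|$. For injectivity, if $\Psi(p)=0$, then $h\cdot p\in V_R$ vanishes on $X_{\le R}$ and hence is identically zero by the isomorphism above; the same reasoning applied to $X_{\ge n-R}$ gives $(1-h)\cdot p\equiv 0$; adding yields $p\equiv 0$. Therefore $\binom{n}{\le R-D}=\dim V_{R-D}\le|E_h^R|$, the desired contradiction.

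The one technical point deserving care is the isomorphism $V_R\cong\F_2^{X_{\le R}}$ (and the analogue on $X_{\ge n-R}$); everything else is formal once the map $\Psi$ is in place. The main obstacle is really just choosing the right map: combining the two ``one-sided'' error regions into a single direct-sum target is what lets injectivity be deduced by adding $h\cdot p$ and $(1-h)\cdot p$, and this is the step that makes the dual-space dimension count close out cleanly.
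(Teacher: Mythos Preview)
Your proof is correct and rests on the same core fact as the paper's---namely that evaluation gives an isomorphism $V_R\cong\F_2^{X_{\le R}}$ (equivalently, the paper's Fact~\ref{fac:hamming balls} that a nonzero multilinear polynomial of degree $\le d$ cannot vanish on a Hamming ball of radius $d$). The packaging, however, is genuinely different. The paper works in the ``primal'' direction: it uses $|E_h^R|<\binom{n}{\le R-D}$ to exhibit a single nonzero $g\in V_{R-D}$ vanishing on $E_h^R$, then argues that $f=g\cdot h$ vanishes on $B_1=X_{\le R}$ but not on $B_2=X_{\ge n-R}$ (the latter because $g$ cannot vanish on all of $B_2$, and wherever $g\neq 0$ on $B_2$ we must have $h=1$), forcing $\deg(f)>R$ and hence $\deg(h)>D$. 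You instead run the contrapositive via a dimension count: under $\deg(h)\le D$, the map $\Psi(p)=\bigl((hp)|_{X_{\le R}},((1-h)p)|_{X_{\ge n-R}}\bigr)$ is injective with image supported on $E_h^R$, so $\binom{n}{\le R-D}\le|E_h^R|$. Your injectivity step---adding the identities $hp\equiv 0$ and $(1-h)p\equiv 0$ to recover $p\equiv 0$---is a clean symmetric substitute for the paper's asymmetric trick of using the single product $gh$ on both balls. Either route is fine; yours is perhaps more transparently linear-algebraic, while the paper's makes the degree bookkeeping explicit.
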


\begin{proof}
Consider the vector space $V_{R-D}$ of all multilinear polynomials of degree $\leq (R - D)$. A generic polynomial $g\in V_{R-D}$ is given by
\[
g(x_1,\ldots,x_n) = \sum_{|S|\leq R-D}a_S \cdot \prod_{i\in S} x_i
\] 
where $a_S\in \F_2$ for each $S$. We claim that there is a \emph{non-zero} $g$ as above that vanishes at \emph{all points} in $E_h^R.$ To see this, note that finding such a $g$ is equivalent to finding a non-zero assignment to the $a_S$ so that the resulting $g$ vanishes at $E_h^R$. Vanishing at any point of $\{0,1\}^n$ gives a linear constraint on the coefficients $a_S$. Since we have $|E_h^R| <  \binom{n}{\leq (R - D)}$, we have a homogeneous linear system with more variables than constraints and hence, there exists a non-zero multilinear polynomial $g$ of degree $\leq (R-D)$ which vanishes on $E_h^R$. Thus, there is a non-zero $g$ as claimed.

Let $B_1 = \{ x\in \{0,1\}^n\ |\ |x| \leq R\}$ and $B_2 = \{ x\in \{0,1\}^n\ |\ |x| \geq n-R\}$. Note that $B_1$ and $B_2$ are both Hamming balls of radius $R$. Let $f$ be the pointwise product of the functions $g$ and $h$. Note that $f$ can be represented as a \emph{multilinear} polynomial of degree at most $\deg(g) + \deg(h)$ (by replacing $x_i^2$ by $x_i$ as necessary in the polynomial $g\cdot h$).

We will need the following standard fact (see e.g.~\cite{KS} for a proof).

\begin{fact}
\label{fac:hamming balls}
Let $P$ be a non-zero multilinear polynomial of degree $\leq d$ in $n$ variables. Then $P$ cannot vanish on a Hamming ball of radius $d$.
\end{fact}

On $B_1$, $g$ vanishes wherever $h$ does not (by definition of $E_h^R$) and therefore $f$ vanishes everywhere. 

On $B_2$, $h$ is non-vanishing wherever $g$ is non-vanishing. But since $B_2$ is a Hamming ball of radius $R > R-D \geq \text{deg}(g)$, Fact~\ref{fac:hamming balls} implies that $g(x_0) \neq 0$ for some point $x_0$ in $B_2$. Therefore $h(x_0) \neq 0$ and so $f(x_0) \neq 0$. In particular, $f$ is a non-zero multilinear polynomial of degree at most $\deg(g) + \deg(h).$

Since $f$ is non-zero and vanishes on $B_1$ which is a Hamming ball of radius $R$, Fact~\ref{fac:hamming balls} implies that $\text{deg}(f) > R$. Therefore $(R-D)+\text{deg}(h) \geq \text{deg}(g)+\text{deg}(h) \geq \text{deg}(f) > R \Rightarrow \text{deg}(h) > D$.
\end{proof}

We now prove Claim~\ref{clm:h}.

\begin{proof}[Proof of Claim~\ref{clm:h}]
The idea is to use $\bm{h}$ to produce a deterministic function $h$ of the same degree to which Lemma~\ref{lem:smol-ext} is applicable. Let $M$ denote $N_k\cdot \ell$, the sample complexity (i.e. number of inputs) of $\bm{h}.$ 

Let $n = \lceil r^2/\delta_k^2\rceil $. Define a probabilistic function $\tilde{\bm{h}}:\{0,1\}^n\rightarrow \{0,1\}$ as follows. On any input $x\in \{0,1\}^n$, we choose $\bm{i}_1,\ldots,\bm{i}_M\in [n]$ i.u.a.r. and set
\[
\tilde{\bm{h}}(x) = \bm{h}(x_{\bm{i}_1},\ldots,x_{\bm{i}_M}).
\]

Clearly, we have $\deg(\tilde{\bm{h}})\leq \deg(\bm{h}).$ Also note that for any $x\in \{0,1\}^n$, we have
\[
\prob{\tilde{\bm{h}}}{\tilde{\bm{h}}(x) = 1} = \pi_{\bm{h}}(|x|/n)
\]
since in this case $(x_{\bm{i}_1},\ldots,x_{\bm{i}_M})$ is drawn from the distribution $D_{|x|/n}^M.$  By (\ref{eq:h}), we have for any $x$ such that $|x|/n\in [1/2-\delta_k,1/2-\delta_k+\delta_k/r]\cup [1/2+\delta_k-\delta_k/r,1/2+\delta_k],$ 
\[
\prob{\tilde{\bm{h}}}{\tilde{\bm{h}}(x) \neq \Maj_n(x)} \leq \zeta.
\]
In particular, for $\bm{x}$ chosen uniformly at random from $\{0,1\}^n$, we have
\[
\prob{\bm{x},\tilde{\bm{h}}}{\tilde{\bm{h}}(\bm{x}) \neq \Maj_n(\bm{x})\ |\ |\bm{x}|/n\in [1/2-\delta_k+\delta_k/r,1/2-\delta_k]\cup [1/2+\delta_k-\delta_k/r,1/2+\delta_k] } \leq \zeta.
\]

We will apply Lemma~\ref{lem:smol-ext} below with $n$ unchanged, $R = \lfloor (1/2-\delta_k + \delta_k/r)\cdot n\rfloor,$ and $D = \lfloor \delta_kn/(2r)\rfloor.$ For these parameters, we have 
\begin{align*}
\avg{\tilde{\bm{h}}}{\frac{|E^R_{\tilde{\bm{h}}}|}{2\binom{n}{\leq R}}} &= \avg{\tilde{\bm{h}}}{\prob{\bm{x}}{\tilde{\bm{h}}(\bm{x}) \neq \Maj_n(\bm{x})\ |\ |\bm{x}|\not\in (R,n-R)}}\\
&=\prob{\bm{x},\tilde{\bm{h}}}{\tilde{\bm{h}}(\bm{x}) \neq \Maj_n(\bm{x})\ |\ |\bm{x}|\not\in (R,n-R)}\\
&\leq \prob{\bm{x},\tilde{\bm{h}}}{\tilde{\bm{h}}(\bm{x}) \neq \Maj_n(\bm{x})\ |\ |\bm{x}|/n\in [1/2-\delta_k,1/2-\delta_k+\delta_k/r]\cup [1/2+\delta_k-\delta_k/r,1/2+\delta_k] }\\
&+ \prob{\bm{x},\tilde{\bm{h}}}{ |\bm{x}|/n\not\in [1/2-\delta_k,1/2+\delta_k]\ |\ |\bm{x}|\not\in (R,n-R)}\\
&\leq \zeta + \prob{\bm{x}}{ |\bm{x}|/n\not\in [1/2-\delta_k,1/2+\delta_k]\ |\ |\bm{x}|\not\in (R,n-R)} \leq \zeta + \frac{2\binom{n}{\leq R'}}{2\binom{n}{\leq R}}
\end{align*}
where $R' = \lfloor (1/2-\delta_k)n\rfloor.$ Hence, we have
\[
\avg{\tilde{\bm{h}}}{|E^R_{\tilde{\bm{h}}}|} \leq 2\zeta\binom{n}{\leq R} + 2\binom{n}{\leq R'}.
\]

By averaging, we can fix a deterministic $h:\{0,1\}^n\rightarrow \{0,1\}$ so that $\deg(h)\leq \deg(\tilde{\bm{h}})$ and we have the same bound for $|E^R_h|.$ By a computation, we show below that the above bound is strictly smaller than $\binom{n}{\leq R - D}.$ Lemma~\ref{lem:smol-ext} then  implies that $\deg(h)\geq D = \Omega(1/\delta_k),$ completing the proof of Claim~\ref{clm:h}. 

It remains to prove only that 
\begin{equation}
\label{eq:cltineq}
2\zeta\binom{n}{\leq R} + 2\binom{n}{\leq R'} \leq \binom{n}{\leq R-D}.
\end{equation}
\srikanth{UV: fill in proof.}

	Recall that $\zeta=e^{-10r^2}$, where $r=10t$ and $t\geq 1$.   
	
	Now let $V_n(\alpha)=\frac{1}{2^n}{n\choose\le n/2-\alpha\sqrt{n}}$ and $\Phi(x)=\int_x^\infty\frac{e^{-t^2/2}}{\sqrt{2\pi}}\,dt$.  Then by the Central Limit Theorem (see, e.g.,~\cite[Chapter VIII, vol II]{Feller}), for any fixed $\alpha$, we have
	\begin{equation}
	\label{eq:clt}
	\lim_{n\to\infty}(V_n(\alpha)-\Phi(2\alpha))=0.
	\end{equation}
	Also, we have the estimates~\cite[Lemma 2, Chapter VII, vol I]{Feller}
	\begin{equation}
	\label{eq:gaussian}
	\frac{1}{\sqrt{2\pi}}\cdot\left(\frac{1}{x}-\frac{1}{x^3}\right)e^{-x^2/2}\le\Phi(x)\le \frac{1}{\sqrt{2\pi}}\cdot \frac{1}{x}e^{-x^2/2},\quad x>0.
	\end{equation}
Note that we have $\delta_k n \geq \delta_k\cdot (r/\delta_k) \cdot \sqrt{n} = r\sqrt{n}.$ So we get
	\begin{align}
	\frac{1}{2^n}\cdot 2\zeta{n\choose\le R}+\frac{1}{2^n}{n\choose\le R'}&\le \frac{1}{2^n}\cdot 2\zeta{n\choose \le n/2-(1-1/r)\delta_kn}+\frac{1}{2^n}\cdot 2{n\choose\le n/2-\delta_kn}\notag\\
	&\le \frac{1}{2^n}\cdot 2\zeta{n\choose \le n/2-(1-1/r)r\sqrt{n}}+\frac{1}{2^n}\cdot 2{n\choose\le n/2-r\sqrt{n}}\notag\\
	&= 2\zeta\cdot V_n(r-1) + V_n(r)\notag\\
	&\leq 2\zeta + V_n(r)\notag\\
	&\leq 2e^{-10r^2} + \frac{1}{\sqrt{2\pi}r}\cdot e^{-r^2/2} + o(1)\label{eq:clt1}
	\end{align}
	where the second-last inequality uses the fact $V_n(\alpha)\leq 1$ for any $\alpha,$ and the final inequality follows from the definition of $\zeta$ and (\ref{eq:clt}) and (\ref{eq:gaussian}) above. Note that the $o(1)$ above goes to $0$ as $n\rightarrow \infty$ (which happens as $\delta \rightarrow 0$).

	Further, we have $\delta_k n \leq \delta_k \cdot ((r/\delta_k) + 1)\cdot \sqrt{n} = (r+o(1))\sqrt{n},$ which yields
	\begin{align}
	\frac{1}{2^n}{n\choose\le R-D}&=\frac{1}{2^n}{n\choose\le \lfloor n/2-(1-1/r)\delta_kn\rfloor-\lfloor(1/2r)\delta_kn\rfloor}\notag\\
	&\ge\frac{1}{2^n}{n\choose \le n/2-(1-1/2r)\delta_kn-1}\notag\\
	&= \frac{1}{2^n}{n\choose \le n/2-(1-1/2r+o(1))\cdot \delta_kn}\notag\\
	&\geq\frac{1}{2^n}{n\choose \le n/2-(1-1/2r+o(1))\cdot (r+o(1))\sqrt{n}} \notag\\
	&= \frac{1}{2^n}{n\choose \le (n/2)- (r-(1/2) + o(1))\cdot\sqrt{n}} \geq V_n(r-(1/4))\qquad \text{(for large enough $n$)}\notag\\
	&\geq \frac{1}{\sqrt{2\pi}} e^{-(r-(1/4))^2/2} \cdot \left(\frac{1}{r-(1/4)} - \frac{1}{(r-1/4)^3}\right) - o(1)\notag\\	
	&\geq  \frac{1}{\sqrt{2\pi}} e^{-(r^2/2) + 2}\cdot \frac{1}{2r} - o(1)
	\label{eq:clt2}
	\end{align}
	where the second-last inequality uses (\ref{eq:clt}) and (\ref{eq:gaussian}) and the final inequality uses the fact that $r\geq 10t \geq 10.$
	From (\ref{eq:clt1}) and (\ref{eq:clt2}), the inequality follows for all large $n$.
\end{proof}

\section{Open Problems}
\label{sec:discussion}

We close with some open problems. 

\begin{itemize}
%

\item We get almost optimal upper and lower bounds on the complexity of the $\delta$-coin problem. In particular, as mentioned in Theorem~\ref{thm:main-intro} and Theorem~\ref{thm:lb-intro}, the upper and lower bounds on the size of $\AC^0[\oplus]$ formulas computing the $\delta$-coin problem are $\exp(O(d(1/\delta)^{1/(d-1)}))$ and $\exp(\Omega(d(1/\delta)^{1/(d-1)}))$, respectively. It may be possible to get even tighter bounds by exactly matching the constants in the exponents in these bounds. The strongest result in this direction would be to give explicit separations between $\AC^0[\oplus]$ formulas of size $s$ and size $s^{1+\varepsilon}$ for any fixed $\varepsilon > 0$ (for $s = n^{O(1)}$, for example).

For circuits, we have a \emph{non-explicit} upper bound of $\exp(O((1/\delta)^{1/(d-1)}))$ due to Rossman and Srinivasan~\cite{RossmanS}. It would be interesting to achieve this upper bound with an explicit family of circuits. 

\item In Theorem~\ref{thm:main-intro} we get a $(1/\delta)^{2^{O(d)}}$ upper bound on the sample complexity of the $\delta$-coin problem. We believe that this can be improved to $O((1/\delta)^2)$. (We get this for depth-$2$ formulas, but not for larger depths.)

\item Finally, can we match the $\AC^0$ size-hierarchy theorem of Rossman~\cite{Rossman-clique} by separating $\AC^0[\oplus]$ circuits of size $s$ and some fixed depth (say $2$) and $\AC^0[\oplus]$ circuits of size $s^{\varepsilon}$ (for some absolute constant $\varepsilon > 0$) and \emph{any} constant depth?

\end{itemize}

\paragraph{Acknowledgements.} The authors thank Paul Beame, Prahladh Harsha, Ryan O'Donnell, Ben Rossman, Rahul Santhanam, Ramprasad Saptharishi, Madhu Sudan, Avishay Tal, Emanuele Viola, and Avi Wigderson for helpful discussions and comments. The authors also thank the anonymous referees of STOC 2018 for their comments.

\bibliographystyle{alphaurl}
\bibliography{coinbib}
\appendix
\section{Omitted Proofs from Section \ref{sec:amano}}
\begin{reptheorem}{thm:amano}
Assume $d\geq 3$ and $F_d$ is defined as in section \ref{sec:amano}. Then, for small enough $\delta$, we have the following.
\begin{enumerate}
\item For $b,\beta\in \{0,1\}$ and each $i\in [d-1]$ such that $i\equiv \beta \pmod{2}$, we have 
\[
p_i^{(b)} = \prob{\bm{x}\sim \mu_b^{N_i}}{F_i(\bm{x}) = \beta}.
\]
In particular, for any $i\in \{2,\ldots,d-2\}$ and any $b\in \{0,1\}$
\begin{equation}
p_i^{(b)} = (1-p_{i-1}^{(b)})^{f_i}.
\end{equation}
\item For $\beta \in \{0,1\}$ and $i\in [d-2]$ such that $i\equiv \beta\pmod{2},$ we have
\begin{align*}
\frac{1}{2^m}(1+\delta_i\exp(-3\delta_i))&\leq  p_i^{(\beta)} \leq \frac{1}{2^m}(1+\delta_i\exp(3\delta_i))\\
\frac{1}{2^m}(1-\delta_i\exp(3\delta_i))&\leq  p_i^{(1-\beta)} \leq \frac{1}{2^m}(1-\delta_i\exp(-3\delta_i))
\end{align*}
\item Say $d-1\equiv \beta \pmod{2}.$ Then
\[
p_{d-1}^{(\beta)} \geq \exp(-C_1m + C_2) \text{ and } p_{d-1}^{(1-\beta)} \leq \exp(-C_1m - C_2)
\]
where $C_2 = C_1/10.$
\item For each $b\in \{0,1\}$, $\prob{\bm{x}\sim \mu^N_b}{F_d(\bm{x}) = 1-b}\leq 0.05.$ In particular, $F_d$ solves the $\delta$-coin problem.
\end{enumerate}
\end{reptheorem}
\begin{proof}  
  \textbf{Proof of (1) (for $i \in [d-2]$) and (2):} We show these by induction on $i$. We start with the base case $i=1$. Each formula at level $1$ computes an AND of $N_1=f_1=m$ many variables. Hence we have:
  \begin{flalign*}
    &&\prob{D_0^{N_1}}{F_1(\bm{x})=1}&=\left(\frac{1-\delta}{2}\right)^{f_1}= \left(\frac{1-\delta}{2}\right)^{m}\le \frac{1}{2^m}\le 0.5&\\
    &&\prob{D_1^{N_1}}{F_1(\bm{x})=1}&=\left(\frac{1+\delta}{2}\right)^{f_1}= \left(\frac{1+\delta}{2}\right)^{m}\le 0.5 &\text{(for small enough $\delta$)}\\
  \end{flalign*}
  This implies $p_1^{(b)}=\prob{D_{(b)}^{N_1}}{F(\bm{x})=1}$. For part (2):  
  \begin{flalign*}
    &&p_1^{(1)}&=\prob{D_1^m}{F_1(\bm{x})=1}=\left(\frac{1+\delta}{2}\right)^m&\\
    &&    &= \frac{1}{2^m}(1+\delta)^m\le \frac{1}{2^m} \exp(\delta m)& \text{By Fact~\ref{fac:exp} (c) and $\delta m = o(1)$}\\
    &&    &\le \frac{1}{2^m}(1+\delta m+\delta^2 m^2) & \text{By Fact~\ref{fac:exp} (d)}\\
    &&    &= \frac{1}{2^m}(1+\delta m(1+\delta m))& \\
    &&    &\leq \frac{1}{2^m}(1+\delta m \exp(\delta m))& \text{By Fact~\ref{fac:exp} (c)}\\
    &&    &= \frac{1}{2^m}(1+\delta_1 \exp(\delta_1))&\\
    &&    &\le \frac{1}{2^m}(1+\delta_1 \exp(3\delta_1))&\\  
    &&  p_1^{(1)}&= \frac{1}{2^m}(1+\delta)^m\ge \frac{1}{2^m}\exp(m(\delta-\delta^2)) & \text{By Fact~\ref{fac:exp} (b) with $x = \delta$}\\
    &&      &\ge \frac{1}{2^m}(1+(m(\delta-\delta^2)))& \text{By Fact~\ref{fac:exp} (c)}\\
    &&      &= \frac{1}{2^m}(1+(m\delta(1-\delta)))\ge \frac{1}{2^m}(1+(m\delta\exp(-2\delta))) & \text{By Fact~\ref{fac:exp} (a),(b)}\\
    &&      &\ge \frac{1}{2^m}(1+(\delta_1\exp(-3\delta_1)))&\\
  \end{flalign*}
 Similarly, we bound $p_1^{(0)}$:
  \begin{flalign*}
    && p_1^{(0)} &= \frac{1}{2^m}(1-\delta)^m \le \frac{1}{2^m}\exp(-m\delta)& \text{By Fact~\ref{fac:exp} (c)}\\
    && &\le \frac{1}{2^m}(1-m\delta(1-m\delta))& \text{By Fact~\ref{fac:exp} (d)}\\
    && &\le \frac{1}{2^m}(1-m\delta\exp(-3m\delta))& \text{By Fact~\ref{fac:exp} (a),(b)} \\
    && &= \frac{1}{2^m}(1-\delta_1\exp(-3\delta_1))&\\
    && p_1^{(0)} &= \frac{1}{2^m}(1-\delta)^m &\\
    && &\ge \frac{1}{2^m}\exp(m(-\delta-\delta^2))& \text{By Fact~\ref{fac:exp} (b)}\\
    && &\ge \frac{1}{2^m}(1-m\delta(1+\delta))&\text{By Fact~\ref{fac:exp} (c)}\\
    && &\ge \frac{1}{2^m}(1-m\delta(1+3m\delta))&\\
    && &\ge \frac{1}{2^m}(1-\delta_1\exp(3\delta_1))&\text{By Fact~\ref{fac:exp} (c)}
  \end{flalign*}
  We now show the inductive step of parts (1) and (2) for $p_{i+1}^{(b)}$. Since the circuit consists of alternating layers of OR gates and AND gates, we obtain $\forall i\in [d-2]$:
  \begin{align*}
    \prob{D_{b}^{N_i}}{F_i(\bm{x})=\beta}=\left(1-\prob{D_{b}^{N_{i-1}}}{F_{i-1}(\bm{x})=(1-\beta)}\right)^{f_i}
  \end{align*}
  Without loss of generality, assume $i\equiv 0\pmod 2$. Then we have:
  \begin{flalign*}
    &&\prob{\bm{x}\sim\mu_0^{N_{i+1}}}{F_{i+1}(\bm{x})=1}&=    \left(1-\prob{\bm{x}\sim\mu_0^{N_{i}}}{F_{i}(\bm{x})=0}\right)^{f_{i+1}}&\\
    && &=\left(1-p_i^{(0)} \right)^{f_{i+1}}&\text{From induction part (1)}\\
    && &\le \exp(-p_i^{(0)} \cdot f_{i+1})&\text{By Fact~\ref{fac:exp} (c)}\\
    && &= \exp(-p_i^{(0)}\ceil{2^mm\ln 2})&\\
    && &\le \exp(-p_i^{(0)}(2^mm\ln 2))&\\
    && &= \exp\left(-\left(\frac{1}{2^m}(1+\delta_i\exp(-3\delta_i))\right)2^mm\ln 2\right)&\text{From induction part(2)}\\
    && &\le \exp\left(-(m\ln 2)(1+\delta_i\exp(-3\delta_i))\right)&\\
    && &\le \exp(-m\ln 2) =\frac{1}{2^{m}}\le 0.5&\\
    &&\text{Similarly, }\prob{\bm{x}\sim\mu_1^{N_{i+1}}}{F_{i+1}(\bm{x})=1}&=(1-p_i^{(1)})^{f_{i+1}}&\\
    && &\le \exp(-p_i^{(1)}\cdot f_{i+1}) = \exp(-p_i^{(1)}\ceil{2^mm\ln 2})&\text{By Fact~\ref{fac:exp} (c)}\\
    && &\le \exp\left(-\left(\frac{1}{2^m}(1-\delta_i\exp(3\delta_i))\right)(2^mm\ln 2)\right)&\\
    && &\le \exp((-m\ln 2)(1-\delta_i(1+3\delta_i+9\delta_i^2)))&\text{By Fact~\ref{fac:exp} (d)}\\
    && &= 2^{-m(1-\delta_i(1+3\delta_i+9\delta_i^2))}\le 0.5&\text{Since $\delta_i = o(1)$}
  \end{flalign*}
  This completes the induction step for part (1). Now we show the bounds for part (2): 
  \begin{flalign*}
    && p_{i+1}^{(1)} &=(1-p_i^{(1)})^{\ceil{2^mm\ln 2}}\le (1-p_i^{(1)})^{m2^m\ln 2}& \\
    && &\le \left( 1-\frac{1}{2^m}(1-\delta_i\exp(3\delta_i))\right)^{m2^m\ln 2}& \text{From induction hypothesis}\\
    && &\le \exp\left(-\frac{1}{2^m}(1-\delta_i\exp(3\delta_i))m2^m\ln 2\right) & \text{By Fact~\ref{fac:exp} (c)}\\
    && &= \exp\left((-m\ln 2) (1-\delta_i\exp(3\delta_i))\right)&\\
    && &= \exp\left(-m\ln 2+\delta_i\exp(3\delta_i)m\ln 2\right)&\\
    && &=\frac{1}{2^m}\exp(\delta_i\exp(3\delta_i)m\ln 2)&\\
    && &=\frac{1}{2^m}\exp(\delta_{i+1}\exp(3\delta_i))&\\
    && &\le \frac{1}{2^m}\left(1+\delta_{i+1}\exp(3\delta_i)(1+\delta_{i+1}\exp(3\delta_i))\right)& \text{By Fact~\ref{fac:exp} (d)}\\
    && &\le \frac{1}{2^m}(1+\delta_{i+1}\exp(3\delta_i)(1+2\delta_{i+1}))& \text{Since $\delta_{i}= o(1)$}\\
    && &\le \frac{1}{2^m}(1+\delta_{i+1}\exp(3\delta_{i}+\delta_{i+1}))& \text{By Fact~\ref{fac:exp} (c)}\\
    && &\le \frac{1}{2^m}(1+\delta_{i+1}\exp(3\delta_{i+1}))&\\
    && p_{i+1}^{(1)}&=(1-p_i^{(1)})^{\ceil{2^mm\ln 2}}&\\
    && &\ge (1-p_i^{(1)})^{2^mm\ln 2+1} &\\
    && &\ge \left(1-\frac{1}{2^m}(1-\delta_i\exp(-3\delta_i))\right)^{m2^m\ln 2+1}&\text{From induction}\\
    && &\ge \exp\left(\left(\frac{-1}{2^m}(1-\delta_i\exp(-3\delta_i))-\frac{1}{2^{2m}}(1-\delta_i\exp(-3\delta_i))^2\right)\left(m2^m\ln 2+1)\right)\right)&\text{By Fact~\ref{fac:exp} (b)}\\
   && &\ge \exp\left(\left(\frac{-1}{2^m}(1-\delta_i\exp(-3\delta_i))-\frac{1}{2^{2m}}\right)(m2^m\ln 2+1)\right)&\\
   && &\ge \frac{1}{2^m}\exp\left( \delta_i\exp(-3\delta_i)m\ln 2 -\frac{m}{2^m}\right)&\\
   && &= \frac{1}{2^m}\exp\left(\delta_{i+1}\exp(-3\delta_{i})-\frac{m}{2^m}\right)&\\
   && &\ge \frac{1}{2^m}\exp\left(\delta_{i+1}\exp(-3\delta_{i})-\delta_{i+1}^2\right)&\text{Using $\delta_{i+1}\geq \frac{1}{m}$}\\
   && &\ge \frac{1}{2^m}\exp\left(\delta_{i+1}(1-3\delta_{i})-\delta_{i+1}^2\right)&\text{By Fact~\ref{fac:exp} (c)}\\
   && &\ge \frac{1}{2^m}\exp\left(\delta_{i+1}(1-2\delta_{i+1})\right)&\\
   && &\ge \frac{1}{2^m}\exp\left(\delta_{i+1}\exp(-3\delta_{i+1})\right)&\text{By Fact~\ref{fac:exp} (a)(b)}\\
   && &\ge \frac{1}{2^m}\left(1+\delta_{i+1}\exp(-3\delta_{i+1})\right)&\text{By Fact~\ref{fac:exp} (c)}\\
 \end{flalign*}
The case of $p_{i+1}^{(0)}$ is similar and hence omitted.\\
 \textbf{Proof of (3):} Assume, without loss of generality, $d-1\equiv 1\pmod 2$. Let $i=d-1$. Then we have:
 \begin{flalign*}
   \delta_{i-1}&=\delta_{d-2}=\delta m(m(\ln 2))^{d-3}&\\
   &=\delta m^{d-2}(\ln 2)^{d-3}&\\
   \implies \delta_{i-1}m&=\delta m^{d-1}(\ln 2)^{d-3}&\\   &=\delta\left(\ceil{\left(\frac{1}{\delta}\right)^{1/(d-1)}\frac{1}{\ln 2}}\right)^{d-1}(\ln 2)^{d-3}&\\
   &=\frac{1}{\ln 2^{d-1}}(\ln 2)^{d-3}\epsilon&\text{for some $\epsilon\in[1,2]$}\\
   &=\frac{1}{(\ln 2)^2}\epsilon
 \end{flalign*}

 With the above estimate for $\delta_{i-1}m$, we show the required bounds as follows. It follows exactly as in the proof of Part (1) for $i\in [d-2]$ that 
 \[
 p^{(b)}_i = \prob{\bm{x}\in \mu^{N_i}_b}{F_i(\bm{x}) = 1}.
 \]
 Hence, we have
 \begin{flalign*}
   && p_i^{(1)} &= (1-p_{i-1}^{(1)})^{f_{d-1}} &\\
   &&      &= (1-p_{i-1}^{(1)})^{C_1\cdot m2^m} &\\
   &&      &\ge \left(1-\frac{1}{2^m}(1-\delta_{i-1}\exp(-3\delta_{i-1}))\right)^{C_1m2^m} &\text{From part (2)}\\
   &&      &\ge \exp\left(\left(\frac{-1}{2^m}(1-\delta_{i-1}\exp(-3\delta_{i-1}))\right)\left(1+\frac{1}{2^m}(1-\delta_{i-1}\exp(-3\delta_{i-1}))\right)C_1m2^m\right)&\text{By Fact~\ref{fac:exp} (b)}\\
   &&      &\ge \exp\left(\frac{-1}{2^m}\left(1-\delta_{i-1}\exp(-3\delta_{i-1})+\frac{1}{2^m}\right)C_1m2^m\right)&\\
   &&      &= \exp(-C_1m)\exp\left(C_1\left(\delta_{i-1}m\exp(-3\delta_{i-1})-\frac{m}{2^m}\right) \right) &\\
   &&      &\ge \exp(-C_1m) \exp\left(\frac{C_1}{4(\ln 2)^2}\right)&\delta_{i-1}m\geq \frac{1}{(\ln 2)^2}\\
   &&      &\ge \exp(-C_1m) \exp\left(C_2\right)&C_2 = C_1/10\\
   && \text{The }&\text{upper bound for $p_i^{(0)}$ is as follows:}&\\
   && p_i^{(0)}&= (1-p_{i-1}^{(0)})^{C_1m2^m}&\\
   && &\le \left(1-\frac{1}{2^m}(1+\delta_{i-1}\exp(-3\delta_{i-1}))\right)^{C_1m2^m}&\text{From part (2)}\\
   && &\le \exp\left( \frac{-1}{2^m}(1+\delta_{i-1}\exp(-3\delta_{i-1}))C_1m2^m\right)&\text{From Fact~\ref{fac:exp} (c)}\\
   && &\le \exp(-C_1m)\exp(-C_1\delta_{i-1}m\exp(-3\delta_{i-1}))&\\
   && &\le \exp(-C_1m)\exp(-C_1/4(\ln 2)^2)&\\
   && &\le \exp(-C_1m)\exp(-C_2)&\\
 \end{flalign*}

 \textbf{Proof of (4):} Without loss of generality, assume $d\equiv 0\pmod 2$. Then, the output gate of the circuit is an OR gate. Thus:
 \begin{flalign*}
   && \prob{\mu_1^{N_d}}{F_d(\bm{x})=0}&=(1-p_{d-1}^{(1)})^{f_d}&\\
   &&                    &\le \exp(-p_{d-1}^1\cdot f_d)&\text{From Fact~\ref{fac:exp} (c)} \\
   &&                    &\le \exp(-\exp(-C_1m+C_2)\cdot \exp(C_1m))&\text{From part (3)}\\
   &&                    &= \exp(-\exp(C_2))&\\
   &&                    &= \frac{1}{e^{e^5}}\le 0.05&\\
   &&\text{Similarly, }\prob{D_0}{F_d(\bm{x})=1}&\le f_d\cdot \prob{D_n}{F_{d-1}(\bm{x})=1}  &\text{by union bound}\\
   &&                    &\le 2\exp(C_1m)\exp(-C_1m\cdot C_2)&\\
   &&                    &\le 2\exp(-C_2) = \frac{2}{e^5}&\\
   &&                    &\le 0.05&
 \end{flalign*}
\end{proof}

\section{Omitted Proofs from Section \ref{sec:janson}}
\begin{reptheorem}{thm:janson}[Janson's inequality]
Let $C_1,\ldots,C_M$ be any monotone Boolean circuits over inputs $x_1,\ldots,x_N,$ and let $C$ denote $\bigvee_{i\in [M]}C_i.$ For each distinct $i,j\in [M]$, we use $i \sim j$ to denote the fact that $\Vars(C_i)\cap \Vars(C_j)\neq \emptyset$.  Assume each $\bm{x}_j$ ($j\in [n]$) is chosen independently to be $1$ with probability $p_i\in [0,1]$, and that under this distribution, we have $\max_{i\in [M]}\prob{\bm{x}}{C_i(\bm{x}) = 1}\leq 1/2.$ Then, we have
\begin{equation}
\label{eq:janson}
\prod_{i\in [M]}\prob{\bm{x}}{C_i(\bm{x}) = 0} \leq \prob{\bm{x}}{C(\bm{x}) = 0} \leq \left(\prod_{i\in [M]}\prob{\bm{x}}{C_i(\bm{x}) = 0}\right) \cdot \exp(2\Delta)
\end{equation}
where $\Delta := \sum_{i < j: i\sim j} \prob{\bm{x}}{(C_i(\bm{x})=1) \wedge (C_j(\bm{x}) = 1)}.$
\end{reptheorem}

We will use the following inequality in the proof of the above theorem. 

\begin{lemma}[Kleitman's inequality]
  \label{lem:kleitman}
Let $F,G: \{0,1\}^n \rightarrow \{0,1\}$ be two monotonically increasing Boolean functions or monotonically decreasing Boolean functions. Then, 

\[\prob{\bm{x}}{F(\bm{x})=1| G(\bm{x}) =0}  \mathop{\leq}_{\mathrm{(i)}} \prob{\bm{x}}{F(\bm{x})=1} \mathop{\leq}_{\mathrm{(ii)}} \prob{\bm{x}}{F(\bm{x})=1|G(\bm{x})=1} \]
\end{lemma}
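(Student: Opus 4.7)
The plan is to reduce both halves of the asserted inequality to the single correlation inequality
\[
\avg{\bm{x}}{F(\bm{x})G(\bm{x})} \geq \avg{\bm{x}}{F(\bm{x})} \cdot \avg{\bm{x}}{G(\bm{x})} \qquad (\star)
\]
(valid whenever $F$ and $G$ are both monotone in the same direction under any product distribution on $\{0,1\}^n$), and then to prove $(\star)$ by induction on $n$. The reduction is clean: rewriting $(\star)$ as $\prob{\bm{x}}{F=1, G=1} \geq \prob{\bm{x}}{F=1}\prob{\bm{x}}{G=1}$ and dividing by $\prob{\bm{x}}{G=1}$ gives inequality (ii); for (i), I use $\prob{\bm{x}}{F=1} = \prob{\bm{x}}{G=1}\prob{\bm{x}}{F=1\mid G=1} + \prob{\bm{x}}{G=0}\prob{\bm{x}}{F=1\mid G=0}$, which together with (ii) forces $\prob{\bm{x}}{F=1\mid G=0} \leq \prob{\bm{x}}{F=1}$. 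The case where $F, G$ are both monotonically decreasing is handled by applying the increasing case to the complements $\bar F = 1-F$ and $\bar G = 1-G$, which are monotonically increasing and satisfy the same pair of conditional inequalities.

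It remains to prove $(\star)$ for monotone increasing $F, G$. I would proceed by induction on $n$. For the base case $n = 1$, write $q = \prob{\bm{x}}{\bm{x}_1 = 1}$ and compute directly:
\[
\avg{\bm{x}}{FG} - \avg{\bm{x}}{F}\avg{\bm{x}}{G} = q(1-q)\bigl(F(1)-F(0)\bigr)\bigl(G(1)-G(0)\bigr) \geq 0,
\]
which is nonnegative since both factors in parentheses are nonnegative by monotonicity. For the inductive step, fix the last coordinate: let $F_b, G_b : \{0,1\}^{n-1}\to \{0,1\}$ denote the restrictions obtained by fixing $x_n = b$, and let $p = \prob{\bm{x}}{\bm{x}_n = 1}$. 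Each $F_b, G_b$ is monotone increasing on $\{0,1\}^{n-1}$, so the induction hypothesis gives $\avg{}{F_b G_b} \geq \avg{}{F_b}\avg{}{G_b}$ for $b \in \{0,1\}$. Writing $a_b = \avg{}{F_b}$ and $c_b = \avg{}{G_b}$, the monotonicity of $F$ and $G$ forces $a_0 \leq a_1$ and $c_0 \leq c_1$. Then
\[
\avg{\bm{x}}{FG} \geq (1-p) a_0 c_0 + p a_1 c_1, \qquad \avg{\bm{x}}{F}\avg{\bm{x}}{G} = \bigl((1-p)a_0 + p a_1\bigr)\bigl((1-p)c_0 + p c_1\bigr),
\]
and the desired inequality between these two quantities is precisely the $n=1$ case applied to the pairs $(a_0, a_1)$ and $(c_0, c_1)$ viewed as monotone functions on $\{0,1\}$.

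I don't anticipate a real obstacle: the entire argument is a textbook induction, and the only subtlety is being careful about the direction of monotonicity when translating between $G$ and $\bar G$ for the ``decreasing'' case and between the conditional and unconditional forms of the statement. The one place where care is needed is ensuring that the reduction to $(\star)$ does not silently assume $\prob{\bm{x}}{G=1} > 0$ or $\prob{\bm{x}}{G=0} > 0$; if either probability is zero the corresponding conditional probability is undefined and the inequality is vacuous, so this is a non-issue.
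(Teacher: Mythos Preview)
The paper does not actually prove this lemma; Kleitman's inequality is stated as a known tool in the appendix and then invoked without proof inside the argument for Janson's inequality. Your proposal is correct and is exactly the standard inductive proof of the Harris--Kleitman correlation inequality for product measures, so there is nothing to compare against.

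One small remark on your write-up: in the inductive step you apply the ``$n=1$ case'' to the real-valued pairs $(a_0,a_1)$ and $(c_0,c_1)$, which are not $\{0,1\}$-valued. This is harmless because your base-case identity $q(1-q)(F(1)-F(0))(G(1)-G(0))\geq 0$ nowhere uses that $F,G$ are Boolean, only that they are monotone; but it is worth saying so explicitly so the induction closes cleanly.
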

\begin{proof}[Proof of Theorem~\ref{thm:janson}]
As $C(\bm{x})$ is an OR over $C_i(\bm{x})$ for $i \in [M]$, $\prob{\bm{x}}{C(\bm{x}) = 0} = \prob{x}{\forall {i \in [M]}, (C_i(\bm{x})=0)}$. The lower bound on $\prob{\bm{x}}{C(\bm{x}) = 0}$ follows easily from Kleitman's inequality (Lemma~\ref{lem:kleitman}) and induction on $M$. 

Now we prove the upper bound  on $\prob{\bm{x}}{C(\bm{x}) = 0}$. In order to prove the intended upper bound, we use the following intermediate lemma. 

\begin{lemma}
  \label{lem:inter}
For all $i \in [M]$, $$\prob{\bm{x}}{(C_i(\bm{x})=1) \mid \forall j < i, (C_j(\bm{x})=0)} \geq \prob{\bm{x}}{C_i(\bm{x})=1} - \sum_{j:j < i, j \sim i} \prob{\bm{x}}{(C_i(\bm{x})=1) ~\mathrm{AND}~ (C_j(\bm{x})=1)}$$
\end{lemma}
Assuming the above lemma, we will complete the proof of Theorem~\ref{thm:janson}. 
\begin{align}
\prob{\bm{x}}{C(\bm{x}) = 0} & = \prob{\bm{x}}{\forall {i \in [M]}, (C_i(\bm{x})=0)} \nonumber \\
 & = \prod_{i \in [M]} \prob{\bm{x}}{(C_i(\bm{x})=0)\mid \forall j < i, (C_j(\bm{x})=0)} \notag \\
& = \prod_{i \in [M]} (1-\prob{\bm{x}}{(C_i(\bm{x})=1)\mid \forall j < i, (C_j(\bm{x})=0)}) \nonumber  \\
& \leq \prod_{i \in [M]} (1-\prob{x}{C_i(\bm{x})=1} - \sum_{j: j < i, j \sim i} \prob{\bm{x}}{(C_i(\bm{x})=1) \text{ AND } (C_j(\bm{x})=1)}) \label{eq:inter} \\
& = \prod_{i \in [M]} (\prob{\bm{x}}{C_i(\bm{x})=0} - \sum_{j: j < i, j \sim i} \prob{\bm{x}}{(C_i(\bm{x})=1) \text{ AND } (C_j(\bm{x})=1)}) \nonumber  \\
& = \prod_{i \in [M]} \left(\prob{\bm{x}}{C_i(\bm{x})=0} \cdot (1+ \frac{1}{\prob{x}{(C_i(\bm{x})=0)}}\sum_{j: j < i, j \sim i} \prob{\bm{x}}{(C_i(\bm{x})=1) \text{ AND } (C_j(\bm{x})=1)}\right) \nonumber \\
& \leq \prod_{i \in [M]} \left(\prob{\bm{x}}{C_i(\bm{x})=0} \cdot (1+ 2\sum_{j: j < i, j \sim i} \prob{\bm{x}}{(C_i(\bm{x})=1) \text{ AND } (C_j(\bm{x})=1)}\right) \label{eq:half} \\
& \leq \prod_{i \in [M]} \left(\prob{\bm{x}}{C_i(\bm{x})=0} \cdot \exp(2 \sum_{j: j < i, j \sim i} \prob{\bm{x}}{(C_i(\bm{x})=1) \text{ AND } (C_j(\bm{x})=1)})\right) \label{eq:expn}\\
& = \left(\prod_{i \in [M]} \prob{\bm{x}}{C_i(\bm{x})=0}\right) \cdot \exp(2\Delta) \nonumber 
\end{align}
Inequality (\ref{eq:inter}) follows from Lemma~\ref{lem:inter}. Inequality (\ref{eq:half}) follows from the fact that $\prob{\bm{x}}{(C_i(\bm{x})=0)} \leq 1/2$ for each $i \in [M]$. Finally, (\ref{eq:expn}) follows from (\ref{eq:exp-ineq}).  Therefore, assuming Lemma~\ref{lem:inter}, we are done. We now prove this lemma.

\begin{proof}[Proof of Lemma~\ref{lem:inter}] By reordindering the indices if required, assume that $d$ is an index such that $d<i$ and for $1\leq j \leq d$, $i \sim j$ and for $d < j < i$, $i \not\sim j$. Let $\mathcal{E}$ be the event that $\left[\forall j \leq d, (C_j(\bm{x})=0)\right]$ and $\mathcal{F}$ be the event that $\left[\forall d < j < i, (C_j(\bm{x})=0)\right]$. 
\begin{align}
&\prob{\bm{x}}{(C_i(\bm{x})=1) \mid  \forall j < i, (C_j(\bm{x})=0)} \nonumber \\
&= \prob{\bm{x}}{(C_i(\bm{x})=1) \mid \mathcal{E} \text{ AND } \mathcal{F}} \nonumber\\ 
& \geq \prob{\bm{x}}{(C_i(\bm{x})=1) \text{ AND }\mathcal{E} \mid \mathcal{F}} & \text{ (Bayes' Rule) } \notag \\
& = \prob{\bm{x}}{(C_i(\bm{x})=1) \mid \mathcal{F}} - \prob{\bm{x}}{(C_i(\bm{x})=1) \text { AND } \overline{\mathcal{E}} \mid \mathcal{F}} \notag \\
& = \prob{\bm{x}}{(C_i(\bm{x})=1)} - \prob{\bm{x}}{(C_i(\bm{x})=1) \text { AND } \exists j \leq d, (C_j(\bm{x})=1) \mid \mathcal{F}}  & (C_i(\bm{x}) \text{ and } \mathcal{F} \text{ are independent}) \notag \\
& = \prob{\bm{x}}{(C_i(\bm{x})=1)} - \prob{\bm{x}}{\exists j \leq d, [(C_i(\bm{x})=1) \text { AND } (C_j(\bm{x})=1)] \mid \mathcal{F}} \nonumber \\
& \geq \prob{\bm{x}}{(C_i(\bm{x})=1)} - \sum_{j \leq d} \prob{\bm{x}}{[(C_i(\bm{x})=1) \text { AND } (C_j(\bm{x})=1)] \mid \mathcal{F}}  & \text{(Union bound)} \notag\\
& \geq \prob{\bm{x}}{(C_i(\bm{x})=1)} - \sum_{j \leq d} \prob{\bm{x}}{[(C_i(\bm{x})=1) \text { AND } (C_j(\bm{x})=1)]} & \text{(Kleitman's inequality)} \notag
\end{align}

\end{proof}
\end{proof}

\end{document}